\newcommand{\prob}{\ensuremath{\mathbf{P}}}
\newcommand{\expec}{\ensuremath{\mathbf{E}}}
\newcommand{\ind}{\ensuremath{\mathbf{I}}}
\newcommand{\reals}{\ensuremath{\mathbb{R}}}
\newcommand{\naturals}{\ensuremath{\mathbb{N}}}
\newcommand{\defeq}{\ensuremath{\triangleq}}
\newcommand{\sQ}{\ensuremath{\mathsf{Q}}}
\newcommand{\lstate}[2]{\ensuremath{\bigl(Z_{#1}^{(#2)},N_{#1}^{(#2)}\bigr)}}
\newcommand{\zstate}[2]{\ensuremath{Z_{#1}^{(#2)}}}
\newcommand{\nagent}[2]{N_{#1}^{(#2)}}
\newcommand{\mminfty}[2]{\ensuremath{\mathsf{M/M/\infty}(#1,#2)}}
\newcommand{\nhat}{\widehat{n}}
\newcommand{\lhat}{{\widehat{\ell}}}
\newcommand{\that}{{\widehat{t}}}
\newcommand{\sets}{\ensuremath{\mathbb{S}}}
\newcommand{\calK}{\ensuremath{\mathcal{K}}}
\newcommand{\setZ}{\ensuremath{\mathbb{Z}}}
\newcommand{\bounded}{\ensuremath{\mathcal{C}_b(\sets)}}
\newcommand{\compact}{\ensuremath{\widehat{\markovian}}}
\newcommand{\markovian}{\ensuremath{\Pi}}
\newcommand{\signed}{\ensuremath{\mathcal{M}(\sets)}}
\newcommand{\switch}{\ensuremath{\mathsf{sw}}}
\newcommand{\stay}{\ensuremath{\mathsf{st}}}
\newcommand{\resource}{\ensuremath{\mathsf{res}}}
\newcommand{\vswitch}{\ensuremath{V_{\switch}}}
\newcommand{\varrive}{\ensuremath{V_{\mathsf{arr}}}}
\newcommand{\vstay}{\ensuremath{V_{\stay}}}
\newcommand{\valfn}{\ensuremath{\mathcal{V}}}
\newcommand{\valsw}{\ensuremath{\mathcal{V}_\switch}}
\newcommand{\valst}{\ensuremath{\mathcal{V}_\stay}}
\newcommand{\dynamic}{\ensuremath{\mathbf{T}}}
\newcommand{\opt}{\ensuremath{\mathsf{OPT}}}
\newcommand{\markov}{\ensuremath{\mathsf{MC}}}
\newcommand{\optstop}{\ensuremath{\mathsf{DEC}}}
\newcommand{\map}{\ensuremath{\mathcal{R}}}
\newcommand{\sdleq}{\ensuremath{\preccurlyeq_{\mathsf{sd}}}}
\newcommand{\lvswitch}{\ensuremath{\underline{\mathsf{V}}}}
\newcommand{\uvswitch}{\ensuremath{\overline{\mathsf{V}}}}
\newcommand{\dist}{\ensuremath{\mathsf{dist}_\map}}
\newcommand{\bx}{\ensuremath{\mathbf{x}}}
\newcommand{\by}{\ensuremath{\mathbf{y}}}
\newtheorem{theorem}{Theorem}[section]
\newtheorem{lemma}{Lemma}[section]
\newtheorem{definition}{Definition}
\setlist{leftmargin=0.55in}
  \newcommand{\delkri}[1]{}
  \newcommand{\editkri}[1]{}
  \newcommand{\delkri}[1]{{\color{RoyalBlue} \sout{#1}}}
  \newcommand{\editkri}[1]{{\color{Purple} KI: #1}}
\title{Mean Field Equilibria for Competitive Exploration in Resource Sharing Settings}
\date{\today}
\author{Pu Yang, Krishnamurthy Iyer, Peter Frazier}
\begin{document}
\maketitle

\begin{abstract}


We study a model of competition among nomadic agents for
    time-varying and location-specific resources, arising in
    crowd-sourced transportation services, online communities, and
    traditional location-based economic activity.  This model
    comprises a group of agents and a single location endowed with a
    dynamic stochastic resource process. Periodically, each agent
    derives a reward determined by the location's resource level and
    the number of other agents there, and has to decide whether to
    stay at the location or move. Upon moving, the agent arrives at a
    different location whose dynamics are independent and identical to
    the original location.
    Using the methodology of mean field equilibrium, we study the
    equilibrium behavior of the agents as a function of the dynamics
    of the stochastic resource process and the nature of the
    competition among co-located agents.  We show that an equilibrium
    exists, where each agent decides whether to switch locations based
    only on their current location's resource level and the number of
    other agents there.  We additionally show that when an agent's
    payoff is decreasing in the number of other agents at her
    location, equilibrium strategies obey a simple threshold
    structure.  We show how to exploit this structure to compute
    equilibria numerically, and use these numerical techniques to
    study how system structure affects the agents' collective ability
    to explore their domain to find and effectively utilize
    resource-rich areas.
\end{abstract}

\section{Introduction}
\label{sec:intro}

We consider a model of nomadic agents exploring and competing for
time-varying stochastic location-specific resources.  Such multi-agent
systems arise in many real-world settings, as illustrated below.

They arise in the sharing
economy, in crowd-sourced transportation services like Uber and Lyft, 
and in crowdsourced food delivery services like GrubHub and DoorDash,
in which drivers choose neighborhoods and then earn money
based on the number of riders or eaters requesting service 
within that
neighborhood (the location-specific resource), and the number of other
drivers working there.  This overall resource level varies
stochastically as demand
rises and falls, and the resource derived by a driver decreases as 
more drivers drive in her neighborhood.

They also arise in the traditional economy, for example in mobile food
vendors deciding where to locate their trucks; in pastoralists deciding where
to graze their livestock; and in fishermen deciding where to fish.
In these examples, the level of resource derived by each agent from their
location (whether profit from hungry passers by,
or food for livestock provided by the range-land, or profit from the
catch) depends both on the number of other agents at the location, and
on the location's stochastically varying resource level.

They also arise 
in online communities like Reddit and Twitch, in which participants
choose sub-communities or channels and then derive enjoyment
depending both on some underlying but transitory societal interest in
the sub-community's topic of focus (the overall resource) and the
number of other participants in the sub-community.  When the number of
other participants is too small, lack of social interaction prevents
enjoyment; when the number of other participants is too large,
crowding diminishes the sense of community. 

They even arise among
scientific researchers, who choose a research area in which to
work and derive value based on the underlying
level of societal interest and funding in their chosen area, and in
the number of other researchers working in it. As with online
communities, the number of other researchers should be neither too
large nor too small to maximize the value derived.

In each of these examples, the overall welfare of the system is
determined by how agents explore their domain to find and
exploit resource-rich locations. This willingness to explore in turn
depends on the level of competition or co-operation among agents at
the same location, and the distribution of agents and resources across
locations.

In this paper, we develop a formal model to analyze such
spatio-temporal competition among agents and the equilibrium behavior
of such systems.  The model we study comprises a single location and a
group of agents.  This location represents one in a large collection
of locations between which the agents move.  It has a resource level
that varies stochastically with time. Each agent at the location
periodically obtains a payoff whose amount is determined by the number
of other agents currently at the location, and the location's current
resource level. Based on these quantities, the agent then decides
whether to stay at the same location or leave.  Upon leaving the agent
receives a reward that represents the expected future discounted
payoff that would be obtained by moving to another randomly chosen
location in the system.  The agents are fully strategic and seek to
maximize the total expected payoff over their lifetime.



Using the methodology of mean field equilibrium, we study the
equilibrium behavior of the agents in this system as a function of the
dynamics of the spatio-temporal resource process and the level of
competition in the agents' sharing of a location's resources.  
We prove the existence of an equilibrium for general resource-sharing
functions. For the specific case where the resource-sharing function
is non-increasing in the number of agents at the location, we further
show that the equilibrium strategy has a simple threshold structure,
in which it is optimal for an agent to leave a location when the
number of other agents there exceeds a threshold that depends on the
location's resource level.  This result enables a simple description
of equilibrium strategies, and allows us to efficiently compute an
equilibrium.


Using numerical analysis of a setting with two resource levels and
decreasing resource-sharing function, we investigate how the
equilibrium welfare depends on resource levels' rate of change and the
density of agents.  Here, the equilibrium welfare is the sum of
payoffs earned across all agents in equilibrium, normalized to the
length of time over which these payoffs have accrued and either the
number of agents or the number of locations.  Using this methodology
we show qualitatively different system behavior when the
single-location welfare function (the contribution to welfare from all
agents at one location) increases with the number of agents at the
location as compared with when it decreases.  Our ability to derive
these and other insights discussed in detail in
Section~\ref{sec:numerics2} provide evidence that our model and
equilibrium notion lend themselves to analysis through simple
numerical methods.  Specifically, our methodology presents a
  promising approach to evaluate engineering interventions, such as
  providing subsidies to or imposing costs on agents to promote or
  discourage exploration to improve welfare.


\subsection{Related Work}

Our work contributes to the literature on mean field equilibrium
\citep{sachin_2010,huang_2007,jovanovic_1988,lasry_2007,weintraub_2008},
that studies complex systems under a large system limit and obtains
insights about agent behavior that are hard to obtain from analyzing
finite models. The main insight behind this literature, that in the
large system limit agents' behavior is characterized by their private
state and an aggregate distribution of the rest of system, has been
used to study settings including industry dynamics and oligopoly
models \citep{hopenhayn_1992,weintraub_2008,weintraub_2010}, repeated
dynamics auctions \citep{balseiro2014, iyerJS2014}, online labor
markets \citep{arnosti}, queueing
\citep{manjrekarRS2014,xu2013supermarket}, content sharing
\citep{LiBPSS2017}, and pedestrian motion \citep{lachapelleW11}, among
others. In these papers, the unit of analysis is a single agent's
decision problem, assuming the behavior of all other agents together
constitutes a mean field distribution. In contrast, in our work, the
unit of analysis is the game among the agents at a single location,
assuming that the behavior of agents and the resource level at all
other locations constitutes a mean field distribution.


Our work also contributes to the literature on spatial models of
ride-sharing and crowd-sourced transportation
\citep{banerjeeFL16,banerjeeJR15,bravermanDLY16}.  In this literature,
the paper most closely related to ours is \citep{bimpikisCD16}, who
consider a ride-sharing platform with a continuum of riders and
drivers spread across a finite network of locations, and study how the
platform should set origin-based prices to maximize profits. In
particular, the drivers' decision of where, when, and whether to
provide service is explicitly modeled. The paper studies the impact
of the underlying network structure of the locations on the platform's
profits and consumers' surplus, under the assumption that the demand
at each location is stationary. In contrast, in our model, the
resources at each location (analogous to demand) are stochastic and
time varying.  However, in our model, agents
  decide whether to stay or switch from their current location, and
  not which location to switch to.

Our model is also related to congestion games
\citep{nisan2007algorithmic,rosenthal73}, in which agents choose paths
on which to travel, and then incur costs that depend on the number of
other agents that have chosen the same path.  One may view paths as
being synonymous with locations in our model, and observe that in both
cases the utility/cost derived from a path/location depends on the
number of other agents using that path, or portion thereof. The main
difference between our model and congestion games is the stochastic
time-varying nature of our overall level of resource (making our model
more complex), and the lack of interaction between locations
contrasting with the interaction between paths (making our model
simpler).

Another related strand of literature studies ecological models of
metapopulations in static and dynamic habitats
\citep{durrettL94,durrettL94b,levin70,molofsky75}. \citet{keymerMVL2000}
consider a set of habitats, arranged on a lattice, each containing a
subpopulation of a species, and where the landscape structure of each
habitat is stochastic and dynamic. Using a mean-field analysis, and
through numerical simulations, the authors study the dependence of
persistence and extinction rates of the species across habitats as a
function of the rate of change of the landscape. In such models, the
species dynamics are exogenously specified, whereas we are interested
in the equilibrium behavior of agents.

Our work can be seen as an extension of the Kolkata Paise Restaurant
Problem \citep{chakrabarti2009kolkata}, a generalization of the El
Farol bar problem \citep{arthur1994inductive,chakrabarti2007kolkata}.
In this game, each agent chooses (simultaneously) a restaurant to
visit, and earns a reward that depends both on the restaurant's
fixed rank, which is common across agents, and the number of other
agents at that restaurant.  This reward is inversely proportional to
the number of agents visiting the restaurant. The Kolkata Paise
Restaurant Problem is studied both in the one-shot and repeated
settings, with results on the limiting behavior of myopic
\citep{chakrabarti2009kolkata} and other strategies
\citep{ghosh2010statistics}, although we are not aware of existing
results on mean-field equilibria in this model. The model we consider
is both more general, in that we allow general reward functions and
allow a location's resource to vary stochastically, and more specific,
in that our locations are homogeneous.  Our model also differs in that
our agents' decisions are made asynchronously.

\section{Model}
\label{sec:model}
 
Our formal model is motivated by considering a system of locations
occupied by strategic agents.  Each location has a resource level that
varies over time according to a finite-state continuous-time Markov
chain, and is occupied by a time-varying collection of agents.  Each
agent has an associated sequence of independent decision epochs
separated by exponential times.  At the start of an agent's decision
epoch, she receives a payoff that depends on the number of other
agents at her current location and the resource level there.  She then
decides whether to stay at her location or to leave and move to
another location.  When moving, her destination is chosen uniformly at
random from the set of all locations other than her origin.  Each
agent seeks to maximize her expected payoff over an independent
exponentially distributed lifetime.  When an agent's lifetime expires,
she exits the system and is replaced by a new agent who arrives to a
new location chosen uniformly at random.  In Appendix
  \ref{ap:finite-model}, we provide a detailed description of this
  system with finite number of agents and locations.


  Since the payoff obtained by an agent at any location is determined
  by the number of agents at that location, each agent's decision to
  stay in her current location or to move to a new one depends on all
  the other agents' behavior. Consequently, the interaction among the
  agents in this finite model is a dynamic game, and describing the
  agents' behavior requires an equilibrium analysis. Since the agents
  are not fully informed about the resource levels at other locations,
  the standard equilibrium concept to analyze the induced dynamic game
  is a {\em perfect Bayesian equilibrium} (PBE). A PBE consists of a
  strategy $\xi^i$ and a belief system $\mu^i$ for each player $i$. A
  belief system $\mu^i$ for agent $i$ specifies a belief
  $\mu^i(h^i_t)$ after any history $h^i_t$ over all aspects of the
  system that she is uncertain of and that influence her expected
  payoff. A PBE then requires two conditions to hold: (1) each agent
  $i$'s strategy $\xi^i$ is a best response after any history $h^i_t$,
  given their belief system and given all other agents' strategies;
  and (2) each agent $i$'s beliefs $\mu^i(h^i_t)$ are updated via
  Bayes' rule whenever possible (see \citep{fudenbergT91,tirole} for
  more details).

  A PBE supposes a complex model of agent behavior. Each agent keeps
  track of her entire history, and maintains complex beliefs about the
  rest of the system.  While this behavioral model may be
    plausible in small settings, in large systems an agent's history
    may not contain too much information about the state of all other
    locations, since the agent would typically only visit a small
    fraction of the locations. In such settings, it is more
  plausible that each agent would base her decision to stay or switch
  solely on the current state of the location she is in ---
  specifically on its level of resource and congestion --- and on the
  aggregate features of the entire system.  Moreover, we expect that
  an agent would prefer to stay at a location with a high resource
  level and few other agents.
Below, we seek to uncover this intuitive behavioral
model as an equilibrium in large systems by letting the number of
agents and the number of location both increase proportionally to
infinity, and studying the limiting infinite system.

As the number of locations and agents grows to infinity proportionally (with the
proportionality constant $\beta>0$ defined as the {\em agent
  density}), it is reasonable to suppose that the dynamics at any
fixed finite collection of location is independent asymptotically, and
that the rewards experienced by an agent can be described by modeling
the dynamics at a single location and then supposing that upon leaving
that location the agent moves to another location whose dynamics are
independent and identically distributed, \textit{ad infinitum} until
her lifetime expires.  Thus, to analyze a large finite system, we
posit a formal model for the dynamic of a single location, and treat
each agent who leaves this location as returning to an independent
copy.
 
\subsection{Formal Model of a Single Location in the Limiting Infinite System}\label{sec:formal_model}
Here we state our formal model of a single location $k$.  Let $Z_t^k$
denote the resource level at the location at time $t \geq 0$. We
assume the resource process $\{Z_t^k : t \geq 0\}$ is a finite state
continuous time Markov chain. We let $\setZ$ denote the set of values
the resource process can take. Furthermore, we let $\mu_{zy}>0$ denote
the transition rate of $Z_t^k$ from a state $z \in \setZ$ to a state
$y \in \setZ$. We assume that the process $Z_t^k$ is irreducible and
positive recurrent, with a unique invariant distribution
$\{\pi_{\resource}(z) : z \in \setZ\}$.

We let $N_t^k$ denote the number of agents at the location $k$ at time
$t$. The stochastic process $(Z_t^k, N_t^k)$ will evolve according to
arrivals to this location, and the decisions made by agents at this
location.  Toward that end, we suppose that new agents arrive to this
location according to a Poisson process with rate $\kappa$, and we
describe the agents' decision process below.  The rate $\kappa$ models
both arrivals of agents switching from other locations in a finite
system, and new arrivals of agents to the system following the exit of
other agents from the system, but here it is taken to be an input to
the formal model of a single location, and below it is required to
satisfy consistency conditions in equilibrium.
 
Associated with each agent $i$ at location $k$ is a Poisson clock with
rate $\lambda$, such that each time the clock rings, the agent decides
whether to stay in the location or leave. We refer to each clock ring
of agent $i$ as her decision epoch, and let $\tau_i^\ell$ denote the
time of her $\ell^{th}$ decision epoch.
 
At each decision epoch $\tau_i^\ell$, the agent $i$ receives a payoff
$F(Z^k_t, N^k_t) \vert_{t=\tau_i^\ell}$ that depends on the resource
level $Z^k_t$ and the number of agents $N^k_t$. We refer to the
function $F$ as the {\em resource-sharing function}. We assume that
the resource-sharing function is non-negative, i.e., $F(z,n) \geq 0$
for each $z \in \setZ$ and $n \geq 1$. To avoid trivialities, we
require that there exists a $(z_0,n_0)$ such that $F(z_0, n_0) >
0$. Finally, to model the competitive nature of interaction among the
agents, we assume that as the number of agents at a location
increases, the payoff an agent receives approaches zero:
$\lim_{n \to \infty} F(z,n) = 0$ for each $z \in \setZ$.

To model agents with finite lifetimes, we assume that subsequent to
receiving a payoff at time $\tau_i^\ell$, with probability
$1 - \gamma \in (0,1)$, the agent's lifetime expires and the agent
exits the system permanently. Thus, each agent $i$ can exist in the
system for at most a random time interval distributed exponentially
with rate $\lambda (1 - \gamma)$. We refer to $\gamma \in (0,1)$ as
the {\em survival probability}.

If the agent's lifetime does not expire, then the agent $i$ decides
whether to stay at her location or move.  Agents are free to make this
choice based on their history of past observations.  If the agent
stays, then the dynamics and payoffs described above continue forward
for another decision epoch.  If the agent leaves, then the agent is
awarded a one-time payoff of $V_\switch>0$ and no subsequent payoffs.
Here, $V_\switch$ is taken simply to be a constant input to our model
for a single location, and below it is required to satisfy a condition
at equilibrium.  This condition corresponds to $V_\switch$ being the
conditional expected payoff experienced by an agent when moving to a
new location whose current number of agents and resource level is
distributed according to the stationary distribution induced by
equilibrium agent behavior.

\subsection{The Single-Location Decision Problem When Other Agents
  Follow Markovian Strategies}
Having specified the arrival process and agents' decision process in a
single location, we are interested in characterizing a symmetric
equilibrium among agents. For a given arrival rate $\kappa$ and the
switching payoff $\vswitch$, the particular notion of equilibrium we
consider is a Markov perfect equilibrium \cite{tirole}, where in
equilibrium, each agent finds it optimal to base her decision only on
the current state of the location at her decision epoch, and not on
her past (although she is not restricted from doing so). Formally, let
$\sets =\setZ \times \naturals_0$ denote the set of possible states of
the process $(Z_t^k,N_t^k)$.  A Markovian strategy for an agent is a
function $\xi: \sets \rightarrow [0,1]$, where $\xi(z,n)$ denotes the
probability with which the agent chooses to stay if the state of the
location at her decision epoch is $(z,n) \in \sets$. (Note that
$\xi(z,0)$ is not well-defined; by convention, we let $\xi(z,0) = 1$
for all $z \in \setZ$).
 
As a step towards formulating the game among the agents, we first
study the dynamics at a location when all agents in location $k$ adopt
a Markovian strategy $\xi$. Given the arrival rate $\kappa$ and the
Markovian strategy $\xi$, the process $(Z^k_t, N^k_t)$ for any
location $k$ evolves as a continuous time Markov chain on the state
space $\sets$ with the following transition rate matrix
$\sQ^{\xi, \kappa}$:
\begin{equation}
 \label{eq:loc-dyn}
 \begin{split}
   \sQ^{\xi, \kappa}&\left( (z,n) \to (x,m)\right) = \ind\{x\neq z, m = n\} \mu_{z,x} + \ind\{x=z, m = n+1\} \kappa \\
   & + \ind\{x=z, m = n-1\} \lambda n\left(1 - \gamma\xi(z,n)\right)\\
   & - \ind\{x=z, m=n\} \left( \sum_{y \neq z} \mu_{z,y} + \kappa +
     \lambda n \left(1 - \gamma\xi(z,n)\right)\right),
 \end{split}
\end{equation}
where $z,x \in \setZ$ and $n, m \in \naturals_0$. Here, the first term
on the right-hand side represents the transition in the resource level
$Z^k_t$ at the location, which is an independent Markov chain with
rates $\mu_{z,x}$. The second term on the right-hand side represents
the arrival of an agent to the location $k$ at rate
$\kappa$. The third term on the right-hand side represents the
departure of one of the $n$ agents from the location $k$. Such a
departure can only occur at a decision epoch of one of these
agents. At any such decision epoch, an agent stays with
probability $\xi(z,n)$ times the survival probability $\gamma$. Thus,
with probability $1- \gamma \xi(z,n)$, the agent leaves the location
$k$. Since there are $n$ agents at the location, each of whose
decision epoch occur at rate $\lambda$, the total rate for a departure
at the location is given by $\lambda n(1-\gamma \xi(z,n))$. Finally,
the last term on the right-hand side represents the rate of no
transition. We denote this continuous time Markov chain describing the
dynamics of a single location, where all agents adopt the Markovian
strategy $\xi$ and the rate of arrival of agents is $\kappa$, by
$\markov(\xi, \kappa)$.
 
Now, consider the decision problem faced by a single agent $i$ at
location $k$, assuming all other agents (current as well as in future)
at the location follow strategy $\xi$. 
For any fixed switching payoff $\vswitch> 0$, and arrival rate
$\kappa$, the decision problem faced by an agent $i$ can be described
as follows. As long as the agent stays at location $k$, at each
decision epoch $\tau_i^\ell$, she receives a payoff
$F(Z_i^\ell, N_i^\ell)$, and must choose whether to ``stay'' in
location $k$ or to ``switch''. Also, irrespective of this decision,
the agent's lifetime expires with probability $1-\gamma$. On choosing
to stay, with survival probability $\gamma$, the agent continues until
her next decision epoch $\tau_i^{\ell+1}$. On choosing to switch, with
survival probability $\gamma$, the agent immediately receives the
switching payoff $\vswitch$. From this description, it follows that
the decision problem facing an agent $i$ in location $k$ is an optimal
stopping problem. Denote this optimal stopping problem by
$\optstop(\xi,\kappa, \vswitch)$. In the following, we develop the
dynamic programming formulation of this problem.
 
We begin by defining the value functions for the agent. Let $V(z,n)$
denote the value function of agent $i$ at her decision epoch, prior to
her making a decision or receiving payoffs, given resource level
$z \in \setZ$ and the number of agents $n \in \naturals$ at location
$k$. Similarly, we let $\vstay(z,n)$ denote the continuation payoff of
the agent at her decision epoch, subsequent to her making the decision
to stay and conditional on her not leaving the system, given resource
level $z$ and the number of agents $n$ at location $k$. We have the
following Bellman's equation for the optimal stopping problem
$\optstop(\xi,\kappa, \vswitch)$ faced by the agent:
\begin{equation}
  \label{eq:bellman}
  \begin{split}
    V(z,n) &= F(z,n) + \gamma \max\{\vstay(z,n), \vswitch\}\\
    \vstay(z,n) &= \expec^\xi[ V(Z_\tau, N_\tau) | z, n],
\end{split}
\end{equation}
where $\expec^\xi[\cdot | z,n]$ denotes the expectation with respect
to the process defined by \eqref{eq:loc-dyn}, subject to
$(Z_0,N_0) = (z,n)$, and $\tau$ denotes the time of the first decision
epoch of the agent $i$. Here, the first equation follows from the fact
that at the decision epoch, the agent receives an immediate payoff
equal to $F(z,n)$, and has to make the decision whether to stay or
switch. Subsequent to the decision, the agent survives in the system
with probability $\gamma$. Upon choosing to switch and surviving, the
agent receives a continuation payoff equal to $\vswitch$. On the other
hand, upon choosing to stay and surviving, the agent receives a
continuation payoff equal to $\vstay(z,n)$. The second equation
relates $\vstay(z,n)$ to the expectation of the agent's value function
at the next decision epoch.

For value functions $V$ and $\vstay$ satisfying the Bellman's equation
\eqref{eq:bellman}, any optimal strategy $\xi_i$ for agent $i$ chooses
to stay if the resource level $z$ and the number of agents $n$ in the
location satisfies $\vstay(z,n) > \vswitch$, to switch if
$\vstay(z,n) < \vswitch$, and any mixed action if
$\vstay(z,n) = \vswitch$. We let $\opt(\xi,\kappa, \vswitch)$ denote
the set of all optimal strategies for the agent's decision specified
by \eqref{eq:bellman}. Specifically, for any Markovian
  strategy $\hat{\xi}$, we have
  $\hat{\xi} \in \opt(\xi, \kappa, \vswitch)$ if and only if the
  following conditions hold: $\hat{\xi}(z,n) = 1$ if
  $\vstay(z, n) > \vswitch$; $\hat{\xi}(z,n) = 0$ if
  $\vstay(z,n) < \vswitch$; and $\hat{\xi}(z,n) \in (0,1)$ only if
  $\vstay(z,n) = \vswitch$.

\subsection{Mean field equilibrium}
\label{sec:mfe-definition}
 
With the description of the model in place, we are now ready to
formally define the notion of equilibrium we focus on.
 
First, for any arrival rate $\kappa$ and the switching payoff
$\vswitch$, we require the agents play a Markov perfect equilibrium at
the location $k$. In other words, we require the strategy $\xi$ to
satisfy the following requirement: assuming all agents other than an
agent $i$ follow the strategy $\xi$, the agent $i$ maximizes her
payoff (across all possibly history-dependent strategies) by following
the strategy $\xi$. This leads us to the following condition:
\begin{equation}
\label{eq:best-response}
 \xi \in \opt(\xi, \kappa, \vswitch).
\end{equation}
 
Now, suppose for a given $\kappa$ and $\vswitch$, a Markov perfect
equilibrium $\xi$ is being played at location $k$. Then, the dynamics
of the location's state are given by $\markov(\xi, \kappa)$. Let
$\pi(\xi, \kappa)$ denote the steady state distribution of this
process. In particular, for $z \in \setZ$ and $n \geq 0$, we let
$\pi_{z,n}(\xi, \kappa)$ denote the probability that the location has
a resource level $z$ and the number of agents $n$ in steady state. (We
drop the explicit dependence of the steady state distribution on $\xi$
and $\kappa$, when the context is clear.) Thus, $\pi(\xi,\kappa)$ is
an invariant distribution under $\sQ^{\xi, \kappa}$, and satisfies
\begin{equation}
  \label{eq:steady-state}
  \sum_{z \in \setZ} \sum_{n \in \naturals_0} \pi_{z,n}(\xi,\kappa)  \sQ^{\xi, \kappa}( (z,n)
  \to (x,m)) = 0, \quad \text{for all $x \in \setZ, m \in
    \naturals_0$.}
\end{equation}
Now, consider an agent arriving to the location $k$ in steady state
$\pi(\xi, \kappa)$. We denote the total expected payoff that this
agent receives over her lifetime on following the strategy $\xi$ by
$\varrive$. Using the definition of the value function $\vstay$, we
obtain
\begin{align*}
\varrive &= \sum_{(z,n) \in \sets}  \pi_{z,n}(\xi,\kappa) \vstay(z,n+1).
\end{align*}
Here, the right hand side is obtained by observing that after the
agent arrives to the location in state $(z,n)$, which happens with
probability $\pi_{z,n}(\xi,\kappa)$, the number of agents at that location becomes
$n+1$, and the agent's continuation payoff is then $\vstay(z,n+1)$.
 
Our second condition on equilibrium requires that the total expected
payoff $\varrive$ to an agent arriving at location $k$ equals the
total expected payoff an agent at the location receives upon switching
$\vswitch$. Intuitively, we expect this condition to hold in any
symmetric equilibrium of a system with a large but finite number of
homogeneous locations, where agents  choose whether to
  stay in their current location or switch to a different location
  (chosen uniformly at random). In such a model, the switching
decisions of the agents will force the switching payoffs of all
populated locations to have the same value. Since our model of a
single location does not endogenously capture these considerations, we
impose this explicitly. In particular, we require that the switching
payoff satisfies the following equation:
\begin{equation}
 \label{eq:leave-payoff-consistency}
 \vswitch =  \sum_{(z,n) \in \sets}  \pi_{z,n}(\xi,\kappa) \vstay(z,n+1).
\end{equation}
 
The final condition we impose on the equilibrium is a requirement on
the arrival rate $\kappa$. Again, intuitively, in a symmetric
equilibrium of a large finite model with homogeneous locations, we
expect the expected number of agents at each location to be the same,
given by the agent density $\beta>0$. To capture this in our model, we
require that for a given agent density $\beta$, the arrival rate
$\kappa$ satisfies the following condition:
\begin{equation}
\label{eq:expectation-steady}
\sum_{(z,n) \in \sets}n\,\pi_{z,n} (\xi, \kappa) = \beta.
\end{equation}
 
Given these three conditions, we are now ready to define a mean-field
equilibrium:
\begin{definition}[Mean field equilibrium] A mean field equilibrium
  (MFE) consists of a strategy $\xi$, an arrival rate $\kappa$ and a
  switching payoff $\vswitch$ satisfying \eqref{eq:best-response},
  \eqref{eq:leave-payoff-consistency}, and
  \eqref{eq:expectation-steady}.
\end{definition}

Note that, in comparison to a PBE, a mean field
  equilibrium adopts a fairly natural and a vastly simpler model of
  agent behavior. In a PBE of a finite model, an agent's strategy
  depends on the state of her current location, her history, as well
  as her belief about the state of all other locations. Moreover, the
  agent constantly updates this belief based on her observations of
  the arrival process at her current location. For example, if an
  agent sees a high volume of arrivals at her current location, her
  updated belief would attribute lower resource levels at other
  locations, thereby lowering her expected payoff for switching.  Such
  complex considerations do not arise in an MFE, where the payoff from
  switching is assumed to be fixed and independent of the state
  dynamics of the current location. In a large market, this assumption
  is reasonable, as the fluctuations in the empirical distribution of
  the states of other locations are expected to cancel each other,
  analogous to a law of large numbers result\footnote{Proving this
    statement rigorously is an interesting direction for future
    work.}.  

In the next section, we show existence of a mean field equilibrium.

\section{Existence of a mean field equilibrium}
\label{sec:existence-theorem}

Below, we state the main result of the paper, proving the existence of
an MFE for general resource-sharing functions. Subsequently, in
Section~\ref{sec:threshold-struc}, we analyze the structure and
properties of a mean field equilibrium under specific assumptions on
the resource-sharing function. We have the following main theorem.
\begin{theorem}
  \label{thm:existence} For any $\lambda>0$, $\beta > 0$ and
  $\{ \mu_{z,y} > 0 : z,y \in \setZ\}$, there exists a mean field
  equilibrium $(\xi, \kappa, \vswitch)$, where $\xi(z,n) = 0$ for all
  $z \in \setZ$ and all large enough $n$.
\end{theorem}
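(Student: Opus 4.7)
The plan is to apply Kakutani's fixed-point theorem to a best-response correspondence on a compact convex domain. Since the strategy space $[0,1]^\sets$ is infinite-dimensional, the initial work is to derive a priori bounds on $(\kappa, \vswitch)$ and to show that equilibrium strategies are supported on $\{n < N\}$ for some uniform threshold $N$, reducing the analysis to a finite-dimensional setting.

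The bounds proceed as follows. Since $F(z,n) \to 0$ and $|\setZ| < \infty$, $F$ is uniformly bounded by some $\bar F$, hence $V, \vstay \leq \bar V := \bar F/(1-\gamma)$ and $\vswitch \leq \bar V$ in any candidate MFE. Applying flow balance to the dynamics \eqref{eq:loc-dyn} and using \eqref{eq:expectation-steady} gives $\kappa = \lambda \sum_{(z,n)} n\,\pi_{z,n}(\xi,\kappa)(1-\gamma\xi(z,n)) \in [\underline\kappa, \bar\kappa] := [\lambda(1-\gamma)\beta, \lambda\beta]$. Next, using irreducibility of $\markov(\xi,\kappa)$, the existence of $(z_0,n_0)$ with $F(z_0,n_0) > 0$, and boundedness of $\kappa$, one can show a uniform lower bound $\vswitch^\sharp(\xi,\kappa,\vswitch) := \sum_{(z,n)} \pi_{z,n}(\xi,\kappa)\vstay(z,n+1) \geq c > 0$ over the candidate domain. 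Finally, since $F(z,n) \to 0$ as $n \to \infty$, iterating \eqref{eq:bellman} shows $\vstay(z,n) \to \gamma \vswitch$ as $n \to \infty$ uniformly over the compact $(\xi,\kappa,\vswitch)$-domain; combined with $\vswitch \geq c > 0$, this gives a uniform $N$ such that $\vstay(z,n) < \vswitch$ strictly for all $n \geq N$, so any best response has $\xi'(z,n) = 0$ at such states.

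With the truncation in hand, restrict to the compact convex domain $\mathcal{D} := \Xi_N \times [\underline\kappa, \bar\kappa] \times [c, \bar V]$, where $\Xi_N \subset [0,1]^\sets$ consists of strategies with $\xi(z,n) = 0$ for $n \geq N$ (identified with $[0,1]^{\setZ \times \{1,\ldots,N-1\}}$). Define the correspondence $\Phi: \mathcal{D} \to 2^{\mathcal{D}}$ by declaring $(\xi', \kappa', \vswitch') \in \Phi(\xi, \kappa, \vswitch)$ iff $\xi' \in \opt(\xi,\kappa,\vswitch)$, $\kappa' \in [\underline\kappa, \bar\kappa]$ is the unique solution of $\sum_{(z,n)} n \pi_{z,n}(\xi, \kappa') = \beta$ (uniqueness follows from the strict monotonicity of the mean population in $\kappa$), and $\vswitch' = \vswitch^\sharp(\xi,\kappa,\vswitch)$. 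The values of $\Phi$ are convex ($\opt$ is a product of intervals in $\xi'$; $\kappa'$ and $\vswitch'$ are singletons) and, by the truncation argument, non-empty. The closed graph follows from continuity of the stationary distribution $\pi$ in $(\xi,\kappa)$ on the truncated chain (a perturbation result for irreducible CTMCs with uniformly light tails), continuity of $\vstay$ in $(\xi,\kappa,\vswitch)$ as the unique solution of a $\gamma$-contracted Bellman equation, and upper hemicontinuity of $\opt$ via Berge's theorem. Kakutani's theorem then yields a fixed point $(\xi^*, \kappa^*, \vswitch^*) \in \mathcal{D}$, which by construction satisfies \eqref{eq:best-response}, \eqref{eq:leave-payoff-consistency}, and \eqref{eq:expectation-steady}, with $\xi^*(z,n) = 0$ for $n \geq N$.

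The main obstacle is the second step: establishing the uniform lower bound $\vswitch^\sharp \geq c > 0$ and the uniform tail estimate $\vstay(z,n) \to \gamma \vswitch$ that together justify the truncation. The uniform tail estimate must handle the fact that even at large $n$ the dynamics depend on $\xi$ through the departure rates $\lambda n(1-\gamma\xi(z,n))$, but this can be controlled via a coupling with the pure-switching strategy $\xi \equiv 0$, which dominates the sojourn times and allows one to control $V(z,n) - \gamma\vswitch$ via the residual $F$-mass at large $n$. The lower bound on $\vswitch^\sharp$ relies on showing that $\pi(\xi,\kappa)$ puts uniformly bounded-below mass on a neighborhood of $(z_0,n_0)$, using irreducibility of $\markov(\xi,\kappa)$ and the compactness of the parameter domain. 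Once these uniform bounds are verified, the remainder of the proof is a standard application of fixed-point machinery.
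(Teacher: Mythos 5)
Your overall architecture coincides with the paper's: a priori bounds $\kappa \in [\beta\lambda(1-\gamma),\, \beta\lambda]$ and $\vswitch \in [\lvswitch, \uvswitch]$, a uniform truncation level beyond which every best response switches (so the fixed-point search takes place in a finite-dimensional compact convex set of truncated strategies, the paper's $\compact$), continuity of $\pi(\xi,\kappa)$, of $\kappa(\xi)$, and of the value functions, followed by a Kakutani-type fixed-point theorem (the paper invokes Fan--Glicksberg, but on a set isomorphic to a finite-dimensional cube, so this is the same tool). The one structural difference --- carrying $\kappa$ as a third coordinate of the correspondence rather than folding it in through the map $\xi \mapsto \kappa(\xi)$ as the paper does --- is cosmetic: your $\kappa'$-component is the singleton $\{\kappa(\xi)\}$, so the two fixed-point problems have identical solutions. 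Likewise, your coupling for the tail estimate (pure-switching dynamics with arrivals dropped) is exactly the paper's auxiliary system in the proof of Lemma~\ref{lem:convergence-vstay}, down to the concentration bound for the surviving-agent count; your statement $\vstay(z,n) \to \gamma\vswitch$ is slightly stronger than the paper's $\vstay(z,n) \leq \gamma\vswitch + \tfrac{3}{4}(1-\gamma)\lvswitch$ for $n \geq K_{\max}$, but follows from the same estimates.

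The one step that does not go through as written is your uniform lower bound $\vswitch^\sharp \geq c > 0$. This bound must hold uniformly over \emph{all} Markovian strategies, not merely truncated ones --- otherwise your scheme is circular, since the truncation level $N$ is defined in terms of $c$ --- and the untruncated set $\markovian$ is not compact under the sup-norm, so ``irreducibility plus compactness of the parameter domain'' does not deliver uniformity in $\xi$: irreducibility gives $\pi_{z,n}(\xi,\kappa) > 0$ pointwise, but pointwise positivity does not survive an infimum over a noncompact family. (One could attempt a repair via the product topology, under which $[0,1]^{\sets}$ is compact and $\pi$ can be shown continuous using tightness and the locally finite balance equations, but that continuity is itself nontrivial and you would additionally need a uniform positive lower bound on $\vstay$ at the state $(z_0,n_0)$.) The paper sidesteps compactness entirely with an explicit, strategy-independent computation (Lemma~\ref{lem:lowerbound-val}): a one-step strong-Markov bound $\valst(z,n;\xi,\vswitch) \geq F(z,n)/\bigl(n(1+\beta+\Psi)\bigr)$, combined with the balance-equation recursion $\pi(z,n) \geq \pi(z,n-1)\,\beta(1-\gamma)/\bigl((1+\beta+\Psi)n\bigr)$ seeded by $\pi(z,0) \geq \pi_\resource(z)\exp\bigl(-\beta/(1-\gamma)\bigr)$, the latter obtained from the coupling with an $\mminfty{\lambda(1-\gamma)}{\beta\lambda}$ queue (Lemma~\ref{lem:coupling}); this yields a closed-form $\lvswitch > 0$ valid for every $\xi \in \markovian$ and every $\kappa \leq \beta\lambda$. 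If you replace your compactness appeal with this computation, the remainder of your proposal matches the paper's proof.
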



The underlying argument behind the proof is to carefully construct a
correspondence $\map$ and show that the existence of a mean field
equilibrium is equivalent to the existence of a fixed point of
$\map$. The latter is obtained by an application of Fan-Glicksberg
fixed point theorem \citep{aliprantisB2006}. Here, we first sketch the
steps involved, and highlight the technical challenges in each of
those steps. Using these intermediate results, we then provide the
proof of Theorem~\ref{thm:existence}. (The complete proof is provided
in
Appendices~\ref{ap:existence-stationary-distn}-\ref{ap:upper-hemicontinuity}.)

\begin{enumerate}

\item We first show that for any Markovian strategy $\xi$ and arrival
  rate $\kappa>0$, the Markov chain $\markov(\xi, \kappa)$ has a
  unique invariant distribution $\pi$ satisfying
  \eqref{eq:steady-state}. This involves showing that the chain
  $\markov(\xi, \kappa)$ is irreducible and positive recurrent, which
  we accomplish by using coupling arguments to bound the chain between
  two $M/M/\infty$ queues. The proof of this result is provided in
  Appendix~\ref{ap:existence-stationary-distn}.

  Denote the (unique) invariant distribution of $\markov(\xi, \kappa)$
  by $\pi(\xi,\kappa)$. In
  Appendix~\ref{ap:continuity-stationary-distn}, by applying Berge's
  maximum theorem~\citep{berge1963topological}, we show that the
  invariant distribution $\pi(\xi, \kappa)$ is jointly continuous in
  $(\xi, \kappa)$.


\item Second, we establish that for any strategy $\xi$, there exists a
  unique value of $\kappa>0$, such that the invariant distribution
  $\pi(\xi,\kappa)$ satisfies \eqref{eq:expectation-steady}. This
  result is achieved by showing that the quantity
  $\sum_{(z,n) \in \sets} n \pi(z,n)$, where $\pi = \pi(\xi, \kappa)$
  is strictly increasing and continuous for
  $\kappa \in [\beta \lambda (1-\gamma), \beta \lambda]$ for any fixed
  $\xi$, and using the intermediate value theorem. The proof of this
  result is provided in Appendix~\ref{ap:existence-kappa}.

  Let $\kappa(\xi)$ denote the unique value of the arrival rate
  $\kappa$ for which $\pi(\xi, \kappa)$ satisfies
  \eqref{eq:expectation-steady}. The first two steps together then
  define an injective map from the strategy $\xi$ to an arrival rate
  $\kappa(\xi)$ and a steady state distribution
  $\pi(\xi, \kappa(\xi))$, such that $\pi(\xi,\kappa(\xi))$ is the
  (unique) invariant distribution of the Markov chain
  $\markov(\xi, \kappa(\xi))$, and satisfies
  \eqref{eq:expectation-steady}.

  
\item Third, we consider the decision problem
  $\optstop(\xi,\kappa(\xi), \vswitch)$ for a given strategy $\xi$ and
  switching payoff $\vswitch$. We let $\valfn(\xi, \vswitch)$ denote
  the value function satisfying the corresponding Bellman equation
  \eqref{eq:bellman}, and let $\valst(\xi, \vswitch)$ denote the
  corresponding continuation payoff function. Finally, we let
  $\valsw(\xi, \vswitch)$ denote the right-hand-side of
  \eqref{eq:leave-payoff-consistency}:
  \begin{align*}
    \mathcal{V}_{\switch}(\xi, \vswitch) = \sum_{(z,n) \in \sets} \pi_{z,n} \vstay(z,n+1),
  \end{align*}
  where $\pi = \pi(\xi,\kappa(\xi))$, and
  $\vstay = \valst(\xi, \vswitch)$.

  In Appendix~\ref{ap:uniform-bounds}, we show that these functions
  are uniformly bounded. In particular, we show that there exists
  $ 0 < \lvswitch \leq \uvswitch$, such that for all Markovian
  strategy $\xi$ and $\vswitch > 0$, we have the switching payoff
  $\valsw(\xi, \vswitch) \in [\lvswitch, \uvswitch]$. The proof of the
  uniform bounds makes extensive use of the strong Markov property for
  the chain $\markov(\xi,\kappa(\xi))$.
  

\item Fourth, we let $\mathcal{X}(\xi, \vswitch)$ denote the set of
  all optimal strategies for the agent's decision problem
  $\optstop(\xi,\kappa(\xi), \vswitch)$. Note that
  $\mathcal{X}(\xi, \vswitch) = \opt(\xi, \kappa(\xi), \vswitch)$. In
  Appendix~\ref{ap:compactness}, we identify a convex, compact set
  $\compact$ of Markovian strategies, such that if $\xi \in \compact$,
  and $\vswitch \in [\lvswitch,\uvswitch]$, then
  $\mathcal{X}(\xi, \vswitch) \subseteq \compact$. Let
  $\Upsilon = \compact \times [\lvswitch, \uvswitch]$.

  
\item Finally, we construct the correspondence
  $\map : \Upsilon \rightrightarrows \Upsilon$ defined as
  \begin{align*}
    \map(\xi, \vswitch)
    &= \mathcal{X}(\xi, \vswitch) \times \left\{ \mathcal{V}_{\switch}(\xi, \vswitch)\right\}\\
    &= \left\{ (\zeta, \mathcal{V}_{\switch}(\xi, \vswitch)) : \zeta \in \mathcal{X}(\xi, \vswitch)\right\}.
  \end{align*}
  We depict the map pictorially in Fig.~\ref{fig:map-t}. In
  Appendix~\ref{ap:upper-hemicontinuity}, we show that the
  correspondence $\map$ is upper-hemicontinuous. This requires showing
  the continuity of the value functions in $(\xi, \vswitch)$, which is
  achieved using the continuity in $\xi$
  of the process $\markov(\xi, \kappa(\xi))$ 
  under the topology of weak-convergence \citep{ethierK86}.
\end{enumerate}
We then obtain the following proof for the existence of a mean field
equilibrium.
\begin{proof}[Proof of Theorem~\ref{thm:existence}] 

  The steps outlined above show that $\map$ is an upper-hemicontinuous
  correspondence on a convex, compact subset $\Upsilon$ of a metric
  space, with values that are non-empty and convex. From an
  application of the Fan-Glicksberg fixed point theorem
  \citep{aliprantisB2006}, we obtain that $\map$ has a fixed point,
  i.e., there exists $(\xi, \vswitch) \in \Upsilon$ such that
  $(\xi, \vswitch) \in \map(\xi, \vswitch)$.
  
  Thus, by definition of $\map$, we have
  $\xi \in \mathcal{X}(\xi, \vswitch)$. This implies that $\xi$
  satisfies \eqref{eq:best-response} for the decision problem
  $\optstop(\xi, \kappa(\xi), \vswitch)$.  Second, by definition of
  $\kappa(\xi)$, we obtain that the steady state distribution
  $\pi(\xi, \kappa(\xi))$ satisfies \eqref{eq:expectation-steady}.
  Finally, from $\vswitch = \mathcal{V}_{\switch}(\xi,\vswitch)$, we
  obtain that \eqref{eq:leave-payoff-consistency} holds.  From this,
  we conclude that $(\xi, \kappa(\xi), \vswitch)$ constitutes a
  mean-field equilibrium.
\end{proof}

\begin{figure}
  \centering
  \begin{tikzpicture}[plainnode/.style={draw=none, fill=none, minimum size = 10mm}, singlearrow/.style={thick,->,shorten >=1pt, line width = 1pt, >=latex}, doublearrow/.style={thick, ->, shorten >= 1pt, line width = 1pt, >=latex, double}]
    \node[plainnode]      (originalthreshold)                        {$\xi$};
    \node[plainnode]      (kappa)             [right=1.5cm of originalthreshold] {$\kappa(\xi)$};
    \node[plainnode]      (pi)                [right=2cm of kappa] {$\pi(\xi, \kappa(\xi))$};
    \node[plainnode]      (C)                 [below=1.5cm of originalthreshold] {$\vswitch$};
    \node[plainnode]      (vstay)              [below=1.5cm of kappa] {$\mathcal{V}_\stay(\xi, \vswitch)$};
    \node[plainnode]      (ctilde)            [below=1.5cm of pi] {$\mathcal{V}_{\switch}(\xi, \vswitch) $};
    \node[plainnode]      (newthreshold)      [below=1.5cm of vstay] {$\mathcal{X}(\xi, \vswitch)$};
     
    \draw[singlearrow] (originalthreshold.east) -- (kappa.west);
    \draw (originalthreshold.north) edge[out=15, in=165, singlearrow] (pi.north west);
    \draw[singlearrow] (C.east) -- (vstay.west);
    \draw[singlearrow] (vstay.east) -- (ctilde.west);
    \draw[singlearrow] (kappa.south) -- (vstay.north);
    \draw[singlearrow] (pi.south) -- (ctilde.north);
    \draw[singlearrow] (originalthreshold.south) -- (vstay.north west);
    \draw[doublearrow] (vstay.south) -- (newthreshold.north);
    \draw[doublearrow] (C.south) -- (newthreshold.north);
  \end{tikzpicture}
  \caption{Illustration of the correspondence
    $\map(\xi, \vswitch) = \mathcal{X}(\xi, \vswitch) \times \{
    \mathcal{V}_{\switch}(\xi, \vswitch) \}$. (Here single arrows
    denote functions, and double arrows denote correspondences.)}
\label{fig:map-t}
\end{figure}
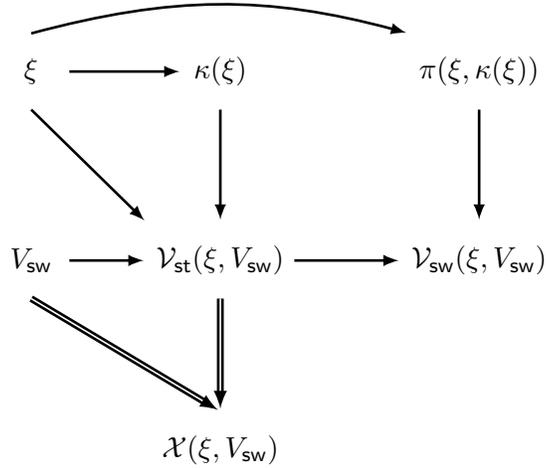

\section{Equilibrium Analysis for Decreasing Resource-Sharing
  Functions}
\label{sec:threshold-struc}

Having shown the existence of an MFE for general resource-sharing
functions, we now characterize the equilibrium strategy for the
specific case, where the resource-sharing function is non-increasing
in the number of agents at the location. Under this assumption, we
show existence of an MFE in which the equilibrium strategies have a
threshold structure. We then use this structural characterization in
Section~\ref{sec:numerics2} to compute this MFE and analyze its
welfare.

We define decreasing resource-sharing functions as follows:
\begin{definition}\label{as:monotone}
  We say that a resource-sharing function $F$ is {\em decreasing} if
  $F(z,n+1) \leq F(z,n)$ for each $z \in \setZ$ and all
  $n \in \naturals$.
\end{definition}

Decreasing resource-sharing functions appear when agents' interactions
are competitive rather than cooperative.  In
section~\ref{sec:numerics2} we consider these three examples of
decreasing resource-sharing functions.
\begin{itemize}
\item As a first example of a decreasing resource-sharing function,
  consider $F(z,n) = f(z)/n$ for some function $f$. This models
  settings where all agents at a location equally share the resource
  there. In particular, given resource level $z$ at a location, the
  $n$ agents at the location would collectively obtain total payoffs
  at rate $\lambda n F(z,n) = \lambda f(z)$, a quantity independent of
  $n$. We refer to the quantity $W(z,n) \defeq \lambda n F(z,n)$ as
    {\em single-location welfare function}.
\item Next, consider $F(z,n) = f(z)/\sqrt{n}$.  Here, the agents
  collectively receive payoffs at rate $\lambda \sqrt{n} f(z)$, which
  is increasing in $n$.  While agents compete with each other, the
  single-location welfare function increases with the number of agents
  there.
\item Finally, consider $F(z,n) = f(z)/n^{3/2}$.  This models
  extremely competitive settings, where the single-location welfare
  function decreases with the number of agents.
\end{itemize}

Before providing our result, we define threshold strategies.
Formally, for $\bx =( x_z : z \in \setZ)$, where $x_z \in \reals_+$
for each $z \in \setZ$, define the threshold strategy $\xi^\bx$ as
follows:
\begin{align*}
  \xi^\bx(z,n  ) = \begin{cases} 1 & \text{if $n < \lfloor x_z \rfloor$;}\\
    x_z - \lfloor x_z \rfloor & \text{if $n = \lfloor x_z \rfloor$;}\\
    0 & \text{otherwise.}
  \end{cases}
\end{align*}
for each $z \in \setZ$ and $n \geq 0$. In particular, under strategy
$\xi^\bx$, an agent, at her decision epoch, will stay at her current
location with resource level $z \in \setZ$ if the number of agents $n$
at the location is strictly below $\lfloor x_z \rfloor$; will switch
to a different location if $n > \lfloor x_z \rfloor$; and will stay
with probability $ x_z - \lfloor x_z \rfloor$ and switch with
remaining probability if $n = \lfloor x_z \rfloor$. 
We say that a strategy is a threshold strategy if it is of this form.

We now state our main result of this section.
\begin{theorem}
  \label{thm:threshold}
  If $F$ is a decreasing resource-sharing function, there exists an
  MFE $(\xi, \kappa, \vswitch)$ where $\xi$ is a threshold strategy.
\end{theorem}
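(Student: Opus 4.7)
The plan is to mirror the fixed point construction in the proof of Theorem~\ref{thm:existence}, but restrict the strategy space to threshold strategies. Let $\calK$ denote a convex, compact set of threshold parameters $\bx = (x_z : z \in \setZ)$, with each $x_z$ constrained to a bounded interval $[0, \bar{n}]$; the uniform upper bound $\bar{n}$ is supplied by the tail property that any optimal $\xi \in \mathcal{X}$ satisfies $\xi(z,n) = 0$ for all large enough $n$, as already established in Theorem~\ref{thm:existence}. Since $\bx \mapsto \thresh$ is continuous into $\compact$ in the topology used previously, the quantities $\kappa(\thresh)$, $\pi(\thresh, \kappa(\thresh))$, $\valst(\thresh, \vswitch)$, and $\valsw(\thresh, \vswitch)$ retain their continuity on $\calK \times [\lvswitch, \uvswitch]$. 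The plan is to apply the Fan-Glicksberg theorem to the correspondence
$$\map(\bx, \vswitch) = \bigl\{\by \in \calK : \xi^{\by} \in \mathcal{X}(\thresh, \vswitch)\bigr\} \times \bigl\{\valsw(\thresh, \vswitch)\bigr\}$$
on $\calK \times [\lvswitch, \uvswitch]$; any fixed point gives an MFE of threshold type.

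The crux is the following monotonicity lemma: if $F$ is a decreasing resource-sharing function and all other agents use a threshold strategy $\thresh$, then for each $z \in \setZ$ the continuation payoff $\vstay(z, n)$ defined by \eqref{eq:bellman} is non-increasing in $n$. Once this is in hand, the optimal response at each resource level $z$ is characterized by an integer threshold with a possibly free mixing probability only at that threshold; consequently the first coordinate of $\map(\bx, \vswitch)$ is non-empty and convex, being a product over $z \in \setZ$ of (possibly degenerate) intervals in $[0, \bar{n}]$.

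To prove the lemma I would couple two copies of $\markov(\thresh, \kappa(\thresh))$ started at $(z, n)$ and $(z, n+1)$: the resource-level transitions are driven identically, arrivals share a common Poisson process, and the larger chain's departure clock is split into a synchronized part at rate $\lambda n(1-\gamma\thresh(z,n))$ and an independent excess part at rate $\lambda(n+1)(1-\gamma\thresh(z, n+1)) - \lambda n(1-\gamma\thresh(z, n))$, the excess rate being non-negative because $\thresh(z, \cdot)$ is itself non-increasing. Under this coupling $N_t^{(n+1)} \geq N_t^{(n)}$ with identical resource-level trajectories, so $F$ non-increasing in $n$ combined with a value-iteration induction on \eqref{eq:bellman} yields that $\valfn(z, n)$, and hence $\vstay(z, n)$, is non-increasing in $n$. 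Upper-hemicontinuity of $\map$ then carries over from the proof of Theorem~\ref{thm:existence} via the continuity of $\bx \mapsto \thresh$, and Fan-Glicksberg delivers a fixed point $(\bx, \vswitch)$, yielding an MFE $(\thresh, \kappa(\thresh), \vswitch)$ of the desired form. I expect the main obstacle to be justifying the coupling precisely (especially the non-negativity of the excess departure rate and the preservation of ordering at merge events) and pushing the monotonicity through value iteration to the fixed point of \eqref{eq:bellman}; both should reduce to routine monotone-convergence arguments given the uniform bounds already used in the proof of Theorem~\ref{thm:existence}.
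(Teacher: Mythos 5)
Your proposal is correct and takes essentially the same route as the paper: the paper likewise reduces everything to the monotonicity of $\vstay(z,n)$ in $n$ (Lemma~\ref{lem:value-decreasing}, proved via the closed-convex-cone framework of \citep{smithM2002} with stochastic monotonicity of the transition kernel supplied by a coupling as in Lemma~\ref{lem:coupling}), and then obtains the threshold MFE by exactly the ``minor modification'' of the Theorem~\ref{thm:existence} fixed-point argument that you spell out --- restricting to the convex, compact set of threshold parameters bounded by the tail cutoff $K_{\max}$, noting the optimal-threshold set is a product of intervals, and applying Fan--Glicksberg. The one substantive difference is that your rate-splitting coupling requires the excess departure rate $\lambda(n+1)\bigl(1-\gamma\xi(z,n+1)\bigr) - \lambda n\bigl(1-\gamma\xi(z,n)\bigr)$ to be non-negative, so it proves the monotonicity lemma only for threshold strategies --- which suffices for your restricted fixed-point domain --- whereas the paper's coupling preserves the ordering for \emph{arbitrary} Markovian $\xi$ (once the two chains meet they move identically, so no rate monotonicity is needed), and accordingly its Lemma~\ref{lem:value-decreasing} holds for all Markovian strategies.
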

The proof of the theorem makes essential use of the following lemma,
which states that with decreasing resource-sharing functions, the
continuation values are non-increasing.
\begin{lemma}
  \label{lem:value-decreasing}
  Let $\xi$ be a Markovian strategy, $\kappa > 0$ and $\vswitch >
  0$. If $F$ is a decreasing resource-sharing function, then for each
  $z\in \setZ$, the continuation payoff $\vstay(z,n)$ for the decision
  problem $\optstop(\xi, \kappa, \vswitch)$ is non-increasing in
  $n$. 
\end{lemma}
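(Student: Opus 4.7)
The plan is to combine value iteration with a pathwise coupling of the chain $\markov(\xi, \kappa)$ started from two different initial states, $(z, n)$ and $(z, n+1)$.

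First I would set up the coupling. On a common probability space, construct two coupled copies $(Z_t, N_t^{(n)})$ and $(Z_t, N_t^{(n+1)})$ of $\markov(\xi, \kappa)$ that share a single resource process $Z_t$, a single Poisson arrival process of rate $\kappa$, and a single ``focal'' clock of rate $\lambda$ whose first ring is $\tau$ (common to both chains). Writing $d(z, k) = \lambda k (1 - \gamma \xi(z, k))$ for the total departure rate in state $(z, k)$, I couple the death events so that at any time $t$, with $m = N_t^{(n)}$ and $m' = N_t^{(n+1)}$, a simultaneous death in both chains fires at rate $\min\{d(Z_t, m), d(Z_t, m')\}$, a solo death in the smaller chain fires at rate $\max\{d(Z_t, m) - d(Z_t, m'), 0\}$, and a solo death in the larger chain fires at rate $\max\{d(Z_t, m') - d(Z_t, m), 0\}$. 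The marginal death rates in the two chains match $d(Z_t, m)$ and $d(Z_t, m')$ respectively, so each copy has the correct law. A case check on each transition type (arrival, simultaneous death, solo death in either chain) shows that $N_t^{(n)} \leq N_t^{(n+1)}$ is preserved at each transition, hence pathwise for all $t \geq 0$. In particular, $Z_\tau^{(n)} = Z_\tau^{(n+1)}$ and $N_\tau^{(n)} \leq N_\tau^{(n+1)}$ almost surely.

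Next I would run value iteration. Let $T$ denote the Bellman operator $(TU)(z,n) = F(z,n) + \gamma \max\{\expec^\xi[U(Z_\tau, N_\tau) | z, n], \vswitch\}$; since $F$ is bounded (non-increasing in $n$ with $F \to 0$) and $T$ is a $\gamma$-contraction in sup-norm on bounded functions on $\sets$, the iterates $V_{k+1} = T V_k$ from $V_0 \equiv 0$ converge uniformly to the unique bounded fixed point $V$ solving \eqref{eq:bellman}. I claim each $V_k(z, \cdot)$ is non-increasing in $n$, proved by induction. The base case is immediate. For the inductive step, assume $V_k(z, \cdot)$ is non-increasing. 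Applying the coupling, $V_k(Z_\tau, N_\tau^{(n+1)}) \leq V_k(Z_\tau, N_\tau^{(n)})$ almost surely, so taking expectations yields $\expec^\xi[V_k(Z_\tau, N_\tau) | z, n+1] \leq \expec^\xi[V_k(Z_\tau, N_\tau) | z, n]$. Combined with $F(z, \cdot)$ non-increasing (Definition~\ref{as:monotone}) and monotonicity of $\max\{\cdot, \vswitch\}$, this gives $V_{k+1}(z, n+1) \leq V_{k+1}(z, n)$. Passing to the limit, $V(z, \cdot)$ is non-increasing, and applying the coupling one final time to the fixed point $V$ yields $\vstay(z, n+1) \leq \vstay(z, n)$, as desired.

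The main subtlety lies in verifying that the coupling preserves $N_t^{(n)} \leq N_t^{(n+1)}$, because the departure rate $d(z, \cdot)$ need not be monotone in $n$ for an arbitrary strategy $\xi$: for instance, if $\xi(z, n) = 0$ and $\xi(z, n+1) = 1$, then $d(z, n) = \lambda n$ can exceed $d(z, n+1) = \lambda (n+1)(1-\gamma)$. What makes the asymmetric coupling work is that any solo death in the smaller chain, even when triggered by such a reversal, only strengthens the inequality $N^{(n)} \leq N^{(n+1)}$; and whenever $m = m'$, both solo rates vanish by construction, so the two chains cannot cross.
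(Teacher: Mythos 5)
Your proof is correct and is in substance the same as the paper's: the paper also establishes stochastic monotonicity of the epoch-to-epoch transition kernel by coupling two copies of $\markov(\xi,\kappa)$ started from states ordered as $(z,n) \preccurlyeq_p (z,n+1)$ (its coupling construction in Lemma~\ref{lem:coupling}, via the probabilities $\zeta_k$ and $\eta_k$, handles the possible non-monotonicity of the departure rate $\lambda n(1-\gamma\xi(z,n))$ exactly as your asymmetric rate-splitting does), and then propagates the monotonicity of $F(z,\cdot)$ through the dynamic program. The only difference is presentational: where you carry out the value-iteration induction explicitly, the paper invokes Proposition~5 of \citep{smithM2002} on closed convex cone properties, which encapsulates precisely that induction.
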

The proof of the lemma, provided in
Appendix~\ref{ap:proofs_threshold_strategy}, shows that the decision
problem $\optstop(\xi, \kappa, \vswitch)$ has a dynamic program that
satisfies closed convex cone properties defined in \citep{smithM2002}.
With the lemma in place, the proof of Theorem~\ref{thm:threshold}
follows from minor modifications of the argument in the proof of
Theorem~\ref{thm:existence}, and is omitted.

\section{Computation of MFE and Numerical Equilibrium Analysis}
\label{sec:numerics2}

The implications of Theorem \ref{thm:threshold} are of substantial
practical importance: when the resource-sharing function is
decreasing, the equilibrium behavior of the agents can be fully
described by $|\setZ|$ non-negative real numbers
$\{x_z : z \in \setZ\}$. This parsimony allows simple computational
methods to numerically identify an equilibrium, especially when
$|\setZ|$ is small. We use this fact to analyze the equilibrium
numerically for several representative decreasing resource-sharing
functions. We first describe our approach for computing an equilibrium
in more detail below.

\subsection{Computation of MFE}
\label{sec:computation}

To simplify notation in this section, we use 
$\bx$ to denote the threshold strategy $\xi^\bx$. Recall that an MFE is a fixed
point of the correspondence
$\map(\bx, \vswitch) = \mathcal{X}(\bx,\vswitch) \times
\valsw(\bx,\vswitch)$. For any $(\bx, \vswitch)$, we define the
distance metric $\dist$ as follows:
\begin{align*}
\label{eq:distance}
  \dist(\bx, \vswitch) = \left|\vswitch - \valsw(\bx, \vswitch)\right| + \inf_{\by \in \mathcal{X}(\bx, \vswitch)} \| \bx - \by\|_2,
\end{align*}
where $\|\cdot\|_2$ denotes the Euclidean norm. The second
term on the right-hand side denotes the distance between $\bx$ and the
set $\mathcal{X}(\bx, \vswitch)$, which is compact and convex. To find
a fixed point of $\map$, we identify a value of
$(\bx, \vswitch)$ such that $\dist(\bx, \vswitch) = 0$. We implement
two relaxations to this exact problem. First, we consider an
approximation $\dist^\epsilon$ to the metric $\dist$, obtained
primarily by truncating the state space to a finite set. Second, we
perform an adaptive search method to find a (approximate) minimizer of
the function $\dist^\epsilon$. We choose this approximate minimizer as
the value of the (approximate) MFE strategy and the corresponding
switching payoff. We describe the steps in detail below.

\begin{enumerate}
\item We truncate the state space $\sets$ of the agent's
  decision problem to $\sets_L = \setZ \times \{0, 1, \cdots, L-1\}$
  for some $L \in \naturals$. For each $\bx \in [0, L-1]^{|\setZ|}$,
  we let $\markov_L(\bx, \kappa)$ denote the Markov chain obtained by
  restricting the transitions of the chain $\markov(\bx, \kappa)$ to
  lie in the set $\sets_L$, and let $\pi_L(\bx,\kappa)$ denote its
  steady state distribution. For any
  $\bx \in [0,L]^{|\setZ|}$, the distribution $\pi_L(\bx, \kappa)$ can
  be obtained by solving a set of $L\cdot|\setZ|$ linear equations
  analogous to \eqref{eq:steady-state}.

\item For any given $\bx \in [0,L-1]^{|\setZ|}$, we perform a binary
  search over the interval $[\beta \lambda (1-\gamma), \beta \lambda]$
  to find a value $\kappa = \kappa_L(\bx)$ for which 
  \begin{align*}
    \left|\sum_{z \in \setZ} \sum_{n=0}^L n\pi_{z, n} - \beta\right| \leq \epsilon_1,
  \end{align*}
  where $\pi = \pi_L(\bx, \kappa_L(\bx))$ and $\epsilon_1 > 0$ denotes
  the tolerance level within which we seek to satisfy
  \eqref{eq:expectation-steady}.

\item For any given $\bx \in [0,L-1]^{|\setZ|}$ and
  $\vswitch \in [\lvswitch, \uvswitch]$, we then consider the decision
  problem $\optstop(\bx, \kappa_L(\bx), \vswitch)$ (with state space
  restricted to $\sets_L$). We perform value iteration to compute
  approximate value functions $\valst^\epsilon(\bx, \vswitch)$ and
  $\valsw^\epsilon(\bx, \vswitch)$, where we iterate until
  $\valst^\epsilon(\bx,\vswitch)$ is within $\epsilon_0 > 0$ (in
  sup-norm) of the limit. Using these approximate value functions, we
  identify the set of approximately optimal thresholds
  $\mathcal{X}^\epsilon(\bx, \vswitch)$. Define $\dist^\epsilon$ by
  replacing $\valsw$ and $\mathcal{X}$ in the definition of $\dist$
  with $\valsw^\epsilon$ and $\mathcal{X}^\epsilon$.

\item We seek to minimize $\dist^\epsilon(\bx, \vswitch)$ over all
  values of $\bx \in [0,L-1]^{|\setZ|}$ and
  $\vswitch \in [\lvswitch, \uvswitch]$.  We use the Nelder-Mead
  neighborhood search method \citep{nelder1965simplex} to find the
  minimizer of the distance function. To locate the global
  minimum, we run the method in parallel with multiple initial values
  of $\bx$ and $V_{\switch}$, chosen among a discretized set of
  threshold strategies
  $\markovian_L^k = \{0, (L-1)/k, 2(L-1)/k, \cdots, (k-1)(L-1)/k,
  L-1\}^{|\setZ|}$ for some $k \in \naturals$ and a discretized subset
  of $[\lvswitch, \uvswitch]$ constructed in a similar way.  
  
\item After obtaining $(\bx^*, \vswitch^*)$ that attains the minimum
  of $\dist^\epsilon$ over all runs, we do a validation check by
  comparing $\dist^\epsilon(\bx^*, \vswitch^*)$ with a threshold
  $\epsilon_2$ to see if this distance is close enough to 0 for
  $(\bx^*, \vswitch^*)$ to be an equilibrium. We accept
  $(\bx^*, \vswitch^*)$ as an approximate MFE strategy and the
  corresponding switching payoff if
  $\dist^\epsilon(\bx^*, \vswitch^*) \leq \epsilon_2$. If the
  validation check fails, a larger $k$ is chosen to provide more
  fine-grained initial starting points until a maximum number of
  iterations is reached. Although our method does not guarantee to
  find an approximate equilibrium on terminating, in all our
  computations in section~\ref{sec:statics}, we obtain an approximate
  equilibrium with corresponding $\dist^\epsilon$ smaller than
  $10^{-10}$.

\end{enumerate}

 We also note that there may be multiple equilibria in
  our model for general model parameters and resource-sharing
  functions; we have not shown uniqueness. Such instances of
  non-uniqueness may arise, for example, when the resource-sharing
  function is multimodal, as in those settings, coordination concerns
  dominate, and an agent may prefer to stay at a location if other
  agents do so, and prefer to switch if others switch.  In such
  instances, the preceding numerical procedure selects for a
  particular (approximate) equilibrium, and our comparative statics
  results in the following section correspond to the
  equilibrium\footnote{We conjecture that the equilibrium is unique
    when the resource-sharing function is decreasing and the resource
    level is binary, the setting we study for comparative statics in
    Section~\ref{sec:statics}. An extensive numerical investigation
    supports this conjecture, but we do not have a formal proof.}
  selected by this algorithm.



\subsection{Comparative statics}
\label{sec:statics}

In this section, we present the results of our numerical
investigations of the agents' behavior in a mean field equilibrium
using the computational approach described in the preceding
section. We study the setting where $\setZ = \{0 , 1\}$, with
transitions rates $\mu_{0,1} = \mu_{1,0} = \mu$. As our model is
invariant to proportional scaling of the transition rate $\mu$ and the
agents' inter-epoch rate $\lambda$, we fix $\lambda =1$. We set the
survival probability to $\gamma=0.95$. We consider decreasing
resource-sharing functions of the form $F(z,n) = z n^{-\alpha}$, where
$\alpha \in \{0.5, 1, 1.5\}$. In this setting, some locations have
resource (those with $z=1$) while others do not ($z=0$), and the
single-location welfare function is increasing for $\alpha = 0.5$,
constant for $\alpha = 1$, and decreasing for $\alpha = 1.5$ in the
number of agents there.  Finally, our approximation scheme uses
parameters $L= 200$, $k=20$, $\epsilon_0=10^{-4}$,
$\epsilon_1 =10^{-6}$ and $\epsilon_2 = 10^{-8}$.

In our computational study, we study how the model's parameters
influence both agent behavior as quantified by the equilibrium
thresholds and system efficiency as quantified by the welfare per
location.  The {\em welfare per location} is defined as the rate of
total expected payoff obtained in equilibrium by all the agents at a
location in steady state. At a location with resource level
$z \in \setZ$ and $n$ agents, the total payoff rate to those $n$
agents is given by $W(z,n) = \lambda nF(z,n)$. Since in steady state,
the state $(z,n)$ is distributed according to the mean field
distribution $\pi$, the agents' welfare per location equals
\[ W_L = \expec_\pi[ W(Z,N) ] = \sum_{z,n} \lambda nF(z,n)
  \pi_{z,n}. \] We also analyze the {\em welfare per agent}, defined
as the rate at which a randomly chosen agent receives payoff in
equilibrium. Since the agent density is equal to $\beta$, the welfare
per agent $W_A$ is given by $W_A = W_L/\beta$. When $\beta$ is held
fixed the two welfare measures are proportional, and thus we study
$W_A$ in addition to $W_L$ only when we vary $\beta$.

Figure~\ref{fig:mu_experiments} shows how the equilibrium thresholds
and the welfare per location vary as the resource process changes more
frequently, i.e, as $\mu$ increases, for a fixed value of
$\beta = 20$.
For each resource-sharing function, for small values of $\mu$, the
difference between the thresholds $x_1$ and $x_0$ is
substantial. Since the resource level changes slowly, an agent in a
location with resource is willing to suffer significant competition
(in the form of other agents) before choosing to switch her
location. Note that, as $\alpha$ increases, the level of competition
at which agents switch decreases, consistent with our observation that
as $\alpha$ increases, competition becomes more severe. On the other
hand, as $\mu$ increases, the difference in the two thresholds
diminishes. This is because increasing $\mu$ diminishes the benefit of
staying in a location.  As the resource levels change more frequently,
the resource process mixes more readily and thus future resource
levels are less correlated with current levels.

Figure~\ref{fig:mu_experiments} also shows that the welfare per
location depends crucially on the resource-sharing function.  When the
single-location welfare function increases with the number of agents
at that location ($\alpha = 1/2$), the welfare per location decreases
as resource levels change more frequently, i.e., as $\mu$
increases. In contrast, when the single-location welfare function
decreases with the number of agents there ($\alpha = 3/2$), the
welfare per location increases as $\mu$ increases.  To understand
this, observe that when $\mu$ is small, the thresholds $x_1$ and $x_0$
are well-separated, implying that the agents will be concentrated in
locations with positive resource level. On the other hand, when $\mu$
is large, the two thresholds are similar, and agents are more
equitably distributed between locations with and without
resource. When $\alpha < 1$, the former distribution of the agents
obtains more welfare per location, since single-location welfare
function is increasing with the number of agents at a location with
resource, and having more agents at these locations increases welfare.
On the other hand, when $\alpha > 1$, the former distribution incurs
lower welfare per location due to severe competition among the agents
at the location with resource. (When $\alpha=1$, the distribution of
the agents between locations with or without resource does not
substantially affect the welfare per location. In particular, as long
as a location with resource has at least one agent present, the total
payoff at that location is the same.)

\begin{figure}
  \centering
  \includegraphics[width=0.9\linewidth]{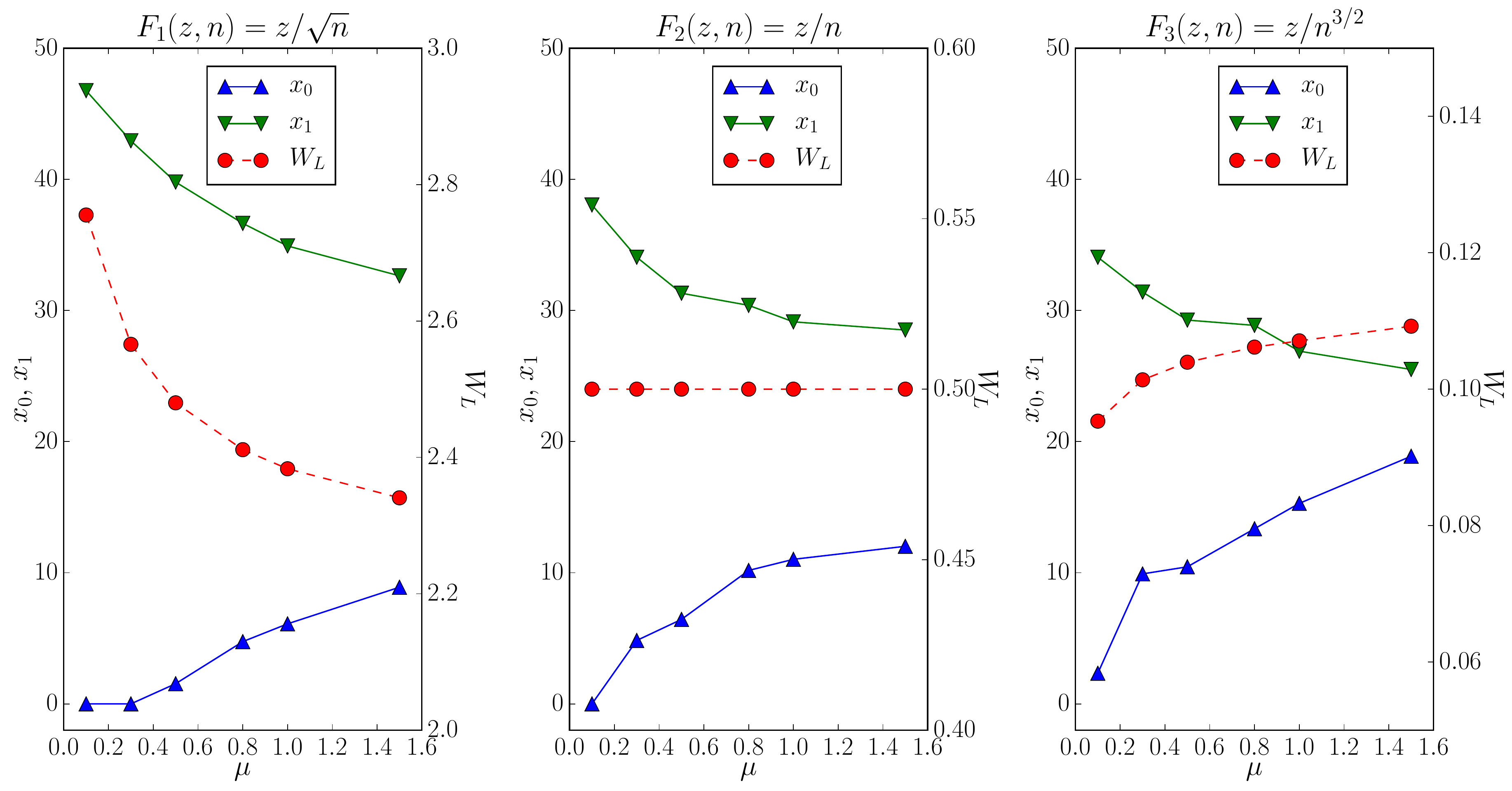}
  \caption{Equilibrium thresholds and  welfare under different resource transition rates $\mu$, with agent density fixed at $\beta=20$.  }
  \label{fig:mu_experiments}
\end{figure}

Figure~\ref{fig:beta_experiments_ag} shows equilibrium properties as a
function of the agent density $\beta$ when resource levels change
slowly ($\mu=0.25$).  The difference between the thresholds $x_1$ and
$x_0$ widens as $\beta$ increases for each resource-sharing function.
This is because increasing $\beta$ for any fixed state $(z,n)$ at the
current location diminishes an agent's expected payoff from switching,
since there are more agents to compete against. Thus, when the current
location has resource, the agents become more likely to stay as
$\beta$ gets larger.

We further observe that, as $\beta$ increases, the welfare per
location increases when $\alpha = 0.5$, decreases when $\alpha = 1.5$,
and is essentially constant when $\alpha =1$. As in
Figure~\ref{fig:mu_experiments}, this relation is explained by the
equilibrium distribution of agents between locations with and without
resource, arising from the dependence of the equilibrium thresholds on
$\beta$: as the difference between the two thresholds increases, the
welfare per location increases when $\alpha =0.5$, and decreases when
$\alpha = 1.5$. However, since the degree of competition increases as
$\beta$ increases, we observe that irrespective of the
resource-sharing function the welfare per agent decreases. 

 The preceding comparative statics reveals an important
  feature of our dynamic model and its equilibrium that is lacking in
  a static analysis: our analysis captures the joint distribution of
  the agents and the resource levels across
  locations. Figure~\ref{fig:mu_experiments} demonstrates this by
  showing that agents' strategies change as the resource transition
  rate $\mu$ changes. In contrast, since all values of $\mu$ result in
  the same steady-state proportion ($50\%$) of locations in each
  resource state, a static analysis that only tracks the stationary
  resource state distribution would generate the same market outcomes
  for all values of $\mu$. Furthermore, the welfare also changes with
  $\mu$ for resource-sharing functions other than $z/n$, where the
  total payoff rate in a location $\lambda nF(z,n)$ depends
  non-trivially on $n$. Such an effect would not materialize in a
  static model which ignores the dynamics of the resource process and
  tracks only the steady state.


\begin{figure}
  \centering
  \includegraphics[width=0.9\linewidth]{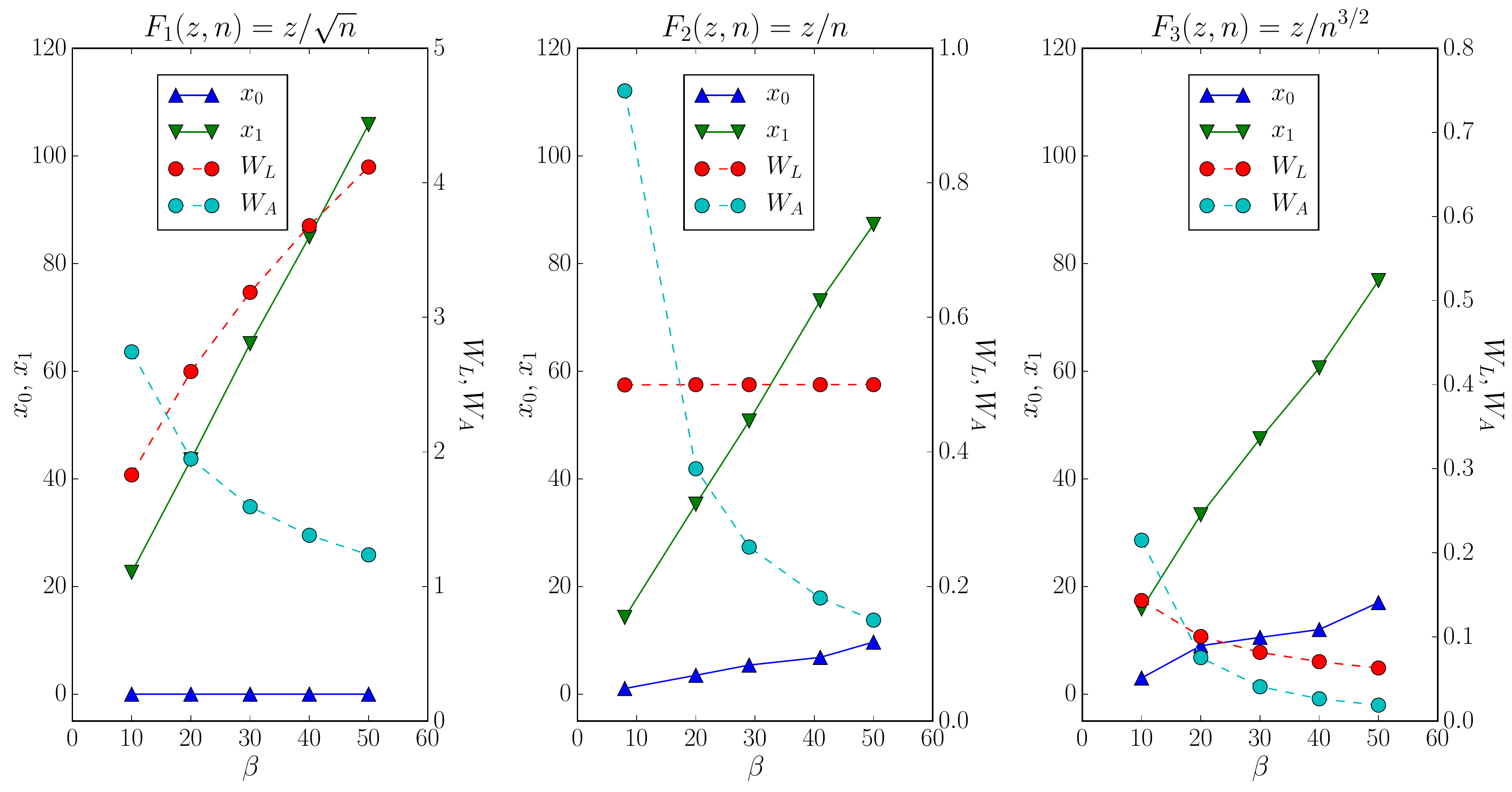}
  \caption{Equilibrium thresholds and welfare under different agent
    densities $\beta$. $W_A$ is multiplied by 15 for all values. Note
    the resource transition rate is given by $\mu=0.25$.  }
     \label{fig:beta_experiments_ag}
\end{figure}

  \subsection{Case study: Setting platform commission}
\label{sec:application}

In this section, we provide a case study to illustrate how our model
can be used to evaluate engineering interventions. Specifically, we
apply our model to the ride-hailing market in Manhattan. Ride-hailing
platforms charge a commission when they transfer rider payments to
their driver partners, and consequently, the drivers' behavior in the
market is influenced by this commission rate. In this case study, we
investigate how different commission rates affect the aggregate
revenue of the drivers and the platform (and how it is split between
the two); the outcome of this analysis provides a reference for
platforms when an adjustment of commission rate is under
consideration.

We view taxi drivers as agents, different neighborhoods of Manhattan
as locations, and taxi trip demand as the resource in our model. We
assume the drivers, at the end of each day, decide for the next day
whether to stay in the same neighborhood or switch to another one. We
also assume a driver makes this decision based on the trip demand in
his current neighborhood as well as his estimate of the number of
competing drivers in the same neighborhood.

Below we describe how the model parameters are estimated, and further
describe the assumptions. We use the yellow cab trip records from the
New York City Taxi and Limousine Commission dataset~\citep{NYCtrip} to
estimate these parameters. The data limitations prevent us from
performing a full-blown analysis; in such instances, we use our
judgment to assign parameter values. We set the parameter values as
follows:
\begin{itemize}
\item Agent density $\beta$: We divide Manhattan into 12 regions, with
  the diameter of each region approximately equal to the average taxi
  trip length in Manhattan. The agent density is then estimated as
  $\beta = 400$ drivers per location, following an estimate of 4800
  active taxi drivers, obtained by averaging across different times of
  day.
	
\item Resource process $\{\mu_{z, y}\}$: We assume a resource model
  with binary states, with $0$ denoting the typical resource state,
  and $1$ denoting a high resource state. Such a high resource may
  describe local conditions (such as local events, weather patterns,
  etc) that temporarily lead to high demand for rides. To estimate the
  transition rates between the two states, we use weather as a proxy,
  and estimate the transition between rainy and non-rainy days using
  historical weather data from Manhattan \citep{Weather}. This yields
  a transition rate of $\mu_{0,1} = 1/3.86$ and $\mu_{1, 0} = 1/1.93$,
  with units $\text{day}^{-1}$. These values are a reasonable proxy
  for state transitions, indicating a high resource state
  approximately every $4$ days, for a duration of about $2$
  consecutive days.

 	
\item Payment function $F$: Most ride-hailing platforms
    use dynamic pricing mechanisms to improve market efficiency, and
    such mechanisms can be designed to increase the aggregate revenue
    with the number of drivers \citep{CKW17, Chen2016}, as increased
    driver availability allows more trips to happen. However, at the
    same time, higher competition among the drivers decreases the
    revenue received by an individual driver. To model these aspects,
    we let the aggregate revenue rate from riders at a location with
    resource state $z$ and $n$ drivers equal $n^{1-\alpha} f(z)$ for
    some parameter $\alpha \in (0,1)$, where $f(z)$ captures the
    dependence on the resource state. This entails the revenue rate
    per driver to equal $f(z)/n^{\alpha}$ and hence the rate of
    payment to an individual driver in the location takes the
    following form:
  \[ F(z, n) = \frac{(1-c(z)) f(z)}{n^\alpha},\quad z\in\{0,1\}, n \in
    \mathbb{N},\] where $c(z)$ denotes the (resource-dependent)
  commission rate charged by the platform. For our analysis, we choose
  $\alpha = 0.5$.
  To estimate $f(0)$, we use the
  average daily rider payment on non-rainy days in Manhattan from
  \citep{NYCtrip}, which yields an estimate of $1.2 \times 10^4$
  dollars per hour per location.  We do not, however, estimate $f(1)$
  using rider payments on rainy days, since our data comes from yellow
  cab data with fixed prices, whereas modern ride-hailing platforms
  typically increase price as demand increases. We therefore assume
  the average total rider payment when the resource is high $(z=1)$ to
  be $20\%$ higher, and set $f(1) = 1.2f(0)$.

\item Decision rate: We choose $\lambda = 1 \text{ day}^{-1}$.
 	
\item Survival probability: We choose $\gamma = 0.995$, indicating a
  planning horizon of $1/\lambda (1-\gamma) = 200$ days.
\end{itemize}
 
Assuming a baseline commission rate of $15\%$ in both resource states,
we investigate how the revenue of drivers and the platform would vary
under a number of commission rate scenarios.  For each such
combination of $c(0)$ and $c(1)$ (and under parameter values described
above), we numerically compute the resulting mean field equilibrium in
our model, and the driver and platform revenues in the computed
equilibrium. We share these results in Table~\ref{table}. These
results can be used to access the magnitude of the impact, and to
decide whether commission should be raised in aggregate, or if it
would be better to selectively raise it based on demand (resource
states). A table such as this could be shared with decision makers as
part of a larger decision process.
\begin{table}[tb]
\centering
\begin{tabular}{|c|c|c|c|c|c|c|c|}
  \hline
  $c(0)$  & $c(1)$ & $\mathsf{DriRev}$  & $\Delta \mathsf{DriRev}$  & $\mathsf{PlatRev}$ & $\Delta \mathsf{PlatRev}$ & $\mathsf{AggRev}$ & $\Delta \mathsf{AggRev}$\\
  \hline
  0.15 & 0.15 &  26.121 & - & 4.610 & -  & 30.731 & - \\
  \hline
  0.175& 0.175 & 25.353 & -2.94\% & 5.378 & 16.66\% & 30.731 & 0.00\% \\
  \hline
  0.15 & 0.20 & 25.504 & -2.36\% & 5.219 & 13.20\% & 30.723 & -0.02\% \\
  \hline
  0.20 & 0.15 & 25.210 & -3.49\% & 5.507 & 19.46\% & 30.718 & -0.04\%\\
  \hline
  0.2& 0.2 & 24.584 & -5.88\% & 6.146 & 33.32\% & 30.730 & 0.00\%\\
  \hline
\end{tabular}
\caption{Here, $c(z)$ is the commission rate at resource state
  $z \in \{0,1\}$. $\mathsf{DriRev}$, $\mathsf{PlatRev}$ and
  $\mathsf{AggRev}$ denote the revenue (in units $10^5$ dollars per
  hour) for drivers, for the platform, and in aggregate,
  respectively. $\Delta \mathsf{DriRev}$ denotes the change in
  drivers' revenue compared with the base case ($c(0) = c(1) = 0.15$),
  with $\Delta \mathsf{PlatRev}$ and $\Delta \mathsf{AggRev}$ defined
  similarly. \label{table}}
\end{table}

As discussed earlier, our dynamic model allows us to
  capture the joint distribution of the drivers and the aggregate
  revenue across locations. The distribution of the drivers across
  locations is important because it influences a driver's payoff upon
  switching, which influences her switching decisions. Our model
  enables us to include this endogenous effect of the driver
  distribution on the drivers' switching decisions in evaluating
  different commission rates. Without the dynamics (and the tractable
  equilibrium concept of a mean field equilibrium), such effects would
  be hard to incorporate in a static analysis, rendering it
  incomplete.

\section{Conclusion}
\label{sec:conclusion}

Our results establish that in equilibrium, the agents in our
  model base their decision to explore solely on the state of the
  location they currently reside in, and on its steady state
  distribution. In particular, our results justify analyzing
spatio-temporal models under simple yet optimal models of agent
behavior.

Our model and analysis raise many topics for future
research. First, we have used the notion of a mean field
  equilibrium to analyze a single location in isolation, assuming the
  other locations are described by the mean field distribution. A
natural question is whether the resulting strategy constitutes an
approximate equilibrium in the system with large but finite number of
agents and locations. Such approximation results for mean field
equilibrium have been obtained in other contexts (see, for example,
\citep{sachin_2010,balseiro2014,iyerJS2014}). In the finite system, a
single agent visits multiple locations over her lifetime, inducing
correlations among the states of those locations.  The
  analytical challenge in obtaining an approximation result involves
  showing that as the system size increases, such correlations vanish,
  and in the limit, the dynamics of a location in the finite system
  approaches the dynamics of the single-location in our model.

  On the modeling front, we have assumed that each location is
  homogeneous. In particular, we assume the resource process is
  distributed independently and identically across different
  locations. One consequence of this homogeneity is
    agents do not choose their destination when they switch.  It is
    straightforward to extend our model and the analysis to
    incorporate location heterogeneity and to let agents choose
    their destination when switching. Such a model would better
    represent the settings we study.  For example, in ride-hailing
    settings, residential neighborhoods have different demand
    characteristics than business districts, and drivers choose the
    neighborhood to operate in based on these characteristics. A
    formal finite model of this extension has multiple {\em types} of
    locations, where each location has type-dependent resource
    dynamics, resource-sharing function and agent density. Agents
    choose not only whether to stay or switch, but also which location
    type to switch to, whereupon the destination is chosen uniformly
    among locations of that type. Using similar arguments for the
    homogeneous setting, we can obtain the corresponding limiting
    infinite system meant to capture the limiting behavior as the
    number of agents and the locations increases with the type
    distribution and the agent density fixed.  Our proof of existence
    of a mean field equilibrium applies in this setting with minor
    modifications; we omit the details due to space considerations.
  On the other hand, we have assumed that the resource process at a
  location is exogenous specified, whereas an extension could allow
  for the resource transitions at a location to depend on the number
  of agents therein.

Finally, our work also sets the stage for analyzing engineering
interventions and their economic impact. One such intervention
involves altering the resource sharing function at each location
through subsidies or penalties to induce the agents to stay in or
switch from a location, thereby affecting their welfare. A further
question is whether sharing information about locations' states would
benefit or harm the agents, and how such an information sharing
mechanism should be designed.  Answers to these questions would help
platforms such as Uber or Airbnb to increase their efficiency.

\appendix

\section{Description of the finite system}
\label{ap:finite-model}

In this section, we provide a formal description of the system with
finite number of locations and agents (the ``finite system''), which
motivates our mean field model.

The finite system has a set of locations $\calK$, where each location
$k \in \calK$ contains a stochastic time-varying resource. We use
$Z_t^k$ to denote the resource level at location $k$ at time
$t \geq 0$. We assume the resource process $\{Z_t^k : t \geq 0\}$ is a
finite state continuous time Markov chain, and further assume the
resource processes across different locations in the system are
distributed identically and independently. We let $\setZ$ denote the
set of values the resource process can take, and let $\mu_{zy}>0$
denote the transition rate of $Z_t^k$ from a state $z \in \setZ$ to a
state $y \in \setZ$. Furthermore, we make the assumption that each
process $Z_t^k$ is irreducible and positive recurrent, with a unique
invariant distribution given by
$\{\pi_{\resource}(z) : z \in \setZ\}$.

Spread across this set of locations are $N$ agents. Each agent may
switch between locations in search for resources and less competition,
as we detail below. Each agent $i$ is associated with a Poisson clock
with rate $\lambda$, such that each time the clock rings, the agent
decides whether to stay in the location or switch to another one. We
refer to each clock ring of agent $i$ as the agent's decision epoch,
and let $\tau_i^\ell$ and $k_i^\ell$ denote the time and location of
her $\ell^{th}$ decision epoch respectively.

We let $N_t^k$ denote the number of agents at the location $k$ at time
$t$.  At each decision epoch $t=\tau_i^\ell$, the agent $i$ at
location $k=k_i^\ell$ receives a payoff $F(Z^k_t, N^k_t)$ that depends
on the resource level $Z^k_t$ and the number of agents $N^k_t$ at that
location. We make the same assumptions on $F$ as in
Section~\ref{sec:formal_model}.

Subsequent to receiving the payoff, the agent $i$ makes the decision
whether to continue staying at her location or move to a different
location. On choosing to move to a different location, agent $i$
instantaneously arrives at a new location $k_i^{\ell+1}$. We make the
assumption that the new location $k_i^{\ell+1}$ is drawn independently
and uniformly from the set of all locations other than the agent's
current location. Note that this assumption precludes us from modeling
an agent's strategic choice of \emph{which} location to move
to. Nevertheless, we make this assumption as, even under this
restrictive assumption, the analysis of the agent's decision problem
turns out to be challenging. In Section~\ref{sec:conclusion}, we
discussed a few extensions and modifications that align closer to
practical settings.

Similar to the mean field model, we assume agents in the finite system
are short-lived: after each decision epoch $\tau_i^\ell$, subsequent
to making her decision regarding whether to stay in her current
location or move to a different location, the agent $i$ departs the
system independently with probability $1-\gamma$, never to return, and
we denote as $\tau_i$ the time she leaves the system. We also assume
for each agent that departs, a new agent arrives to the system at a
location chosen uniformly at random, to maintain constant system size,
same as in the mean field model.

Finally, we describe the utility and the information structure of each
agent in the model. We assume that each agent $i$, at each time $t$,
at her current location $k$, observes the resource level $Z^k_t$ and
the number of agents $N^k_t$. On the other hand, the agent cannot
observe the resource level and the number of agents at any other
location.  We assume the agents have perfect recall, and hence, at any
decision epoch $\tau_i^\ell$, agent $i$ bases her decision to stay or
move on the entire history (namely the resource levels and the number
of agents at each location she has visited) she has observed until
that time.

Given this informational assumption, each agent $i$ is risk-neutral
and wants to maximize the total expected payoff accrued over her
lifetime. Formally, each agent $i$ seeks to maximize
\begin{align*}
  \expec\left[ \sum_{\ell = 1}^\infty F(Z_i^\ell, N_i^\ell) \ind\{ \tau_i^\ell \leq \tau_i \} \right],
\end{align*}
where the expectation is over the randomness in the resource levels,
the arrival and departure process of the agents, and their (and their
competitors') strategies. Since the departure of an agent is
independent of the rest of the system, it is straightforward to show
that the agent's expected payoff can be equivalently written as
\begin{align*}
  \expec\left[ \sum_{\ell = 1}^\infty \gamma^{\ell-1} F(Z_i^\ell, N_i^\ell) \right].
\end{align*}
Thus, each agent $i$'s decision problem is equivalent to the decision
problem faced by a persistent agent (who never departs the system)
seeking to maximizer her total expected discounted payoff.

\section{Existence and uniqueness of invariant distribution of
  $\markov(\xi, \kappa)$}
\label{ap:existence-stationary-distn}
In this section, we show that for any Markovian strategy $\xi$ and
arrival rate $\kappa>0$, the Markov chain $\markov(\xi, \kappa)$ has a
unique steady state distribution.
\begin{lemma}
  \label{lem:ergodicity}
  For any Markovian strategy $\xi$ and arrival rate $\kappa \geq 0$,
  there exists a unique steady state distribution for
  $\markov(\xi, \kappa)$ satisfying \eqref{eq:steady-state}.
\end{lemma}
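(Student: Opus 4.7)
The plan is to split on $\kappa > 0$ versus $\kappa = 0$, and in each case verify irreducibility and positive recurrence of $\markov(\xi, \kappa)$; the standard existence-and-uniqueness theorem for continuous-time Markov chains on a countable state space then delivers the unique invariant distribution satisfying \eqref{eq:steady-state}.

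For the main case $\kappa > 0$, I would first establish irreducibility on $\sets$ by exhibiting positive-probability paths between arbitrary states $(z, n), (z', n') \in \sets$: the agent count can be increased using arrivals at rate $\kappa > 0$; whenever $n \geq 1$ it can be decreased because the per-agent departure rate $\lambda(1 - \gamma \xi(z, n))$ is at least $\lambda(1-\gamma) > 0$; and the resource coordinate can be transformed freely using irreducibility of the chain $\{\mu_{z, y}\}$. For positive recurrence, I would follow the two-sided coupling hinted at in the sketch. Build $N_t^k$ from independent rate-$\lambda$ Poisson clocks, one per agent, with each ring causing the agent to depart with probability $1 - \gamma \xi(Z_t^k, N_t^k)$. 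Thinning these clocks by the factor $(1-\gamma)$ produces a comparison $M/M/\infty$ queue $M_t^+$ with arrival rate $\kappa$ and per-server rate $\lambda(1-\gamma)$, and the unthinned clocks produce a second $M/M/\infty$ queue $M_t^-$ with per-server rate $\lambda$; sharing a common arrival process, these couplings can be arranged so that $M_t^- \leq N_t^k \leq M_t^+$ pathwise. Since $M_t^+$ admits a Poisson invariant law with finite mean, $N_t^k$ is tight, and together with irreducibility on $\sets$ and the finiteness of $\setZ$ this implies positive recurrence of $\markov(\xi, \kappa)$. The lower inequality $M_t^- \leq N_t^k$ is not needed for the present lemma, but anticipates its use in Appendix~\ref{ap:existence-kappa}.

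For the degenerate case $\kappa = 0$, no arrivals occur, so $N_t^k$ is almost surely nonincreasing; combined with the departure-rate lower bound $\lambda n(1-\gamma) > 0$ for $n \geq 1$, the agent count is absorbed at $0$ in finite expected time. Every state with $n \geq 1$ is therefore transient, and the invariant distribution is uniquely supported on $\setZ \times \{0\}$ with $\pi_{z, 0} = \pi_\resource(z)$.

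I expect the main technical care to lie in arranging the coupling so that the inequality $N_t^k \leq M_t^+$ is genuinely preserved through every transition, given that the effective per-agent departure rate in $\markov(\xi, \kappa)$ depends on the current resource level $z$ while the comparison queue does not. The per-agent-clock construction above handles this uniformly, since the thinning probability only reduces the effective rate of each clock. If this construction becomes cumbersome at boundary transitions, a Foster-Lyapunov argument with $V(z, n) = n$ and drift bound $\mathcal{L} V(z, n) \leq \kappa - \lambda n(1-\gamma)$ provides an equivalent alternative, yielding positive recurrence against the finite set $\setZ \times \{0, 1, \ldots, N_0 - 1\}$ for any $N_0 > \kappa/(\lambda(1-\gamma))$.
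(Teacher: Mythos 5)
Your proposal is correct and takes essentially the same route as the paper: both verify irreducibility directly from \eqref{eq:loc-dyn} and derive positive recurrence by coupling $N_t$ below an $\mminfty{(1-\gamma)\lambda}{\kappa}$ queue (exactly the paper's Lemma~\ref{lem:coupling}), the only cosmetic difference being that the paper finishes via a finite expected return time to $\sets_0 = \{(z,0) : z \in \setZ\}$ while you finish via tightness of the marginal laws plus irreducibility, which is equally valid for this nonexplosive chain. One small point in your favor: the paper's proof silently fixes $\kappa > 0$ even though the lemma is stated for $\kappa \geq 0$, and your separate treatment of the degenerate case $\kappa = 0$ (unique invariant law supported on $\setZ \times \{0\}$ with $\pi_{z,0} = \pi_{\resource}(z)$) closes that gap.
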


\begin{proof} Fix a Markovian strategy $\xi$, and an arrival rate
  $\kappa>0$.  We prove the lemma statement by showing that the Markov
  chain $\markov(\xi,\kappa)$ is irreducible and positive
  recurrent. The fact that $\markov(\xi, \kappa)$ is irreducible
  follows straightforwardly from \eqref{eq:loc-dyn} and the fact that
  the resource process is independent and ergodic. Thus, it only
  remains to show that the chain is positive recurrent.

  Let $\sets_0 \defeq \{ (z,0) : z \in \setZ\}$ and define
  $T_z(\sets_0)$ is the first return time of the chain to $\sets_0$,
  given it starts at $(z,0)$:
  \begin{align*}
   T_z(\sets_0) \defeq \inf\{ & t > 0 : (Z_t,N_t) \in \sets_0,
    (Z_s,N_s) \notin \sets_0 \text{ for some $0 < s < t$, } \\ 
   &  \text{given } (Z_0,N_0) = (z,0)\}.
  \end{align*}
  In the following, we show that $T_z(\sets_0)$ has finite expectation
  for each $z \in \setZ$. From this, using the ergodicity of the
  resource process, it follows that the return time to a particular
  state $(z_0,0) \in \sets_0$ also has finite expectation, and hence
  the chain is ergodic.

  To show that $T_z(\sets_0)$ has finite expectation, we use a
  coupling argument. Given a Markov chain
  $(Z_t,N_t) \sim \markov(\xi,\kappa)$ with $(Z_0,N_0) = (z,0)$, we
  construct a coupled process
  $\nagent{t}{1} \sim \mminfty{(1-\gamma)\lambda}{\kappa}$ with
  $\nagent{t}{1} = 0$, as in the proof of
  Lemma~\ref{lem:coupling}. Define $\tilde{T}_z(0)$
  to be the first return time to $0$ of the chain $\nagent{t}{1}$.
  From the construction of the coupling, it follows that
  $N_t \leq \nagent{t}{1}$ for all $t\geq 0$, and hence
  $T_z(\sets_0) \leq \tilde{T}_z(0)$. Thus, we have
  $\expec[T_z(\sets_0)] \leq \expec[\tilde{T}_z(0)]$. The result then
  follows immediately from the fact that an
  $\mminfty{(1-\gamma)\lambda}{\kappa}$ queue is ergodic, and hence
  $\expec[\tilde{T}_z(0)] < \infty$ for all $z \in \setZ$.
\end{proof}

\section{Joint continuity of the invariant distribution of
  $\markov(\xi, \kappa)$}
\label{ap:continuity-stationary-distn}

In the following, we show that the steady state distribution
$\pi(\xi, \kappa)$ of the Markov chain $\markov(\xi, \kappa)$ is
jointly (and uniformly) continuous in its parameters.  This continuity
result will play an important role in subsequent results that
constitute our proof of existence of an MFE.

To prove the continuity of $\pi(\xi, \kappa)$, we adopt an approach
similar to \citep{levanS07}, where we characterize the invariant
distribution of $\markov(\xi, \kappa)$ as a maximizer of a continuous
function, and apply Berge's maximum theorem.  Before we present the
formal argument, we specify the topologies (and the metric) we impose
on the set of Markovian strategies and the set of invariant
probability distributions, and specify the continuous function
$\Lambda$ that we consider.  First, we endow the state space
$\sets = \setZ \times \naturals_0$ with the discrete topology.  Let
$\bounded$ denote the set of bounded function $h : \sets \to
\reals$. (Note that since we impose the discrete topology on $\sets$,
any such $h$ is also continuous.) We endow $\bounded$ with the
sup-norm:
\begin{align}
\label{eq:d-define}
\|h_1 - h_2\|_\infty \defeq \sup_{(z,n) \in \sets} \left|h_1(z,n) - h_2(z,n) \right|,  \quad \text{for $h_1, h_2 \in \bounded$}
\end{align}
Let $\markovian \subseteq
\bounded$ denote the set of Markovian strategies, with the topology
induced from $\bounded$.

We let $\signed$ denote the set of finite signed measures on
$\sets$, and we endow
$\signed$ with the weak topology, which is equivalent to the
  topology induced by $\ell_1$-norm since $\sets$ is countable:
\begin{align}
\label{eq:rho-define}
  \| \mu - \nu \|_1 = \sum_{(z,n) \in \sets} | \mu(z,n) - \nu(z,n) |, \quad \text{for $\mu, \nu \in \signed$.} 
\end{align}
Let $\Gamma = \{ \pi(\xi, \kappa) : \xi \in \Pi, \kappa \in [\beta
\lambda (1- \gamma), \beta \lambda] \} \subseteq
\signed$ denote the set of invariant distributions (with the induced
topology) for all Markovian strategies and arrival rates. Let
$\overline{\Gamma}$ denote the closure of $\Gamma$.

For $\xi \in \markovian$,
$\kappa \in [\beta \lambda (1-\gamma), \beta\lambda]$ and
$\nu \in \overline{\Gamma}$, define $\Lambda(\xi, \kappa, \nu)$ as
follows:
\begin{align}
\label{eq:f-define-1}
  \Lambda(\xi, \kappa, \nu) \defeq - \sum_{(z,n) \in \sets} \frac{1}{n+1} \left| (\nu \sQ)(z,n) \right|, 
\end{align}  
where $\sQ = \sQ^{\xi, \kappa}$ denotes the transition kernel of
$\markov(\xi, \kappa)$, and $\nu \sQ$ is defined as
\begin{align*}
  (\nu \sQ) (z, n)  =  \sum_{(y,m) \in \sets} \nu(y,m) \sQ( (y,m) \to (z,n)).
\end{align*}

With the preliminaries in place, we are now ready to state the main
lemma of this section.
\begin{lemma}
  \label{lem:stationary-dist-continuity}
  The map $(\xi, \kappa) \mapsto \pi(\xi, \kappa)$ is jointly (and
  uniformly) continuous in $(\xi,\kappa)$ for $\xi \in \markovian$ and
  $\kappa \in [\beta \lambda (1-\gamma), \beta \lambda]$.
\end{lemma}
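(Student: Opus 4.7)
The plan is to invoke Berge's maximum theorem, following the approach of \citep{levanS07}. Concretely, I would express $\pi(\xi, \kappa)$ as the unique maximizer of $\Lambda(\xi, \kappa, \cdot)$ over the compact set $\overline{\Gamma}$, argue upper hemicontinuity of the argmax correspondence, and conclude continuity from the fact that the argmax is a singleton. Uniqueness is straightforward: for any probability $\nu$ on $\sets$ one has $\Lambda(\xi, \kappa, \nu) \leq 0$, with equality if and only if $\nu \sQ^{\xi, \kappa} = 0$, i.e., $\nu$ is invariant; Lemma~\ref{lem:ergodicity} then pins down $\pi(\xi, \kappa)$ as the unique maximizer, with value $0$.

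Next I would show that $\overline{\Gamma}$ is compact in the $\ell_1$ topology, and that $\Lambda$ is jointly continuous in $(\xi, \kappa, \nu)$. For compactness, the coupling from the proof of Lemma~\ref{lem:ergodicity} stochastically dominates the second marginal of $\pi(\xi, \kappa)$ by the stationary distribution of an $\mminfty{(1-\gamma)\lambda}{\kappa}$ queue, which is Poisson with mean at most $\beta/(1-\gamma)$ uniformly over $\kappa \in [\beta\lambda(1-\gamma), \beta\lambda]$. This yields a uniform tail bound across $\Gamma$, hence tightness and $\ell_1$-compactness of $\overline{\Gamma}$ by Prokhorov's theorem on the discrete space $\sets$. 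For continuity of $\Lambda$, the entries of $\sQ^{\xi, \kappa}$ are linear in $\xi(z,n)$ and $\kappa$, so for each fixed $(z,n)$ the quantity $(\nu \sQ)(z,n)$ is a finite sum of neighboring entries of $\nu$ weighted by rates, giving pointwise continuity in $(\xi, \kappa, \nu)$ immediately. The summability issue is handled by the weight $1/(n+1)$ in $\Lambda$, which is chosen precisely to offset the linear-in-$n$ growth of the departure rate $\lambda n(1 - \gamma \xi(z,n))$, while uniform tightness of $\overline{\Gamma}$ controls the tail contribution via dominated convergence.

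With these ingredients in place, Berge's maximum theorem applied to $\max_{\nu \in \overline{\Gamma}} \Lambda(\xi, \kappa, \nu)$ with the constant constraint correspondence yields upper hemicontinuity of the argmax, which together with its single-valuedness upgrades to continuity of $(\xi, \kappa) \mapsto \pi(\xi, \kappa)$. Uniform continuity follows on any compact subdomain of interest, namely a compact subset of $\markovian$ (such as the set $\compact$ used later in Appendix~\ref{ap:compactness}) together with the compact interval $[\beta\lambda(1-\gamma), \beta\lambda]$ for $\kappa$. The main obstacle I anticipate is the joint continuity of $\Lambda$: the transition rates do not admit a global uniform bound on the infinite state space, so establishing the requisite dominated convergence requires carefully combining the $1/(n+1)$ weighting with the uniform Poisson tail estimate from the $M/M/\infty$ comparison, both to bound $|(\nu \sQ)(z,n)|$ uniformly over $\overline{\Gamma}$ and to justify interchange of limits as $(\xi, \kappa) \to (\xi_0, \kappa_0)$.
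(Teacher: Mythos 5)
Your proposal matches the paper's proof essentially step for step: the paper likewise characterizes $\pi(\xi,\kappa)$ as the unique maximizer of the same weighted functional $\Lambda$ over the compact set $\overline{\Gamma}$ (tightness via the $\mminfty{\lambda(1-\gamma)}{\beta\lambda}$ coupling and its Poisson tail, then Prohorov), and concludes by Berge's maximum theorem with a singleton argmax. The only real difference is that the paper proves joint continuity of $\Lambda$ by a direct Lipschitz computation --- bounding $|\Lambda(\xi_1,\kappa_1,\nu_1)-\Lambda(\xi_2,\kappa_2,\nu_2)|$ by constants times $|\kappa_1-\kappa_2|$, $\|\xi_1-\xi_2\|_\infty$, and $\|\nu_1-\nu_2\|_1$, with the $1/(n+1)$ weight absorbing the linear rate growth exactly as you note --- rather than your pointwise-continuity-plus-dominated-convergence route, which is also why the paper can assert uniform continuity on the full, non-compact domain $\markovian\times[\beta\lambda(1-\gamma),\beta\lambda]$ (used later in Lemma~\ref{lem:expec-agents-continuity}), whereas your argument as stated delivers uniformity only on compact subsets.
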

\begin{proof}
  In Lemma~\ref{lem:tightness}, we show that the set of distributions
  $\Gamma$ is uniformly tight. Then, from Prohorov's theorem
  \citep{billingsley2013}, we obtain that $\overline{\Gamma}$ is
  compact. Observe that
  \begin{align}
    \label{eq:maximizer}
    \underset{\nu \in \overline{\Gamma}}{\arg\max} ~\Lambda(\xi, \kappa, \nu) = \{\pi(\xi, \kappa)\}.
  \end{align}
  This follows from the fact that $\pi(\xi, \kappa)$ is the unique
  probability distribution over $\sets$ for which
  \eqref{eq:steady-state} holds.

  In Lemma~\ref{lem:kernel-continuity}, we show that
  $\Lambda(\xi, \kappa, \nu)$ is jointly (and uniformly) continuous
  its parameters for $\xi \in \markovian$, $\nu \in \overline{\Gamma}$
  and $\kappa \in [\beta \lambda (1-\gamma) , \beta \lambda]$. The
  result then follows from a direct application of Berge's maximum
  theorem \citep{berge1963topological} to \eqref{eq:maximizer}.
\end{proof}

The following two auxiliary lemmas are used in the proof of
Lemma~\ref{lem:stationary-dist-continuity}.
\begin{lemma}
    \label{lem:tightness}
    The set $\Gamma$ of invariant distributions is tight.
  \end{lemma}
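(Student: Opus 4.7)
The plan is to reduce tightness of $\Gamma$ to a uniform tail bound on the marginal over the number of agents, and then to obtain such a bound via a stochastic-dominance coupling with a single $M/M/\infty$ queue whose parameters depend only on $\beta$, $\lambda$, and $\gamma$ (not on $\xi$ or $\kappa$).

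First I would observe that because $\setZ$ is finite and $\sets = \setZ \times \naturals_0$ carries the discrete topology, a set $K \subseteq \sets$ is relatively compact (in fact, finite) if and only if its projection onto the $n$-coordinate is bounded. Hence tightness of $\Gamma$ is equivalent to the statement that for every $\epsilon > 0$ there exists an integer $M$ such that $\sum_{z \in \setZ} \sum_{n > M} \pi_{z,n}(\xi, \kappa) < \epsilon$ uniformly for $\xi \in \markovian$ and $\kappa \in [\beta\lambda(1-\gamma), \beta\lambda]$.

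Next I would couple the chain $(Z_t, N_t) \sim \markov(\xi, \kappa)$ with a pure birth-death process $\nagent{t}{1}$ corresponding to an $\mminfty{(1-\gamma)\lambda}{\beta\lambda}$ queue, started with $\nagent{0}{1} \geq N_0$. The key structural fact is that the per-agent departure rate in $\markov(\xi, \kappa)$ is $\lambda(1 - \gamma \xi(z,n)) \geq \lambda(1 - \gamma)$ uniformly in $\xi$ and $(z,n)$ (since $\xi(z,n) \in [0,1]$), while the arrival rate $\kappa$ is at most $\beta\lambda$. Using a common Poisson arrival stream of rate $\beta\lambda$ (splitting excess arrivals into the dominating chain alone), and per-customer service clocks of rate $\lambda(1-\gamma)$ that trigger departures in both chains simultaneously, one obtains $N_t \leq \nagent{t}{1}$ almost surely for all $t \geq 0$. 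This is exactly the coupling that already appears in the proof of Lemma~\ref{lem:ergodicity}.

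Finally I would pass to the stationary regime. The dominating $\mminfty{(1-\gamma)\lambda}{\beta\lambda}$ chain has an explicit Poisson invariant distribution with mean $\beta/(1-\gamma)$, and stochastic dominance is preserved under weak limits. Consequently, for every $\pi(\xi, \kappa) \in \Gamma$, the marginal on $n$ is stochastically dominated by a $\mathrm{Poisson}(\beta/(1-\gamma))$ random variable, so
\[
\sum_{z \in \setZ} \sum_{n > M} \pi_{z,n}(\xi, \kappa) \leq \prob\bigl[\mathrm{Poisson}(\beta/(1-\gamma)) > M\bigr],
\]
and the right-hand side is independent of $(\xi, \kappa)$ and vanishes as $M \to \infty$. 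Choosing $M$ large and $K_\epsilon = \setZ \times \{0, 1, \ldots, M\}$ yields the tightness. The only real obstacle is the stochastic domination step, but since the uniform bounds on arrival and service rates are strict and independent of $(\xi, \kappa)$ in the admissible range, invoking the coupling already established in Lemma~\ref{lem:ergodicity} (or Lemma~\ref{lem:coupling}) suffices.
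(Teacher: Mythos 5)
Your proof is correct and follows essentially the same route as the paper: both arguments couple $\markov(\xi,\kappa)$ with an $\mminfty{\lambda(1-\gamma)}{\beta\lambda}$ queue (via Lemma~\ref{lem:compare-mminfty-queue-arrival} and Lemma~\ref{lem:coupling}), pass the pathwise dominance $N_t \leq \widehat{N}_t$ to the stationary regime by ergodicity, and conclude with the uniform Poisson$\bigl(\beta/(1-\gamma)\bigr)$ tail bound, which is independent of $(\xi,\kappa)$. The only cosmetic difference is that you spell out the reduction of tightness to a uniform tail bound in $n$ (immediate since $\setZ$ is finite) and sketch the coupling construction rather than merely citing it.
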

  \begin{proof}

    We prove the lemma using a coupling argument. For any
    $\xi \in \markovian$ and
    $\kappa \in [\beta \lambda (1 - \gamma) , \beta \lambda]$, let
    $(Z_t, N_t) \sim \markov(\xi, \kappa)$, with $(Z_0, N_0) = (z, n)$
    for some $(z, n) \in \sets$. Let $\pi$ denote the invariant
    distribution of $\markov(\xi, \kappa)$. Independently, let
    $\widehat{N}_t \sim \mminfty{\lambda(1-\gamma)}{\beta\lambda}$
    with $\widehat{N}_0 = n$. Let $\widehat{\pi}$ denote the invariant
    distribution of $\mminfty{\lambda(1-\gamma)}{\beta\lambda}$; it is
    straightforward to show that $\widehat{\pi}$ is Poisson with mean
    $\beta/(1-\gamma)$.

    Using Lemma~\ref{lem:compare-mminfty-queue-arrival} and
    Lemma~\ref{lem:coupling}, we obtain that $N_t$
    is (first-order) stochastically dominated by $\widehat{N}_t$ for
    all $t \geq 0$. From this, we obtain (by taking limits and using
    ergodicity) that for all $k >0$, we have
    \begin{align*}
      \sum_{z \in \setZ} \sum_{n > k} \pi(z,n) \leq \sum_{n > k} \widehat{\pi}(n).
    \end{align*}
    For any $\epsilon > 0$, choose a $k^\epsilon > 0$ such that
    $\sum_{n > k^\epsilon} \widehat{\pi}(n) < \epsilon$. (Such a
    $k^\epsilon$ exists, given that $\widehat{\pi}$ is Poisson with
    finite mean.)  This implies that
    \begin{align*}
      \sum_{z \in \setZ} \sum_{n > k^\epsilon} \pi(z,n) < \epsilon, \quad \text{for all $\epsilon > 0$.}
    \end{align*}
    Since $k^\epsilon$ is independent of the choice of
    $(\xi, \kappa)$, we obtain that $\Gamma$ is tight.
\end{proof}
The following lemma proves the joint continuity of $\Lambda$.
\begin{lemma}
    \label{lem:kernel-continuity}
    The function $\Lambda$ as defined in \eqref{eq:f-define-1} is
    jointly (and uniformly) continuous.
\end{lemma}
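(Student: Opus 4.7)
The plan is to establish the joint (and uniform) continuity of $\Lambda$ by applying the triangle inequality to decouple the variation in $\nu$ from the variation in $(\xi,\kappa)$, and then using the $1/(n+1)$ weighting in the definition of $\Lambda$ to tame the linear-in-$n$ growth of the departure rates in the transition kernel $\sQ^{\xi,\kappa}$. Concretely, I would estimate
\[
|\Lambda(\xi_1,\kappa_1,\nu_1) - \Lambda(\xi_2,\kappa_2,\nu_2)|
\le \underbrace{\sum_{(z,n)\in\sets} \frac{|((\nu_1-\nu_2)\sQ^{\xi_1,\kappa_1})(z,n)|}{n+1}}_{(A)}
+ \underbrace{\sum_{(z,n)\in\sets} \frac{|(\nu_2(\sQ^{\xi_1,\kappa_1}-\sQ^{\xi_2,\kappa_2}))(z,n)|}{n+1}}_{(B)},
\]
and bound $(A)$ and $(B)$ separately.

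For term $(A)$, I would swap the order of summation to obtain the bound $\sum_{(y,m)}|\nu_1-\nu_2|(y,m)\cdot K(y,m)$ where $K(y,m) \defeq \sum_{(z,n)}\frac{|\sQ^{\xi_1,\kappa_1}((y,m)\to(z,n))|}{n+1}$. Inspecting \eqref{eq:loc-dyn}, the transitions out of $(y,m)$ go to $(x,m)$ with $x\neq y$ (rate $\mu_{y,x}$), to $(y,m+1)$ (rate $\kappa$), to $(y,m-1)$ (rate $\lambda m(1-\gamma\xi(y,m))$ when $m\ge 1$), and the self-loop term equal in magnitude to the sum of the outgoing rates. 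The crucial observation is that the term $\frac{\lambda m(1-\gamma\xi(y,m))}{m} \le \lambda$ and $\frac{\lambda m}{m+1}\le\lambda$, so the $1/(n+1)$ weighting converts the potentially unbounded departure rate into a bounded contribution. Combining these estimates yields $K(y,m)\le C$ for some constant $C$ depending only on $\lambda$, $\beta$, and $\{\mu_{z,y}\}$, hence $(A)\le C\|\nu_1-\nu_2\|_1$.

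For term $(B)$, I would again swap the order of summation. The kernel difference $\sQ^{\xi_1,\kappa_1}((y,m)\to(z,n)) - \sQ^{\xi_2,\kappa_2}((y,m)\to(z,n))$ is supported on the arrival and departure transitions (and their diagonal complement) and equals $(\kappa_1-\kappa_2)$ on the arrival part plus $\lambda m\gamma(\xi_2(y,m)-\xi_1(y,m))$ on the departure part. Weighting by $1/(n+1)$ and summing gives
\[
\sum_{(z,n)}\frac{|(\sQ^{\xi_1,\kappa_1}-\sQ^{\xi_2,\kappa_2})((y,m)\to(z,n))|}{n+1} \le 2|\kappa_1-\kappa_2| + 2\gamma\lambda\|\xi_1-\xi_2\|_\infty,
\]
uniformly in $(y,m)$. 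Since every $\nu\in\overline{\Gamma}$ is a probability measure (the $\ell_1$-closure of a tight family of probability measures consists of probability measures by Lemma~\ref{lem:tightness}), we conclude $(B)\le 2|\kappa_1-\kappa_2|\cdot\|\nu_2\|_1 + 2\gamma\lambda\|\xi_1-\xi_2\|_\infty\cdot\|\nu_2\|_1 \le 2(|\kappa_1-\kappa_2|+\gamma\lambda\|\xi_1-\xi_2\|_\infty)$.

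Combining the two bounds yields $|\Lambda(\xi_1,\kappa_1,\nu_1) - \Lambda(\xi_2,\kappa_2,\nu_2)| \le C\|\nu_1-\nu_2\|_1 + 2|\kappa_1-\kappa_2| + 2\gamma\lambda\|\xi_1-\xi_2\|_\infty$, which establishes uniform joint continuity on $\markovian\times[\beta\lambda(1-\gamma),\beta\lambda]\times\overline{\Gamma}$. The main technical subtlety I expect is correctly accounting for the self-loop (diagonal) term when estimating $K(y,m)$ and the kernel difference: although this term is unbounded (it grows linearly in $m$), it is exactly cancelled after the $1/(n+1)$ weighting because the self-loop sits at $n=m$, turning $\lambda m/(m+1)$ into something bounded. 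Verifying carefully that no $n$-dependent constant escapes the estimates in both $(A)$ and $(B)$ is the step where one must be most vigilant.
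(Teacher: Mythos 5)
Your proposal is correct and follows essentially the same argument as the paper's proof: a triangle-inequality split into a kernel-variation term and a measure-variation term (you merely interpolate in the opposite order, freezing $\sQ^{\xi_1,\kappa_1}$ against $\nu_1-\nu_2$ and $\nu_2$ against the kernel difference, where the paper freezes $\nu_1$ and $\sQ^{\xi_2,\kappa_2}$), followed by swapping the order of summation and using the $\tfrac{1}{n+1}$ weights to absorb the linearly growing departure and self-loop rates, yielding the same Lipschitz bound $C\|\nu_1-\nu_2\|_1 + 2|\kappa_1-\kappa_2| + 2\lambda\gamma\|\xi_1-\xi_2\|_\infty$ with $C$ depending only on $\lambda$, $\beta$, and $\{\mu_{z,y}\}$ via $\kappa \le \beta\lambda$. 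Your attention to the diagonal term and to $\|\nu\|_1 \le 1$ on $\overline{\Gamma}$ matches the paper's estimates exactly, so there is nothing to correct.
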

\begin{proof}
  Consider $\xi_i \in \markovian$,
  $\kappa_i \in [\beta\lambda(1-\gamma), \beta\lambda]$ and
  $\nu_i \in \overline{\Gamma}$ for $i=1,2$. We let $\sQ_i$ denote the
  transition kernel of $\markov(\xi_i, \kappa_i)$. We have
  \begin{align}
    \label{eq:triangle-split}
  &  | \Lambda(\xi_1, \kappa_1, \nu_1) - \Lambda(\xi_2, \kappa_2, \nu_2) | \notag \\
  &\leq | \Lambda(\xi_1, \kappa_1, \nu_1) - \Lambda(\xi_2, \kappa_2, \nu_1) |  + | \Lambda(\xi_2, \kappa_2, \nu_1) - \Lambda(\xi_2, \kappa_2, \nu_2) |.
\end{align}
Now, note that
\begin{align}
  \label{eq:first-term}
    & |\Lambda(\xi_1, \kappa_1, \nu_1) - \Lambda(\xi_2, \kappa_2, \nu_1) |\notag\\
  \leq & \sum_{(z,n) \in \sets} \frac{1}{n+1} \left| (\nu_1 \sQ_1)(z,n)  - (\nu_1 \sQ_2)(z,n)\right|\notag\\
  \leq & \sum_{(z,n) \in \sets} \sum_{(y,m) \in \sets} \frac{1}{n+1}  \nu_1(y,m) \left|  \sQ_1 ( (y,m) \to  (z,n) )  -  \sQ_2( (y,m) \to (z,n))\right|\notag\\
  \leq & \sum_{(y,m) \in \sets}   \nu_1(y,m) \sum_{(z,n) \in \sets}  \frac{1}{n+1} \left|  \sQ_1 ( (y,m) \to  (z,n) )  -  \sQ_2( (y,m) \to (z,n))\right|\notag\\
  \leq & \sup_{(y,m) \in \sets} \sum_{(z,n) \in \sets}  \frac{1}{n+1} \left|  \sQ_1 ( (y,m) \to  (z,n) )  -  \sQ_2( (y,m) \to (z,n))\right|.
\end{align}
Now, using \eqref{eq:loc-dyn}, we obtain that
\begin{align*}
  & \sQ_1((y,m) \to (z,n)) -  \sQ_2((y,m) \to (z,n)) \\
  = & \ind\{z=y, n=m+1\} (\kappa_1 - \kappa_2) + \ind\{z=y, n=m-1\} \lambda \gamma m (\xi_1(y,m) - \xi_2(y,m))\\
  & - \ind\{z=y,n=m\} \left( \kappa_1 - \kappa_2 + \lambda m \gamma (\xi_1(y,m) - \xi_2(y,m))\right),
\end{align*}
and hence
\begin{align*}
  &\sum_{(z,n) \in \sets}  \frac{1}{n+1} \left|  \sQ_1 ( (y,m) \to  (z,n) )  -  \sQ_2( (y,m) \to (z,n))\right|\\
  \leq  &\left(\frac{1}{m+2} + \frac{1}{m+1}\right) |\kappa_1 - \kappa_2| + \lambda \gamma \left(1+ \frac{m}{m+1}\right) | \xi_1(y,m) - \xi_2(y,m)|\\
  \leq & 2 \left(|\kappa_1 - \kappa_2| + \lambda \gamma  | \xi_1(y,m) - \xi_2(y,m)|\right).
\end{align*}
Thus, from \eqref{eq:first-term}, we obtain
\begin{align}
  \label{eq:bound-first-term}
 |\Lambda(\xi_1, \kappa_1, \nu_1) - \Lambda(\xi_2, \kappa_2, \nu_1) |  &\leq 2 |\kappa_1 - \kappa_2| + 2\lambda \gamma  \| \xi_1 - \xi_2\|_\infty .
\end{align}
Next, observe that
\begin{align}
  \label{eq:second-term}
  & | \Lambda(\xi_2, \kappa_2, \nu_1) - \Lambda(\xi_2, \kappa_2, \nu_2) | \notag \\
  \leq & \sum_{(z,n) \in \sets} \frac{1}{n+1} | (\nu_1\sQ_2) (z,n) - (\nu_2\sQ_2) (z,n)| \notag\\
  \leq & \sum_{(z,n) \in \sets} \frac{1}{n+1} \sum_{(y,m) \in \sets}\left|\sQ_2((y,m) \to (z,n))\right| | \nu_1(y,m) - \nu_2(y,m) | \notag\\
  \leq  & \sum_{(y,m) \in \sets} | \nu_1(y,m) - \nu_2(y,m) | \sum_{(z,n) \in \sets} \frac{1}{n+1} \left|\sQ_2((y,m) \to (z,n))\right|.
\end{align}
Now, again from \eqref{eq:loc-dyn} and after some straightforward
algebra, we obtain that
\begin{align*}
  &\sum_{(z,n) \in \sets} \frac{1}{n+1} \left|\sQ_2((y,m) \to (z,n))\right| \\
  \leq & \frac{2}{m+1}\sum_{z \neq y} \mu_{y,z} + \kappa_2 \left(\frac{1}{m+2} + \frac{1}{m+1}\right) + \lambda (1 - \gamma \xi_2(y,m)) \left(1 + \frac{m}{m+1}\right)\\
  \leq & \sum_{z \neq y} \mu_{y,z} + 2 \kappa_2 + 2 \lambda \\
  \leq & \max_{y \in \setZ} \sum_{z \neq y} \mu_{y,z} + 2 (\beta +1) \lambda,
\end{align*}
where we have used the fact that $\kappa_2 \leq \beta \lambda$ in the
last inequality. Thus, we from \eqref{eq:second-term}, we obtain
\begin{align}
  \label{eq:bound-second-term}
  | \Lambda(\xi_2, \kappa_2, \nu_1) - \Lambda(\xi_2, \kappa_2, \nu_2) | &\leq   \left(\max_{y \in \setZ} \sum_{z \neq y} \mu_{y,z} + 2 (\beta +1) \lambda \right) \| \nu_1 - \nu_2\|_1.
\end{align}
Therefore, combining \eqref{eq:triangle-split},
\eqref{eq:bound-first-term} and \eqref{eq:bound-second-term}, we
obtain
\begin{align*}
  &  | \Lambda(\xi_1, \kappa_1, \nu_1) - \Lambda(\xi_2, \kappa_2, \nu_2) | \\
\leq &  2 |\kappa_1 - \kappa_2| + 2\lambda \gamma  \| \xi_1 - \xi_2\|_\infty  +   \left(\max_{y \in \setZ} \sum_{z \neq y} \mu_{y,z} + 2 (\beta +1) \lambda \right) \| \nu_1 - \nu_2\|_1.
\end{align*}
Thus, $\Lambda$ is Lipschitz, and hence jointly and (uniformly)
continuous in its parameters.
\end{proof}

\section{Existence of $\kappa$ satisfying equilibrium condition}
\label{ap:existence-kappa}

In this section we show for any Markovian strategy $\xi$, there exists
a unique arrival rate
$\kappa \in [\beta\lambda(1-\gamma), \beta\lambda]$ for which the
steady state distribution $\pi$ of the Markov chain
$\markov(\xi, \kappa)$ satisfies the equation
\eqref{eq:expectation-steady}.

Towards that goal, for any Markovian strategy $\xi$ and arrival rate
$\kappa>0$, define
\[\phi(\xi, \kappa) \defeq \sum_{(z, n) \in \sets} n \pi^{\xi,\kappa}(z,n) \]
where $\pi^{\xi,\kappa}$ is the unique steady state distribution of
$\markov(\xi, \kappa)$. We seek to show that there exists a
$\kappa \in [\beta\lambda(1-\gamma), \beta\lambda]$ such that
$\phi(\xi, \kappa) = \beta$. We prove this result using intermediate
value theorem. First, we show that $\phi(\xi, \kappa)$ is a strictly
increasing function of $\kappa$ for any given $\xi \in
\markovian$. Second, we show
$\phi(\xi, \beta\lambda(1-\beta)) \leq \beta$ and
$\phi(\xi, \beta\lambda) \geq \beta$, which implies any $\kappa$ such
that $\phi(\xi,\kappa)=\beta$ must lie in
$[\beta\lambda(1-\gamma), \beta\lambda]$. The result then follows once
we show $\phi(\xi, \kappa)$ is a continuous function of $\kappa$.

In the rest of this section, we assume that the strategy $\xi$ is
fixed, and drop the explicit dependence on $\xi$ from notation
wherever convenient.  We now proceed with the first-step.

\subsection{Strict monotonicity of $\phi(\cdot)$}
\label{ap:kappa-increasing}

\begin{lemma}
    \label{lem:expec-agents-increasing}
    Given any Markovian strategy $\xi$, $\phi(\kappa)$ is a strictly
    increasing function of $\kappa$ on
    $[\beta\lambda(1-\gamma), \beta\lambda]$.
\end{lemma}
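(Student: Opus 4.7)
Fix $\xi \in \markovian$ and $\kappa_1, \kappa_2$ with
$\beta\lambda(1-\gamma) \leq \kappa_1 < \kappa_2 \leq \beta\lambda$.
The plan is to construct a monotone Markovian coupling of
$\markov(\xi,\kappa_1)$ and $\markov(\xi,\kappa_2)$ on the joint state space
$\widetilde{\sets} = \{(z, n_1, n_2) \in \setZ \times \naturals_0^2
: n_1 \leq n_2\}$ that preserves the order
$\nagent{t}{1} \leq \nagent{t}{2}$ almost surely for all $t \geq 0$,
and then to use it to conclude $\phi(\xi,\kappa_1) < \phi(\xi,\kappa_2)$.

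The coupling shares the resource process (whose dynamics depend
on neither $\xi$ nor $\kappa$) and thins arrivals, so every
arrival to chain 2 (rate $\kappa_2$) is independently accepted by
chain 1 with probability $\kappa_1/\kappa_2$. Departures are
coupled by a case-split on the state-dependent rates
$R_i = \lambda n_i (1 - \gamma\xi(z, n_i))$: when $R_1 \leq R_2$, a
joint departure occurs at rate $R_1$ and a chain-2-only departure at
rate $R_2 - R_1$; when $R_1 > R_2$, a joint departure occurs at rate
$R_2$ and a chain-1-only departure at rate $R_1 - R_2$. Each such
transition, together with the joint and chain-2-only arrivals,
preserves $n_1 \leq n_2$; at the diagonal $n_1 = n_2$ one has
$R_1 = R_2$, so the chain-2-only departure rate vanishes and the
order is maintained. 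Marginalizing the coupled generator recovers
\eqref{eq:loc-dyn} for both chains, so the coupling is valid.

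Positive recurrence of the coupled chain on the irreducible
class containing a reference state $(z_0, 0, 0)$ (for any fixed
$z_0 \in \setZ$) follows from Lemma~\ref{lem:ergodicity} together
with domination by an $\mminfty{\lambda(1-\gamma)}{\beta\lambda}$
queue as in Appendix~\ref{ap:existence-stationary-distn}. Its
unique stationary distribution $\tilde{\pi}$ has marginals
$\pi(\xi, \kappa_1)$ and $\pi(\xi, \kappa_2)$, and the almost-sure
order preservation immediately yields the weak inequality
$\phi(\xi,\kappa_1) \leq \phi(\xi,\kappa_2)$. For strict
monotonicity I invoke a regenerative argument at returns to
$(z_0, 0, 0)$: from that state, a chain-2-only arrival fires at
rate $\kappa_2 - \kappa_1 > 0$, so with strictly positive probability
it is the first event of the cycle, moving the coupled chain to
$(z_0, 0, 1)$, after which a strictly positive expected amount of
time elapses with $\nagent{s}{2} > \nagent{s}{1}$ before the next
return. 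Hence the expected cycle integral of
$\nagent{s}{2} - \nagent{s}{1}$ is strictly positive, and the
renewal reward theorem gives
$\phi(\xi,\kappa_2) - \phi(\xi,\kappa_1) > 0$.

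The main subtlety is the departure coupling when $R_1 > R_2$,
which can occur because $\xi(z,\cdot)$ is not assumed monotone in
$n$, so the naive ``a chain-2 departure forces a chain-1
departure'' rule would violate the order. The fix is to let chain
1 depart alone at the excess rate $R_1 - R_2$, which is
order-preserving because decreasing $n_1$ while holding $n_2$ fixed
keeps $n_1 \leq n_2$.
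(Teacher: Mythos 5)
Your proposal is correct, and while the first half coincides with the paper, your strictness argument takes a genuinely different route. For the weak inequality $\phi(\kappa_1) \leq \phi(\kappa_2)$, the paper does exactly what you do: it invokes the monotone coupling of $\markov(\xi,\kappa_1)$ and $\markov(\xi,\kappa_2)$ (its Lemma~\thref{lem:coupling}, whose construction already contains your fix for non-monotone $\xi(z,\cdot)$ --- the paper's correction term $\eta_k$ is precisely your chain-1-only departure at the excess rate $R_1 - R_2$) and passes to ergodic time averages; since here $\xi_0 = \xi$ and $\kappa_0 = \kappa_2 \geq \kappa_1$, you could have cited that lemma rather than rebuilding the coupling. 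Where you diverge is strictness. The paper argues by contradiction: if $\phi(\kappa_1) = \phi(\kappa_2)$, the coupling forces the tail-sum inequalities $\sum_{n' \geq n} \pi_1(z,n') \leq \sum_{n' \geq n}\pi_2(z,n')$ to be equalities, hence $\pi_1 = \pi_2$; substituting into the stationary balance equations then yields $\kappa_1(\pi(z,n) - \pi(z,n-1)) = \kappa_2(\pi(z,n) - \pi(z,n-1))$, so $\pi(z,\cdot)$ is constant in $n$ and cannot be a probability distribution. Your regenerative argument instead quantifies the gap directly: a chain-2-only arrival at rate $\kappa_2 - \kappa_1 > 0$ from the reference state $(z_0,0,0)$ produces a cycle on which $\int (\nagent{s}{2} - \nagent{s}{1})\,ds$ has strictly positive expectation, and renewal reward converts this to $\phi(\kappa_2) > \phi(\kappa_1)$. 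The paper's route is shorter given its machinery, needing only the marginal invariant distributions and elementary algebra on \eqref{eq:steady-state}; yours is more robust and informative (it gives a positive lower bound on the gap and would survive perturbations of the balance-equation structure), at the cost of having to establish positive recurrence of the bivariate coupled chain, which you assert somewhat loosely --- it does follow as you say, by noting that when the dominating $\mminfty{\lambda(1-\gamma)}{\beta\lambda}$ queue from Appendix~\ref{ap:existence-stationary-distn} empties, both coordinates are zero (since $n_1 \leq n_2$), but that step deserves to be spelled out rather than attributed to Lemma~\thref{lem:ergodicity} alone, which concerns a single chain. Your observation that the marginal rates of the coupled generator collapse to $\kappa_i$ and $R_i$, so each coordinate is genuinely Markov with kernel $\sQ^{\xi,\kappa_i}$, is the right validity check and is correct.
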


\begin{proof}
  For any
  $\kappa_1, \kappa_2 \in [\beta\lambda(1-\gamma), \beta\lambda]$ with
  $\kappa_1 < \kappa_2$, consider two coupled chains
  $\lstate{t}{i} \sim \markov(\xi, \kappa_i)$ for $i=1,2$, as in the
  proof of Lemma~\ref{lem:coupling}, where
  $\zstate{t}{1} = \zstate{t}{2}$ and
  $\nagent{t}{1} \leq \nagent{t}{2}$ for all $t \geq 0$.
  For $i=1,2$, we have
  \[ \frac{1}{t} \int_0^t \nagent{s}{i} ds \rightarrow \sum_{z, n}
    n\pi_i(z, n) = \phi(\kappa_i) \] almost surely as
  $t \rightarrow \infty$, where we write $\pi_i$ for
  $\pi^{\kappa_i}$. Since $\nagent{t}{1} \leq \nagent{t}{2}$ for all
  $t$, we have $\phi(\kappa_1) \leq \phi(\kappa_2)$.

  Next, suppose for the sake of contradiction that
  $\phi(\kappa_1) = \phi(\kappa_2)$. Since
  $\zstate{t}{1} = \zstate{t}{2}$ and
  $\nagent{t}{1} \leq \nagent{t}{2}$ for all $t \geq 0$, we have
\begin{align}
\label{eqn:dominance-ind}
 \ind\{\zstate{t}{1}=z, \nagent{t}{1} \geq n \} \leq \ind\{\zstate{t}{2}=z, \nagent{t}{2} \geq n \}, 
\end{align}
for all $(z,n) \in \sets$, $t \geq 0$.

For any $(z,n) \in \sets$, we have
\begin{align}
\label{eqn:slln-ind}
 \frac{1}{t} \int_0^t \ind\{\zstate{s}{i}=z, \nagent{s}{i} \geq n \}ds  \rightarrow \sum_{n' \geq n} \pi_i(z,n')
\end{align}
almost surely as $t \rightarrow \infty$, for $i=1,2$. By
\eqref{eqn:dominance-ind} and \eqref{eqn:slln-ind} we have
\[ \sum_{n' \geq n} \pi_1(z, n') \leq \sum_{n' \geq n} \pi_2(z, n'), \]
for any $(z,n)$, and
\[ \phi(\kappa_1) = \sum_{n \geq 0} n \left( \sum_{z \in \setZ}
    \pi_1(z, n) \right) = \sum_{n \geq 0} \sum_{z \in \setZ, n' > n}
  \pi_1(z, n') \leq \sum_{n \geq 0} \sum_{z \in \setZ, n' > n}
  \pi_2(z, n') = \phi(\kappa_2). \] Since by our assumption
$\phi(\kappa_1) = \phi(\kappa_2)$, the inequality in the preceding
equation is actually an equality. This implies
\[ \sum_{n' \geq n} \pi_1(z, n') = \sum_{n' \geq n} \pi_2(z, n'), \]
for all $(z,n)$, which further implies that $\pi_1$ and $\pi_2$ are
the same distribution.

For $i=1,2$, the equation \eqref{eq:steady-state} implies
\begin{align*}
  & \pi_i(z,n)\left(\kappa_i + \sum_{y \ne z} \mu_{z,y} + \lambda n(1-\gamma\xi(z,n))\right)  \\
  = & \pi_i(z, n-1)\kappa_i + \sum_{y \ne z}\mu_{y, z}\pi_i(y,n) + \pi_i(z, n+1)\lambda(n+1)(1 - \gamma\xi(z, n+1)),
\end{align*}  
which leads to
\begin{align}
\label{eq:balance-eqn}
  & \kappa_i(\pi_i(z,n) - \pi_i(z, n-1)) \notag \\
  = &  \sum_{y \ne z}\mu_{y, z}\pi_i(y,n) + \pi_i(z, n+1)\lambda(n+1)(1 - \gamma\xi(z, n+1))  \notag \\
  & - \pi_i(z,n) \left(\sum_{y \ne z} \mu_{z,y} + \lambda n(1-\gamma\xi(z,n))\right).
\end{align}  
Since $\pi_1 = \pi_2$, the right hand side of \eqref{eq:balance-eqn}
is the same for $i=1,2$, hence we have
\[ \kappa_1(\pi_1(z,n) - \pi_1(z,n-1)) = \kappa_2(\pi_2(z,n) - \pi_2(z,n-1)). \]
But $\kappa_1 < \kappa_2$ and $\pi_1 = \pi_2$ implies that for $i=1,2$, $\pi_i(z,n) = \pi_i(z,n-1)$ for all $(z,n)$, hence $\pi_i$ cannot be a probability distribution over $\sets$, and this contradiction completes the proof.
\end{proof}

\subsection{Bounds for $\phi(\cdot)$}

In this section, we provide bounds on the function $\phi(\kappa)$ for
any $\kappa>0$. These bounds immediately imply that for
$\kappa = \beta \lambda$, $\phi(\kappa) \geq \beta$, and for
$\kappa = \beta \lambda (1- \gamma)$, $\phi(\kappa) \leq
\beta$. Together with Lemma~\ref{lem:expec-agents-increasing}, this
implies that any $\kappa$ for which $\phi(\kappa) = \beta$ must lie in
the interval $[\beta \lambda (1- \gamma), \beta \lambda]$.

\begin{lemma}
   \label{lem:bound-expec-agents}
   For any Markovian strategy $\xi$ and arriving rate $\kappa \geq 0$,
   $\phi(\kappa)$ satisfies
   \[\frac{\kappa}{\lambda} \leq \phi(\kappa) \leq
     \frac{\kappa}{\lambda(1-\gamma)}.\]
 \end{lemma}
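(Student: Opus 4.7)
The plan is to use a monotone coupling argument to sandwich the process $N_t \sim \markov(\xi, \kappa)$ between two $M/M/\infty$ queues whose stationary means are explicit. The key observation is that, uniformly in $(z,n) \in \sets$, the per-agent departure rate $\lambda(1 - \gamma \xi(z,n))$ takes values in $[\lambda(1-\gamma), \lambda]$, since $\xi(z,n) \in [0,1]$. Intuitively, this means $N_t$ should be stochastically dominated by an $\mminfty{\lambda(1-\gamma)}{\kappa}$ queue (slowest service, longest queue) and should stochastically dominate an $\mminfty{\lambda}{\kappa}$ queue (fastest service, shortest queue).

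First, mirroring the coupling construction already developed in Lemma~\ref{lem:coupling}, I would build on a single probability space the process $(Z_t, N_t)$ together with two auxiliary processes $\nagent{t}{0} \sim \mminfty{\lambda}{\kappa}$ and $\nagent{t}{1} \sim \mminfty{\lambda(1-\gamma)}{\kappa}$, all started from $N_0$, sharing the same Poisson arrival stream (so arrivals are synchronized across all three processes) and using a compatible assignment of departure marks so that pathwise $\nagent{t}{0} \leq N_t \leq \nagent{t}{1}$ for all $t \geq 0$. Concretely, at any time $t$ when $N_t = n$, the departure rate $\lambda n(1-\gamma\xi(Z_t,n))$ can be split as a ``guaranteed'' component $\lambda n (1-\gamma)$ aligned with departures of the slow queue $\nagent{t}{1}$, plus an additional state-dependent component $\lambda n \gamma (1 - \xi(Z_t,n))$ which, together with the guaranteed component, totals at most $\lambda n$ and can thus be aligned with departures of the fast queue $\nagent{t}{0}$; this is exactly the thinning/superposition idea already used in the earlier coupling lemma.

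Second, I would pass to the steady state. Both $M/M/\infty$ queues are ergodic with Poisson stationary distributions of means $\kappa/\lambda$ and $\kappa/(\lambda(1-\gamma))$ respectively, and $\markov(\xi,\kappa)$ is ergodic by Lemma~\ref{lem:ergodicity}. Invoking the law of large numbers for each of the three processes on their time-averages (or, equivalently, letting $t \to \infty$ under the pathwise ordering and applying monotone/dominated convergence, with the dominating $\mminfty{\lambda(1-\gamma)}{\kappa}$ queue providing an integrable majorant), the pathwise inequality $\nagent{t}{0} \leq N_t \leq \nagent{t}{1}$ yields
\[
  \frac{\kappa}{\lambda} \;=\; \expec[\nagent{\infty}{0}] \;\leq\; \phi(\kappa) \;\leq\; \expec[\nagent{\infty}{1}] \;=\; \frac{\kappa}{\lambda(1-\gamma)}.
\]

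The main obstacle is the three-way coupling itself: the middle process has a departure intensity that depends on $(Z_t, N_t)$, whereas the two bounding processes have state-independent intensities. The layered thinning above resolves this, because it ensures each process has the correct marginal rate at every instant while preserving the sandwich $\nagent{t}{0} \leq N_t \leq \nagent{t}{1}$ across all jump epochs — the only events that can violate the ordering are arrivals (which are synchronized, hence benign) and departures (which by construction fire in the middle process only when permitted by the bounding processes). Once the coupling is in place, the bounds follow directly from the known Poisson stationary means of $M/M/\infty$ queues.
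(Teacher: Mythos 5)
Your proposal is correct and takes essentially the same route as the paper: the paper's proof likewise sandwiches $N_t$ between $\mminfty{\lambda}{\kappa}$ and $\mminfty{(1-\gamma)\lambda}{\kappa}$ queues using the coupling of Lemma~\ref{lem:coupling} (whose construction rests on exactly your observation that the per-agent departure rate $\lambda(1-\gamma\xi(z,n))$ lies in $[\lambda(1-\gamma),\lambda]$), and then passes to stationarity via almost-sure time averages and the Poisson stationary means $\kappa/\lambda$ and $\kappa/(\lambda(1-\gamma))$. The only cosmetic difference is that you assemble a single three-way coupling on one probability space, whereas the paper invokes two pairwise dominance statements $\nagent{t}{1} \sdleq N_t \sdleq \nagent{t}{2}$; the substance is identical.
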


 \begin{proof}
  	Let $(Z_t, N_t) \sim \markov(\xi, \kappa)$, and $(Z_0, N_0) = (z,n)$. Denote as $\mminfty{\lambda}{\kappa}$ an (independent) $M/M/\infty$ queue with arrival rate $\kappa$ and service rate $\lambda$, and let $\nagent{t}{i} \sim \mminfty{\lambda_i}{\kappa}$ for $i = 1, 2$ be two independent processes with $\nagent{0}{1} = \nagent{0}{2} = n$,
  where $\lambda_1 = \lambda$ and $\lambda_2 =
  (1-\gamma)\lambda$. 
   	
  Let $\pi^\kappa$ be the steady state distribution of
  $\markov(\xi, \kappa)$, and $\pi_i$ be the steady state distribution
  of $\mminfty{\lambda_i}{\kappa}$ for $i=1,2$. We have
  \begin{align*}
    \frac{1}{t}\int_0^t N_s ds \rightarrow  \sum_{z,n} n\pi^\kappa(z,n),
  \end{align*}
  and 
  \begin{align*}
    \frac{1}{t}\int_0^t \nagent{s}{i} ds \rightarrow \sum_{n} n\pi_i(n), \quad i=1,2
  \end{align*}
  almost surely as $t\rightarrow \infty$. From Lemma
  \ref{lem:coupling}, we have
  $\nagent{t}{1} \sdleq N_t \sdleq \nagent{t}{2}$ for all $t\geq 0$,
  therefore we have
  \[ \sum_{n} n\pi_1(n) \leq \sum_{z,n} n\pi^\kappa(z,n) \leq
    \sum_{n}n \pi_2(n).\] The result then follows from the fact that
  for $i=1,2$, $\pi_i$ is Poisson distribution with mean
  $\kappa/\lambda_i$.
\end{proof}

\subsection{Continuity of $\phi(\cdot)$}
\label{ap:parametric-continuity}

Observe that the existence of a
$\kappa \in [\beta \lambda (1-\gamma) , \beta \lambda]$ such that
$\phi(\kappa) = \beta$ would follow immediately once we prove the
continuity of $\phi(\cdot)$ in $\kappa$ for any fixed Markovian
strategy $\xi$. In this section, we prove a stronger statement, namely
that $\phi(\xi, \kappa)$ is jointly continuous in $(\xi, \kappa)$.


\begin{lemma}
  \label{lem:expec-agents-continuity}
  The map $\phi(\xi, \kappa)$ is jointly and uniformly continuous in
  $(\xi, \kappa)$ for Markovian $\xi$ and for
  $\kappa \in [\beta\lambda(1-\gamma), \beta\lambda]$.
\end{lemma}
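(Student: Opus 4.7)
The plan is to deduce joint uniform continuity of $\phi$ from the joint uniform continuity of $\pi(\xi,\kappa)$ in the $\ell_1$-norm (Lemma~\ref{lem:stationary-dist-continuity}). The obstacle is that $\phi(\xi,\kappa) = \sum_{(z,n)} n\, \pi^{\xi,\kappa}(z,n)$ integrates the unbounded function $n$ against $\pi$, so $\ell_1$-continuity of $\pi$ does not immediately transfer to continuity of $\phi$. The key is to establish uniform integrability of the family $\{\pi^{\xi,\kappa}\}_{(\xi,\kappa)}$ with respect to the function $n$.

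I would first obtain the uniform tail bound via the coupling used in Lemma~\ref{lem:tightness}. Namely, if $\widehat{N} \sim \widehat{\pi}$ is Poisson with mean $\beta/(1-\gamma)$, then stochastic dominance gives $\prob^{\pi^{\xi,\kappa}}(N \geq k) \leq \prob(\widehat{N}\geq k)$ for every $k$ uniformly in $(\xi,\kappa)$. Applying this to the non-decreasing function $n \mapsto n\,\ind\{n>K\}$ yields
\begin{equation*}
  \sum_{z\in\setZ}\sum_{n>K} n\,\pi^{\xi,\kappa}(z,n) \;\leq\; \sum_{n>K} n\, \widehat{\pi}(n),
\end{equation*}
and since $\widehat{\pi}$ has finite mean, the right-hand side tends to $0$ as $K\to\infty$, giving the required uniform tail control.

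Given $\epsilon>0$, I would choose $K$ so that the tail above is at most $\epsilon/3$. For $(\xi_i,\kappa_i)$, $i=1,2$, writing $\pi_i = \pi(\xi_i,\kappa_i)$, I would then split
\begin{equation*}
  |\phi(\xi_1,\kappa_1) - \phi(\xi_2,\kappa_2)| \leq \sum_{z}\sum_{n\leq K} n\,|\pi_1(z,n)-\pi_2(z,n)| + \sum_{z}\sum_{n>K} n\,\pi_1(z,n) + \sum_{z}\sum_{n>K} n\,\pi_2(z,n).
\end{equation*}
The last two terms are each at most $\epsilon/3$ by the choice of $K$, while the first term is bounded by $K\,\|\pi_1 - \pi_2\|_1$. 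By Lemma~\ref{lem:stationary-dist-continuity}, there exists $\delta>0$ (depending only on $K$, hence only on $\epsilon$) such that $\|\xi_1-\xi_2\|_\infty + |\kappa_1-\kappa_2| < \delta$ implies $\|\pi_1 - \pi_2\|_1 < \epsilon/(3K)$. Combining these bounds yields $|\phi(\xi_1,\kappa_1) - \phi(\xi_2,\kappa_2)| < \epsilon$, and since $\delta$ depends only on $\epsilon$ (not on the base point), this gives joint uniform continuity. The main delicate point is the tail estimate: it is crucial that the dominating Poisson distribution does not depend on $(\xi,\kappa)$, which is precisely what the coupling with the $\mminfty{\lambda(1-\gamma)}{\beta\lambda}$ queue affords.
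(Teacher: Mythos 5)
Your proposal is correct and takes essentially the same route as the paper's proof: the identical three-way truncation decomposition with the head term bounded by $K\,\|\pi_1-\pi_2\|_1$, the uniform tail bound $\sum_{z\in\setZ}\sum_{n>K} n\,\pi^{\xi,\kappa}(z,n) \leq \sum_{n>K} n\,\widehat{\pi}(n)$ obtained from the coupling with the $\mminfty{\lambda(1-\gamma)}{\beta\lambda}$ queue (Lemmas~\ref{lem:compare-mminfty-queue-arrival} and \ref{lem:coupling}), and the final appeal to Lemma~\ref{lem:stationary-dist-continuity}. The only differences are cosmetic (your $\epsilon/3$ split versus the paper's $\epsilon/4$ and $\epsilon/2k$), and your explicit remark that the dominating Poisson law is independent of $(\xi,\kappa)$ is exactly the point the paper relies on.
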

\begin{proof}
  Given Markovian strategies $\xi_1$ and $\xi_2$, and arriving rates
  $\kappa_1$, $\kappa_2 \in [\beta\lambda(1-\gamma), \beta\lambda]$ ,
  let $\pi_i$ be the steady state distribution of
  $\markov(\xi_i, \kappa_i)$, for $i=1,2$. We have, for any arbitrary
  $k \geq 0$,
  \begin{equation}
    \label{eq:bound-phi-difference}
    \begin{split}
      &\left| \phi(\xi_1, \kappa_1) - \phi(\xi_2, \kappa_2) \right|\\
       = & \left| \sum_{z, n} n(\pi_1(z,n) - \pi_2(z,n))\right| \\
      \leq & \left|\sum_{z\in\setZ} \sum_{ n \leq k} n(\pi_1(z,n) -\pi_2(z,n) )\right| + \left|\sum_{z\in\setZ} \sum_{ n > k} n(\pi_1(z,n) -\pi_2(z,n) )\right| \\
       \leq & \left|\sum_{z\in\setZ}\sum_{ n \leq k} n(\pi_1(z,n)
        -\pi_2(z,n) )\right| + \sum_{z\in\setZ} \sum_{ n >k}
      n\pi_1(z,n) + \sum_{z\in\setZ, n > k} n \pi_2(z,n).
  \end{split}
\end{equation}
Now, bounding the first term, we obtain
\begin{align}
\label{eq:bound-first-terms}
 &  \left|\sum_{z \in \setZ}\sum_{ n \leq k} n(\pi_1(z,n) -\pi_2(z,n))\right|  \notag\\
  \leq & k \sum_{z \in \setZ}\sum_{ n \leq k} \left| \pi_1(z,n) -\pi_2(z,n) \right| \leq k \| \pi_1 - \pi_2\|_1.
\end{align}
To bound the other terms, we use a coupling argument. Let
$ (\zstate{t}{i}, \nagent{t}{i}) \sim \markov(\xi_i, \kappa_i)$ with
$(\zstate{0}{i}, \nagent{0}{i}) = (z, n)$ for $i=1,2$. Let
$\widehat{N}_t \sim \mminfty{\lambda(1-\gamma)}{\beta\lambda}$, with
$\widehat{N}_0 = n$, denote the number of agents in an (independent)
$M/M/\infty$ queue at time with arrival rate $\beta\lambda$ and
service rate $\lambda(1-\gamma)$.  Let $\widehat{\pi}$ denote the
steady state distribution of $\widehat{N}_t$. By
Lemma~\ref{lem:compare-mminfty-queue-arrival} and
Lemma~\ref{lem:coupling}, we have
$\nagent{t}{i} \sdleq \widehat{N}_t$ for all $t \geq 0$ and for each
$i=1,2$. From this stochastic dominance, it is straightforward to
obtain that
\begin{align}
  \label{eq:bound-tail-prob}
  \sum_{z \in \setZ, n > k} n \pi_i(z,n) \leq \sum_{n > k} n \widehat{\pi}(n), \quad i=1,2.
\end{align}

Thus, from\eqref{eq:bound-phi-difference}, \eqref{eq:bound-tail-prob}
and \eqref{eq:bound-first-terms}, we have
\begin{equation*}
|\phi(\xi_1, \kappa_1) - \phi(\xi_2, \kappa_2)| \leq k \|\pi_1 -
\pi_2\|_1 + 2\sum_{n > k} n \widehat{\pi}(n).
\end{equation*}
Now, for any $\epsilon > 0$, choose $k$ such that
$\sum_{n > k} n \widehat{\pi}(n) < \epsilon/4$. (Note that this choice
of $k$ is independent of $(\xi_i, \kappa_i)$ and depends only on the
steady state $\widehat{\pi}$ of
$\mminfty{\lambda(1-\gamma)}{\beta \lambda}$, which is Poisson with
mean $\beta/(1-\gamma)$.) Second, from
Lemma~\ref{lem:stationary-dist-continuity}, we obtain that for any
$\epsilon>0$, there exists a $\delta>0$ such that for all
$(\xi_1, \kappa_1)$ and $(\xi_2, \kappa_2)$ such that
$\|\xi_1 - \xi_2\|_\infty < \delta$ and $|\kappa_1-\kappa_2| < \delta$, we have
$\| \pi_1 - \pi_2\|_1 < \epsilon/2k$. Taken together, we obtain that
for any $\epsilon>0$, there exists a $\delta>0$ such that for all
$(\xi_1, \kappa_1)$ and $(\xi_2, \kappa_2)$ such that
$\|\xi_1- \xi_2\| < \delta$ and $|\kappa_1-\kappa_2| < \delta$, we have
$|\phi(\xi_1, \kappa_1) - \phi(\xi_2, \kappa_2)| < \epsilon$. Thus, we
obtain that $\phi(\cdot)$ is jointly and uniformly continuous.
\end{proof}

\subsection{Continuity}

For any $\xi \in \markovian$, let $\kappa(\xi)$ denote the unique
value of $\kappa$ for which $\pi(\xi, \kappa)$ satisfies
\eqref{eq:expectation-steady}. Below, we show that $\kappa(\xi)$ is a
continuous function of $\xi$.

\begin{lemma}
    \label{lem:continuity-kappa}
    The map $\xi \mapsto \kappa(\xi)$ is continuous.
\end{lemma}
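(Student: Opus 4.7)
My plan is to prove continuity via a standard subsequence-extraction argument, leveraging the two main tools already established: the joint continuity of $\phi(\xi,\kappa)$ from Lemma~\ref{lem:expec-agents-continuity}, and the strict monotonicity of $\kappa \mapsto \phi(\xi,\kappa)$ from Lemma~\ref{lem:expec-agents-increasing} (together with the bounds in Lemma~\ref{lem:bound-expec-agents} that ensure $\kappa(\xi)$ lies in the compact interval $[\beta\lambda(1-\gamma), \beta\lambda]$). The map $\kappa(\xi)$ is well-defined and single-valued by these lemmas combined with the intermediate value theorem.

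The argument proceeds by contradiction. Suppose $\xi_n \to \xi$ in $\markovian$ (with respect to the sup-norm) but $\kappa(\xi_n) \not\to \kappa(\xi)$. Then there exists $\epsilon > 0$ and a subsequence $\{\xi_{n_k}\}$ such that $|\kappa(\xi_{n_k}) - \kappa(\xi)| \geq \epsilon$ for all $k$. Since each $\kappa(\xi_{n_k})$ lies in the compact interval $[\beta\lambda(1-\gamma), \beta\lambda]$, by the Bolzano--Weierstrass theorem I can extract a further subsequence (which I relabel as $\{\kappa(\xi_{n_k})\}$) that converges to some limit $\kappa^\star \in [\beta\lambda(1-\gamma), \beta\lambda]$ with $|\kappa^\star - \kappa(\xi)| \geq \epsilon$, so in particular $\kappa^\star \neq \kappa(\xi)$.

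By definition of $\kappa(\xi_{n_k})$, we have $\phi(\xi_{n_k}, \kappa(\xi_{n_k})) = \beta$ for each $k$. Invoking the joint continuity of $\phi$ established in Lemma~\ref{lem:expec-agents-continuity}, taking $k \to \infty$ on both sides yields
\begin{equation*}
\phi(\xi, \kappa^\star) = \lim_{k \to \infty} \phi(\xi_{n_k}, \kappa(\xi_{n_k})) = \beta.
\end{equation*}
However, by Lemma~\ref{lem:expec-agents-increasing}, the map $\kappa \mapsto \phi(\xi,\kappa)$ is strictly increasing on $[\beta\lambda(1-\gamma), \beta\lambda]$, so there is at most one value of $\kappa$ in this interval with $\phi(\xi,\kappa) = \beta$. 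Since $\kappa(\xi)$ is such a value by definition, we must have $\kappa^\star = \kappa(\xi)$, contradicting $\kappa^\star \neq \kappa(\xi)$.

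This contradiction establishes continuity; no step is particularly delicate since the heavy lifting is done by the earlier joint continuity lemma. The only thing to be careful about is confirming that the limit $\kappa^\star$ indeed lies in the compact interval (which follows from closedness) so that strict monotonicity applies at $\kappa^\star$, and that the single-valuedness of $\kappa(\xi)$ gives a genuine contradiction rather than just a consistency statement.
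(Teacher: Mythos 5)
Your proof is correct, but it follows a genuinely different route from the paper's. The paper proves Lemma~\ref{lem:continuity-kappa} by defining the auxiliary objective $W(\xi,\kappa) = -\,|\beta - \phi(\xi,\kappa)|$, noting via Lemma~\ref{lem:expec-agents-increasing} that $\kappa(\xi)$ is the unique maximizer of $W(\xi,\cdot)$ over the compact interval $[\beta\lambda(1-\gamma),\beta\lambda]$, and then invoking Berge's maximum theorem \citep{berge1963topological} (which it already uses for Lemma~\ref{lem:stationary-dist-continuity}, so the citation comes for free): upper hemicontinuity of a single-valued argmax correspondence is continuity. You instead run the standard elementary argument---a map into a compact set whose graph is closed (here, closedness of the graph follows from the joint continuity of $\phi$ in Lemma~\ref{lem:expec-agents-continuity} applied to $\phi(\xi_{n_k},\kappa(\xi_{n_k})) = \beta$) and which is single-valued (here, by the strict monotonicity of $\phi(\xi,\cdot)$ from Lemma~\ref{lem:expec-agents-increasing}) is continuous. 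Both proofs rest on exactly the same two ingredients, joint continuity of $\phi$ and strict monotonicity in $\kappa$, so neither is doing more mathematical work than the other; the paper's version is shorter given Berge's theorem as a black box, while yours is self-contained and makes transparent where each hypothesis is used. You are also right about the two details worth flagging: that $\kappa^\star$ lies in the closed interval $[\beta\lambda(1-\gamma),\beta\lambda]$ (needed so uniqueness applies there, and guaranteed by Lemma~\ref{lem:bound-expec-agents}, which confines every $\kappa(\xi_{n_k})$ to this interval), and that sequential continuity suffices since $\markovian$ carries the sup-norm metric topology.
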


\begin{proof}
  Define $W(\xi, \kappa) = - | \beta - \phi(\xi, \kappa)|$. Note that,
  from Lemma~\ref{lem:expec-agents-increasing}, we obtain
  \begin{align*}
    \underset{\kappa \in [\beta \lambda (1-\gamma), \beta \lambda]}{\arg\max} W(\xi, \kappa) =  \{ \kappa(\xi)\}.
  \end{align*}
  From Lemma~\ref{lem:expec-agents-continuity}, we obtain that
  $\phi(\xi, \kappa)$ is jointly continuous in $(\xi, \kappa)$, and
  hence so is $W(\xi, \kappa)$. The result then follows from Berge's
  maximum theorem \citep{berge1963topological}.
\end{proof}

\section{Uniform bounds on value functions}
\label{ap:uniform-bounds}

For a given $\xi \in \markovian$ and $\vswitch > 0$, we seek to study
the decision problem $\optstop(\xi, \kappa(\xi), \vswitch)$. Before we
proceed, we need some definitions. Let
$(Z_t, N_t) \sim \markov(\xi, \kappa(\xi))$, and let
$\expec^\xi(\cdot | z,n)$ denote the expectation-operator with respect
to $\{ (Z_t, N_t) : t \geq 0\}$ conditioned on $(Z_0, N_0) =
(z,n)$. Fix an agent, say agent $1$, among all the agents at time $0$,
and let $\tau$ be the agent's first decision epoch.

Let
$\dynamic: \markovian \times \reals_+ \times \bounded \to \bounded$
denote the Bellman-operator for the agent's decision problem
$\optstop(\xi, \kappa(\xi), \vswitch)$, where for any
$\xi \in \markovian$, $\vswitch > 0$ and $U \in \bounded$, the
function $W= \dynamic(\xi, \vswitch, U)$ is defined as follows:
\begin{equation}
  \label{eq:dp-operator}
  W(z, n) = F(z,n) + \gamma \max\left\{\expec^\xi\left[ U(Z_\tau, N_\tau) | z,n\right], \vswitch\right\}, \quad \text{for all $(z,n) \in \sets$.}
\end{equation}

The following lemma states that the map
$\dynamic(\xi, \vswitch, \cdot)$ is a contraction. The proof follows
from standard arguments and is omitted.
\begin{lemma}
  \label{lem:bellman-contraction}
  For any $\xi \in \markovian$ and $\vswitch > 0$, we have
  $\dynamic(\xi, \vswitch,U) \in \bounded$ for all $U \in
  \bounded$. Furthermore, the map
  $\dynamic(\xi, \vswitch, \cdot) : \bounded \to \bounded$ is a
  contraction (with contraction parameter $\gamma$) for any
  $\xi \in \markovian$ and $\vswitch >0$.
\end{lemma}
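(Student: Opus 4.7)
The plan is to verify both assertions by a routine direct computation modelled on the standard Bellman-contraction argument; I do not anticipate a substantive obstacle, which is consistent with the excerpt's note that the proof follows from standard arguments.

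For the boundedness claim, I would first observe that $F$ itself lies in $\bounded$. Indeed, $\setZ$ is finite and for each $z \in \setZ$ the sequence $\{F(z,n)\}_{n \geq 1}$ is non-negative with $\lim_{n \to \infty} F(z,n) = 0$, so each slice $F(z,\cdot)$ is bounded and therefore $\|F\|_\infty = \max_{z \in \setZ} \sup_{n \geq 1} F(z,n) < \infty$. Now fix $\xi \in \markovian$, $\vswitch > 0$ and $U \in \bounded$. For each $(z,n) \in \sets$, the conditional expectation $\expec^\xi[U(Z_\tau, N_\tau)\mid z,n]$ is bounded in absolute value by $\|U\|_\infty$, so its maximum with the constant $\vswitch$ is bounded in absolute value by $\max\{\|U\|_\infty, \vswitch\}$. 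Adding $F(z,n)$ and scaling by $\gamma$ in \eqref{eq:dp-operator} gives
\begin{equation*}
  \|\dynamic(\xi, \vswitch, U)\|_\infty \;\leq\; \|F\|_\infty + \gamma \max\{\|U\|_\infty, \vswitch\} \;<\; \infty,
\end{equation*}
establishing that $\dynamic(\xi, \vswitch, U) \in \bounded$.

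For the contraction property, I would fix $U_1, U_2 \in \bounded$, set $W_i \defeq \dynamic(\xi, \vswitch, U_i)$, and apply the elementary $1$-Lipschitz inequality $|\max(a,c) - \max(b,c)| \leq |a-b|$ with $c = \vswitch$ to cancel the outer $\max$ in the pointwise difference. What remains is $\gamma$ times the difference of two conditional expectations under the common conditional law $\expec^\xi[\,\cdot \mid z,n]$, which by the triangle inequality (or Jensen) is dominated by $\gamma\,\expec^\xi\bigl[|U_1(Z_\tau,N_\tau) - U_2(Z_\tau,N_\tau)| \,\bigm|\, z,n\bigr] \leq \gamma \|U_1 - U_2\|_\infty$. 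Taking the supremum over $(z,n) \in \sets$ yields $\|W_1 - W_2\|_\infty \leq \gamma \|U_1 - U_2\|_\infty$, the claimed $\gamma$-contraction. Both bounds hold uniformly in $\xi$ and $\vswitch$, so the contraction constant $\gamma \in (0,1)$ is inherited directly from the survival prefactor in \eqref{eq:dp-operator}, and no uniformity issues arise.
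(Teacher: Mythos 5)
Your proof is correct and is precisely the standard argument the paper invokes when it says ``the proof follows from standard arguments and is omitted'': boundedness of $F$ from finiteness of $\setZ$ together with $\lim_{n\to\infty}F(z,n)=0$, and the contraction via the $1$-Lipschitz bound $|\max(a,c)-\max(b,c)|\leq|a-b|$ followed by Jensen/triangle under the common conditional law $\expec^\xi[\,\cdot \mid z,n]$. Nothing is missing, and the uniformity of the constant $\gamma$ in $(\xi,\vswitch)$ is correctly noted.
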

Let $\valfn(\xi, \vswitch) \in \bounded$ be the unique fixed
point of $\dynamic(\xi, \vswitch, \cdot)$. Define
$\valst(\xi, \vswitch) \in \bounded$ and
$\valsw(\xi, \vswitch) \in \reals_+$ as follows:
\begin{equation}
  \label{eq:stswdef}
  \begin{split}
  \valst( z,n; \xi, \vswitch) &= \expec^\xi\left[ \valfn(Z_\tau, N_\tau ;\xi, \vswitch)  | z,n\right]\\
  \valsw(\xi, \vswitch) &= \sum_{(z,n) \in \sets} \pi(z,n) \valst(z,n+1; \xi, \vswitch),
\end{split}
\end{equation}
where $\pi = \pi(\xi, \kappa(\xi))$. Here $\valfn(z,n;\xi, \vswitch)$
(and $\valst(z,n;\xi, \vswitch)$) denote the value taken by
$\valfn(\xi, \vswitch)$ (resp., $\valst(\xi, \vswitch)$) at
$(z,n) \in \sets$.


We begin this section by providing bounds on $\valsw$, $\valst$ and
$\valfn$. Define
\begin{align*}
\uvswitch &= \frac{1}{1-\gamma}\|F\|_\infty,\\
\lvswitch &= \exp\left(-\frac{\beta}{1-\gamma}\right)  \sum_{(z,n) \in
  \sets}\frac{\beta^n(1-\gamma)^n}{(1+\beta+\Psi)^{n+1} (n+1)!}
\pi_\resource(z) F(z,n+1) > 0,
\end{align*}
where
$\Psi = \frac{1}{\lambda}\max_{z \in \setZ} \sum_{y \neq z} \mu_{zy}
\in (0,\infty)$, and $\pi_\resource$ is the steady state distribution
of the resource process.
The following lemma, providing a uniform upper bound on the value
functions, follows immediately from definition.
\begin{lemma}\label{lem:upperbound-val}
  For any $\xi \in \markovian$ and $\vswitch>0$, the value functions
  satisfy
  $|\valsw(\xi, \vswitch)| \leq \|\valst(\xi, \vswitch)\|_\infty \leq
  \|\valfn(\xi, \vswitch)\|_\infty \leq \uvswitch = \frac{\|
    F\|_\infty}{1-\gamma}$.
\end{lemma}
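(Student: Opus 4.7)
The plan is to leverage the monotonicity and contraction properties of the Bellman operator $\dynamic(\xi, \vswitch, \cdot)$ established in Lemma~\ref{lem:bellman-contraction}. The key observation is that the constant function $U \equiv \uvswitch$ is a super-solution of the Bellman fixed-point equation: applying $\dynamic$ to it does not increase it pointwise, so iterating yields a non-increasing sequence that converges (by the contraction) to the fixed point $\valfn$, delivering the bound $\valfn \leq \uvswitch$. Propagating this bound through the defining expectations in~\eqref{eq:stswdef} then handles $\valst$ and $\valsw$.

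Concretely, I would first verify that $\dynamic(\xi, \vswitch, \cdot)$ is monotone in its third argument, which is immediate from the monotonicity of the conditional expectation under $\markov(\xi, \kappa(\xi))$ and of the $\max$ operator. Setting $U \equiv \uvswitch$ and using $\uvswitch = \|F\|_\infty/(1-\gamma)$, one computes
\[
\dynamic(\xi, \vswitch, U)(z,n) = F(z,n) + \gamma \max\{\uvswitch, \vswitch\} \leq \|F\|_\infty + \gamma \uvswitch = \uvswitch,
\]
in the regime $\vswitch \leq \uvswitch$ that is relevant to the fixed-point argument in Appendix~\ref{ap:compactness}, where $\Upsilon = \compact \times [\lvswitch, \uvswitch]$. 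By monotonicity this gives $\dynamic^{k+1} U \leq \dynamic^k U \leq U$ for all $k$, and by Lemma~\ref{lem:bellman-contraction} the iterates converge in sup-norm to $\valfn(\xi, \vswitch)$. Passing to the limit yields $\valfn(\xi, \vswitch) \leq \uvswitch$ pointwise; nonnegativity of $F$ gives $\valfn \geq 0$ by the same iteration starting from the zero function.

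The remaining bounds are routine manipulations of the definitions in~\eqref{eq:stswdef}. Since $\valst(z,n;\xi,\vswitch)$ is an expectation of $\valfn(\cdot;\xi,\vswitch)$ under the chain $\markov(\xi, \kappa(\xi))$, monotonicity of the expectation gives $0 \leq \valst(z,n;\xi,\vswitch) \leq \|\valfn(\xi,\vswitch)\|_\infty \leq \uvswitch$. Similarly, $\valsw(\xi,\vswitch)$ is a nonnegative convex combination of values of $\valst(z,n+1;\xi,\vswitch)$ with weights $\pi_{z,n}(\xi,\kappa(\xi))$, so $|\valsw(\xi,\vswitch)| \leq \|\valst(\xi,\vswitch)\|_\infty \leq \|\valfn(\xi,\vswitch)\|_\infty \leq \uvswitch$. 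No step poses a genuine technical obstacle --- the only substantive content is the one-line super-solution inequality, consistent with the paper's remark that the result ``follows immediately from definition.''
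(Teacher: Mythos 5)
Your proof is correct, and your closing chain $|\valsw(\xi,\vswitch)| \leq \|\valst(\xi,\vswitch)\|_\infty \leq \|\valfn(\xi,\vswitch)\|_\infty$ is exactly the paper's first step; where you diverge is in how the bound $\|\valfn(\xi,\vswitch)\|_\infty \leq \uvswitch$ is extracted. The paper works directly on the fixed-point identity: taking sup-norms in $\valfn = \dynamic(\xi,\vswitch,\valfn)$ yields $\|\valfn\|_\infty \leq \|F\|_\infty + \gamma \max\{\|\valfn\|_\infty, \vswitch\}$, asserts the maximum is attained at $\|\valfn\|_\infty$, and rearranges. You instead verify that the constant function $\uvswitch$ is a super-solution and run a monotone iteration down to the fixed point via the contraction of Lemma~\ref{lem:bellman-contraction}; the two mechanisms are interchangeable here and equally short. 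What your version buys is an honest treatment of the regime of $\vswitch$: you explicitly invoke $\vswitch \leq \uvswitch$ to conclude $\max\{\uvswitch,\vswitch\} = \uvswitch$, and this hypothesis is genuinely needed. Indeed, for $\vswitch > \uvswitch$ switching is always strictly optimal, the fixed point is $\valfn(z,n) = F(z,n) + \gamma\vswitch$, and $\|\valfn\|_\infty = \|F\|_\infty + \gamma\vswitch > \uvswitch$ (with $\valsw \geq \gamma\vswitch$ similarly escaping the bound), so the lemma read literally ``for any $\vswitch > 0$'' fails. The paper's proof glosses over this point: its displayed $\max\{\|\valfn\|_\infty, |\valsw(\xi,\vswitch)|\}$ silently substitutes the derived quantity $\valsw$ for the Bellman parameter $\vswitch$, and the asserted equality requires $\vswitch \leq \|\valfn\|_\infty$, which in general only holds when $\vswitch \leq \uvswitch$. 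Since the correspondence $\map$ is only ever applied on $\Upsilon = \compact \times [\lvswitch, \uvswitch]$, your restriction loses nothing downstream and is the correct reading of the statement; your nonnegativity remark (iterating from the zero function) is likewise fine, though not needed for the stated inequalities.
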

\begin{proof}
  Observe that from \eqref{eq:stswdef}, we have
  $|\valsw(\xi, \switch)| \leq \| \valst(\xi, \switch)\|_\infty \leq \|
  \valfn(\xi, \switch)\|_\infty$. Also, from the fact that
  $\valfn(\xi, \vswitch)$ is the fixed-point of
  $\dynamic(\xi, \vswitch, \cdot)$, we obtain
  \begin{align*}
    \| \valfn(\xi, \vswitch)\|_\infty &\leq \|F\|_\infty + \gamma \max\{\|\valfn(\xi, \vswitch)\|_\infty, |\valsw(\xi, \switch)|\} = \|F\|_\infty + \gamma \|\valfn(\xi, \vswitch)\|_\infty.
  \end{align*} 
  Rearranging, we obtain that
  $\|\valfn(\xi, \vswitch)\|_\infty \leq \frac{1}{1-\gamma}
  \|F\|_\infty = \uvswitch$.
\end{proof}
The next lemma provides a uniform lower bound on the value
functions. The proof makes extensive use of the strong Markovian
property for the chain $\markov(\xi,\kappa(\xi))$.
\begin{lemma}\label{lem:lowerbound-val}
  For any $\xi \in \markovian$ and $\vswitch>0$, we have
  $\valsw(\xi, \vswitch) \geq \lvswitch$.
\end{lemma}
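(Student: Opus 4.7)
The plan is to lower bound $\valsw(\xi,\vswitch)$ uniformly in $\xi$ and $\vswitch$ by restricting the sum $\sum_{(z,n)\in\sets}\pi_{z,n}\valst(z,n+1)$ to the $n=0$ terms and then separately lower bounding the two factors. Since the Bellman equation \eqref{eq:bellman} gives $V(z,n)\ge F(z,n)\ge 0$, we have $\valst(z,1)\ge\expec^\xi[F(Z_\tau,N_\tau)\mid(z,1)]$, where $\tau$ denotes the tagged agent's first decision epoch. It therefore suffices to bound $\sum_z \pi_{z,0}\,\expec^\xi[F(Z_\tau,N_\tau)\mid(z,1)]$ from below by $\lvswitch$.

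For the prefactor $\pi_{z,0}$, I would invoke Lemma~\ref{lem:coupling}: since $\kappa(\xi)\le\beta\lambda$, the chain $N_t$ under $\markov(\xi,\kappa(\xi))$ can be coupled with the queue length $\hat N_t$ of an $\mminfty{\lambda(1-\gamma)}{\beta\lambda}$ driven by the same resource process, with $N_t\le\hat N_t$ pathwise. Passing to stationarity and using that $\hat N$ has a Poisson$(\beta/(1-\gamma))$ marginal independent of $Z$ yields the uniform lower bound $\pi_{z,0}\ge \pi_\resource(z)\,e^{-\beta/(1-\gamma)}$.

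For $\valst(z,1)$, I would construct, for each $n\ge 0$, a disjoint event $E_n(z)$ describing the following competing-exponentials scenario starting from state $(z,1)$ (tagged agent alone at a location with resource $z$): at each of the states $(z,1),(z,2),\ldots,(z,n)$ the next transition is an arrival (and neither the tagged agent's decision epoch, nor a resource transition, nor a departure by any non-tagged agent present), so the state reaches $(z,n+1)$, and then the tagged agent's first decision epoch fires. On $E_n(z)$ we have $(Z_\tau,N_\tau)=(z,n+1)$ and hence $F(Z_\tau,N_\tau)=F(z,n+1)$. Using $\kappa\ge\beta\lambda(1-\gamma)$, $\sum_{y\ne z}\mu_{zy}\le\lambda\Psi$, and $\lambda(k-1)(1-\gamma\xi(z,k))\le\lambda(k-1)$ for the non-tagged departure rate at state $(z,k)$, the strong Markov property gives
\begin{equation*}
\prob[E_n(z)] \;\ge\; \prod_{k=1}^{n}\frac{\beta(1-\gamma)}{k+\beta+\Psi}\;\cdot\;\frac{1}{n+1+\beta+\Psi}.
\end{equation*}
The elementary inequality $k+\beta+\Psi\le k(1+\beta+\Psi)$ (valid for all $k\ge 1$) simplifies this to $(\beta(1-\gamma))^n/[(n+1)!\,(1+\beta+\Psi)^{n+1}]$. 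Summing $F(z,n+1)\prob[E_n(z)]$ over $n\ge 0$, multiplying by the lower bound on $\pi_{z,0}$ from the previous paragraph, and summing over $z\in\setZ$ then reproduces exactly $\lvswitch$.

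The main obstacle is the careful bookkeeping of the competing exponentials at each state $(z,k)$: one must separate the tagged agent's exp$(\lambda)$ clock from those of the $k-1$ non-tagged agents and recognize that the ``stay'' outcomes of non-tagged decision epochs are already absorbed into the reduced departure rate of $\markov(\xi,\kappa(\xi))$ and therefore do not register as transitions. Once $\prob[E_n(z)]$ is correctly bounded below and the algebraic bound $k+\beta+\Psi\le k(1+\beta+\Psi)$ is applied, the remaining steps are routine.
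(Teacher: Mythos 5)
Your proof is correct, but it reaches $\lvswitch$ by a genuinely different decomposition than the paper's. The paper keeps the full sum $\sum_{(z,n)}\pi_{z,n}\valst(z,n+1)$ and bounds each factor statically: it lower-bounds $\valst(z,n+1)$ by the single term $\tfrac{F(z,n+1)}{(1+\beta+\Psi)(n+1)}$ from a one-step expansion (the tagged agent's clock winning the exponential race once), and lower-bounds $\pi(z,n)$ by iterating the stationary balance equations, $\pi(z,k) \geq \pi(z,k-1)\,\beta(1-\gamma)/\bigl((1+\beta+\Psi)k\bigr)$, down to $\pi(z,0)$, which is then bounded via the same $\mminfty{\lambda(1-\gamma)}{\beta\lambda}$ coupling you invoke. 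You instead discard all stationary mass except $n=0$ and recover the climb from $(z,1)$ to $(z,n+1)$ dynamically, as the probability of a sample path of $n$ consecutive arrival-wins inside $\valst(z,1)$, followed by the tagged clock firing. The two arguments are dual: your per-stage race probability $\beta(1-\gamma)/(k+\beta+\Psi)$ is exactly the arrival flux retained by the paper's balance-equation recursion, which is why both land on precisely the constant $\lvswitch$. What your route buys is that the balance equations are never touched beyond the $n=0$ coupling bound, at the cost of the competing-exponentials bookkeeping --- which you handle correctly; in particular, treating non-tagged ``stay'' rings as invisible is legitimate by thinning and memorylessness, and is equivalent to the paper's expansion, which lists the stay term explicitly. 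Two small gaps to patch: your final race factor $1/(n+1+\beta+\Psi)$ requires $\kappa(\xi)\leq \beta\lambda$, which you do not list among the facts used (you cite only $\kappa\geq\beta\lambda(1-\gamma)$); both bounds on $\kappa(\xi)$ are supplied by Lemma~\ref{lem:bound-expec-agents}. And dropping the $n\geq 1$ terms of $\valsw$ tacitly uses $\valst\geq 0$, immediate from $F\geq 0$ and $\vswitch>0$ but worth stating.
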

\begin{proof}
  Observe that
  \begin{align*}
    \valfn(z,n; \xi, \vswitch)
    &= F(z,n) + \gamma \max\{ \valst(z,n; \xi, \vswitch) ,  \vswitch\} \geq F(z,n).
  \end{align*}
  Recalling the definition of $\valst(\xi, \vswitch)$ and using the
  (strong) Markov property, we obtain
\begin{align*}
  \valst(z,n;\xi, \vswitch) 
  = & \frac{\lambda}{n\lambda + \kappa(\xi) + \sum_{y \neq z} \mu_{zy}  } \valfn(z,n;\xi, \vswitch) \\
  & + \frac{\kappa(\xi)}{n\lambda + \kappa(\xi) + \sum_{y \neq z} \mu_{zy}  } \valst(z,n+1;\xi, \vswitch)\\
  & + \sum_{w \neq z} \frac{\mu_{wz}}{n\lambda + \kappa(\xi) + \sum_{y \neq z} \mu_{zy}  } \valst(w,n;\xi, \vswitch) \\
  &  + \frac{(n-1)\lambda(1 - \gamma \xi(z,n))}{n\lambda + \kappa(\xi) + \sum_{y \neq z} \mu_{zy}  } \valst(z,n-1;\xi, \vswitch)\\
  & + \frac{(n-1)\lambda \gamma \xi(z,n)}{n\lambda + \kappa(\xi) + \sum_{y \neq z} \mu_{zy}  } \valst(z,n;\xi, \vswitch)\\
  \geq  & \frac{\lambda}{n\lambda + \kappa(\xi)  + \sum_{y \neq z} \mu_{z y}} \valfn(z,n;\xi, \vswitch)\\
  \geq & \frac{\lambda}{\lambda (n+\beta)  + \sum_{y \neq z} \mu_{z y}} F(z,n),
\end{align*}
where the last line follows from the fact that
$\kappa(\xi) \leq \beta \lambda$. Using the definition of $\Psi$, we
obtain
\begin{align}
\label{eq:lowerbound-valst}
  \valst(z,n;\xi, \vswitch) \geq \frac{1}{n+\beta + \Psi} F(z,n) \geq
  \frac{1}{n(1+\beta + \Psi)}F(z,n). 
\end{align}

Next, observe that $\pi = \pi(\xi, \kappa(\xi))$ satisfies the
steady-state equation \eqref{eq:steady-state}:
\begin{align*}
  \sum_{(y,m) \in \sets} \pi(y,m) \sQ^\xi((y,m) \to (z,n)) = 0,
\end{align*}
where $\sQ^{\xi}$ denote the transition kernel of the Markov chain
$\markov(\xi, \kappa(\xi))$. Using the expression \eqref{eq:loc-dyn}
for $\sQ^\xi$, we obtain
\begin{align*}
  &\pi(z,n) (\kappa(\xi) + \sum_{y \neq z} \mu_{zy} + \lambda n (1 - \gamma \xi(z,n)))\\
  &\quad =  \pi(z,n-1)\kappa(\xi) + \sum_{w \neq z} \pi(w,n) \mu_{wz}  + \pi(z,n+1)\lambda (n+1)(1-\gamma \xi(z,n+1)).
\end{align*}
This implies that 
\begin{align*}
  \pi(z,n) 
  &\geq \pi(z,n-1) \frac{\kappa(\xi)}{\kappa(\xi) + \sum_{y \neq z} \mu_{zy} + \lambda n (1 - \gamma \xi(z,n))}\\
  &\geq \pi(z,n-1) \frac{\beta\lambda (1-\gamma) }{\beta \lambda(1-\gamma)  + \sum_{y \neq z} \mu_{zy}  + \lambda n }\\
  &\geq \pi(z,n-1) \frac{\beta (1-\gamma) }{\beta (1-\gamma)  + \Psi  +  n }\\
  &\geq \pi(z,n-1) \frac{\beta (1-\gamma) }{(1 + \beta+ \Psi)n }.
\end{align*}
Thus, we obtain
\begin{align*}
  \pi(z,n) \geq \pi(z,0) \frac{\beta^n(1-\gamma)^n}{(1+\beta+\Psi)^n n!}.
\end{align*}
Now, from Lemma~\ref{lem:coupling}, we obtain that
the process $(Z_t,N_t) \sim \markov(\xi, \kappa(\xi))$ with
$(Z_0, N_0) = (z,n)$ is stochastically dominated by $(Z_t,N^{(1)}_t)$
where $N^{(1)}_t$ is an (independent)
$\mminfty{\lambda(1-\gamma)}{\beta \lambda}$ process with
$N^{(1)}_0 = n$. Hence, we have
$\pi(z,0) \geq \pi_\resource(z)\prob(N^{(1)}_\infty = 0)$, where the
steady state $N^{(1)}_\infty$ is given by a Poisson distribution with
parameter $\beta/(1-\gamma)$, implying
$\prob(N^{(1)}_\infty = 0) = \exp(-\beta/(1-\gamma))$. Thus, we obtain
\begin{align}
  \label{eq:lowerbound-pi}
  \pi(z,n) &\geq \pi_\resource(z) \exp\left(-\frac{\beta}{1-\gamma}\right) \frac{\beta^n(1-\gamma)^n}{(1+\beta+\Psi)^n n!}.   
\end{align}
Finally, from \eqref{eq:stswdef}, we have
  \begin{align*}
    \valsw(\xi, \vswitch) 
    &= \sum_{(z,n) \in \sets} \pi(z,n) \valst(z, n+1; \xi, \vswitch)\\
    &\geq \sum_{(z,n) \in \sets} \pi(z,n)\frac{F(z,n+1)}{(1+\beta+\Psi)(n+1)}\\
    &\geq \exp\left(-\frac{\beta}{1-\gamma}\right)\sum_{(z,n) \in \sets}\frac{\beta^n(1-\gamma)^n}{(1+\beta+\Psi)^{n+1} (n+1)!} \pi_\resource(z)  F(z,n+1)\\
    &= \lvswitch.
  \end{align*}
  where we use \eqref{eq:lowerbound-valst} in the first inequality and
  \eqref{eq:lowerbound-pi} in the second.
\end{proof}

\section{A compact set of Markovian strategies}
\label{ap:compactness}
For $\xi \in \markovian$ and $\vswitch \in [\lvswitch, \uvswitch]$,
denote the set of all optimal Markovian strategies for the decision
problem $\optstop(\xi, \kappa(\xi), \vswitch)$ by
$\mathcal{X}(\xi, \vswitch) \subseteq \markovian$. In particular,
$\mathcal{X}(\xi, \vswitch)$ is the set of all $\zeta \in \markovian$
such that
\begin{align*}
  \zeta(z,n) = \begin{cases} 1 & \text{ if $\valst(z,n; \xi, \vswitch) > \vswitch$;}\\
    0 & \text{ if $\valst(z,n; \xi, \vswitch) <  \vswitch$.}
  \end{cases}
\end{align*}
It is straightforward to show that the set
$\mathcal{X}(\xi, \vswitch)$ is non-empty and convex.

In this section, we provide characterization of a compact set
$\compact \subseteq \markovian$ of strategies such that if
$\xi \in \compact$ and $\vswitch \in [\lvswitch, \uvswitch]$, then
$\mathcal{X}(\xi, \vswitch) \subseteq \compact$. This characterization
is later used to define a correspondence over a compact set to which
we apply the Kakutani fixed point theorem to show the existence of an
MFE. (Note that the set $\markovian$ is not compact under the
sup-norm.)



We begin by defining the set $\compact$. Recall that $F(z,n) \to 0$ as
$n \to \infty$ for all $z \in \setZ$. Let $K_0$ be defined as
\begin{align*}
  K_0 &= \inf\left\{ m : F(z,n) < \frac{(1-\gamma)^2}{2}\lvswitch \text{ for all $z \in \setZ$ and $n \geq m$} \right\},
\end{align*} and let $K_1$ be defined as
\begin{align*}
  K_1 = \inf\left\{ n : \exp\left(-\frac{1}{8}\sqrt{n-1}\right) + \frac{2}{\sqrt{\log(n-1)}} + \gamma^{\lfloor\sqrt{\log(n-1)}\rfloor}(1-\gamma) < \frac{(1-\gamma)^2\lvswitch}{4\|F\|_\infty } \right \}
\end{align*}
Let $K_{\max} = \max\{4 K_0^2 +1, K_1\}$. Define the set
$\compact \in \markovian$ as follows:
\begin{align*}
  \compact = \{ \xi \in \markovian : \xi(z,n) = 0 \text{ for all $z \in \setZ$ and $n \geq K_{\max}$}\}.
\end{align*}
In other words, under any strategy $\xi \in \compact$, each agent
chooses to switch her location, if the number of agents at her
location is greater than $K_{\max}$, irrespective of the resource
level. It is straightforward to show that $\compact$ is compact, by
noting that it is isomorphic to $[0,1]^{K_{\max}}$ under the Euclidean
topology.

The following lemma states that if $\xi \in \compact$ and
$\vswitch \geq \lvswitch$, then the optimal action for an agent at the
state $(z,n)$ is to switch if $n \geq K_{\max}$.
\begin{lemma}
  \label{lem:convergence-vstay}
  For $\xi \in \compact$ and $\vswitch \geq \lvswitch$, we have
  $\valst(z,n; \xi, \vswitch) < \vswitch$ for all $z \in \setZ$ and
  for all $n \geq K_{\max}$.
\end{lemma}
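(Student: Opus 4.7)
The plan is to combine a stochastic coupling for the agent-count process with an iterated Bellman bound. Since $\xi \in \compact$ prescribes $\xi(\cdot, n') = 0$ for all $n' \geq K_{\max}$, every agent at a decision epoch occurring while $N_t \geq K_{\max}$ switches, so the effective per-agent departure rate is exactly $\lambda$ (matching a pure-death process), and arrivals can only raise $N_t$. By a coupling analogous to Lemma~\ref{lem:coupling}, $N_t$ is therefore stochastically lower-bounded by $\tilde N_t \sim \mathrm{Bin}(n, e^{-\lambda t})$, the number of survivors of $n$ independent $\mathrm{Exp}(\lambda)$ clocks.

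Next, combining Chernoff's inequality for the binomial tail with a Markov-type bound on $\tau_k$ (the agent's $k$-th decision epoch, with $\lambda \tau_k \sim \Gamma(k,1)$), I would derive, for $n \geq K_{\max} \geq 4K_0^2+1$,
\[
P\bigl(N_{\tau_k} < K_0 \,\big|\, Z_0 = z,\; N_0 = n\bigr) \;\leq\; \exp\!\bigl(-\sqrt{n-1}/8\bigr) + \frac{2k}{\log(n-1)}.
\]
The first summand comes from Chernoff applied on the event $\lambda\tau_k \leq \tfrac{1}{2}\log(n-1)$ (on which the binomial mean $n\,e^{-\lambda\tau_k} \geq \sqrt{n-1} \geq 2K_0$); the second controls the complementary event via Markov, using $\expec[\lambda\tau_k] = k$.

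The final step is to rewrite $\max\{\valst,\vswitch\} = \vswitch + (\valst - \vswitch)^+$ inside the Bellman equation and apply Jensen's inequality to the convex function $x \mapsto x^+$ to obtain the pointwise recursion
\[
\bigl(\valfn(z',n';\xi,\vswitch) - \vswitch\bigr)^+ \;\leq\; \bigl(F(z',n') - (1-\gamma)\vswitch\bigr)^+ + \gamma\,\expec^\xi\!\bigl[\bigl(\valfn(Z_\tau,N_\tau;\xi,\vswitch) - \vswitch\bigr)^+ \,\big|\, z',n'\bigr].
\]
The first summand vanishes whenever $n' \geq K_0$, by the definition of $K_0$ and the hypothesis $\vswitch \geq \lvswitch$. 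Iterating this recursion $M := \lfloor\sqrt{\log(n-1)}\rfloor$ times, truncating at the first agent epoch where $N$ drops below $K_0$, and bounding the residual by $\uvswitch$ via Lemma~\ref{lem:upperbound-val}, I would control $\expec^\xi\!\bigl[(\valfn(Z_\tau,N_\tau;\xi,\vswitch) - \vswitch)^+ \,\big|\, z,n\bigr]$ by a linear combination of $\gamma^M\uvswitch$, $\exp(-\sqrt{n-1}/8)\,\uvswitch$, and $(2M/\log(n-1))\,\uvswitch$. Feeding this back into the identity $\valst(z,n;\xi,\vswitch) - \vswitch = \expec^\xi[F(Z_\tau,N_\tau)] - (1-\gamma)\vswitch + \gamma\,\expec^\xi\!\bigl[(\valst(Z_\tau,N_\tau;\xi,\vswitch) - \vswitch)^+\bigr]$ and using $F < (1-\gamma)^2\lvswitch/2$ on $\{N_\tau \geq K_0\}$ to extract a strictly negative margin of order $-(1-\gamma)(1+\gamma)\lvswitch/2$, I would obtain $\valst(z,n;\xi,\vswitch) < \vswitch$ precisely when the three tail contributions sum below the threshold $(1-\gamma)^2\lvswitch/(4\|F\|_\infty)$ appearing in the definition of $K_1$.

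The main obstacle is this final accounting: one must pair the three tail terms correctly with the three summands in $K_1$ (with the $(1-\gamma)$ factor in $\gamma^M(1-\gamma)$ emerging from $\uvswitch = \|F\|_\infty/(1-\gamma)$), and verify that the positive margin extracted from the slack in the definition of $K_0$ strictly dominates the combined tail contributions uniformly in $z \in \setZ$.
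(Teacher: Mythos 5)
Your proposal is, at its core, the paper's own argument in a different analytic wrapper. The probabilistic content is identical: the same coupling of $N_t$ to a pure-death survivor count (the paper uses $1+\mathrm{Bin}(n-1,e^{-\lambda t})$, holding the tagged agent out, while your $\mathrm{Bin}(n,e^{-\lambda t})$ is an equally valid lower bound), the same Chernoff estimate at the time scale $\that = \log(n-1)/(2\lambda)$, the same Markov bound on the Gamma-distributed epochs, and the same horizon $M = \lfloor \sqrt{\log(n-1)}\rfloor$. The paper simply bounds the explicit reward series of an arbitrary strategy $\phi$, namely $\vstay^\phi(z,n) = \expec[\sum_\ell \gamma^{\ell-1} F(Z_\ell,N_\ell)\ind\{\tau^\ell \leq \tau^\phi\}] + \cdots$, whereas you iterate the positive-part recursion $(\valfn - \vswitch)^+ \leq (F-(1-\gamma)\vswitch)^+ + \gamma\,\expec^\xi[(\valfn-\vswitch)^+]$ on the Bellman fixed point; that recursion is valid (it follows from $\valfn = F + \gamma\vswitch + \gamma(\valst-\vswitch)^+$, subadditivity of $x \mapsto x^+$, and Jensen), and working with the fixed point is arguably cleaner than the paper's implicit identification $\valst = \sup_\phi \vstay^\phi$. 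One phrasing nit: the per-agent departure rate is \emph{exactly} $\lambda$ only while $N_t \geq K_{\max}$; once $N_t$ dips below, $\xi$ may be positive, but the coupling only needs the one-sided bound $\lambda(1-\gamma\xi(z,n)) \leq \lambda$ everywhere, which holds, so this is harmless.

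The one genuine problem sits exactly where you flag ``the main obstacle,'' and your proposed pairing there is arithmetically backwards. Your residual is $\gamma^M \uvswitch = \gamma^M \|F\|_\infty/(1-\gamma)$, which, after factoring out the common prefactor $\|F\|_\infty/(1-\gamma)$ from the three tail terms, corresponds to the summand $\gamma^M$ --- \emph{not} to $\gamma^M(1-\gamma)$. The $K_1$ condition as written only yields $\gamma^M\|F\|_\infty/(1-\gamma) < \lvswitch/4$, a quantity that is not of order $(1-\gamma)$, while your extracted margin $(1-\gamma)(1+\gamma)\lvswitch/2$ is; so with the paper's literal $K_1$ your accounting no longer closes once $(1-\gamma^2)/2 \leq 1/4$, i.e.\ $\gamma^2 \geq 1/2$ (in particular at the paper's own $\gamma = 0.95$). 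You should know that the paper's proof harbors the very same factor: it bounds the tail $\sum_{\ell > \lhat} \gamma^{\ell-1}\|F\|_\infty$ by $\gamma^{\lhat}\|F\|_\infty$, silently dropping the $1/(1-\gamma)$ of the geometric series, and it is this elision that makes its terms line up with the $\gamma^{\lhat}(1-\gamma)$ summand in $K_1$. The repair is the same and harmless in both cases: replace $\gamma^{\lfloor\sqrt{\log(n-1)}\rfloor}(1-\gamma)$ by $\gamma^{\lfloor\sqrt{\log(n-1)}\rfloor}$ in the definition of $K_1$ --- the left-hand side still tends to $0$ as $n \to \infty$, so $K_1$ remains finite and nothing downstream changes --- after which your chain closes with room to spare, since the corrected tail budget $(1-\gamma)\lvswitch/4$ is strictly below your margin $(1-\gamma)(1+\gamma)\lvswitch/2$.
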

\begin{proof}
  Consider an agent $i$ in location $k$ facing the decision problem
  $\optstop(\xi, \kappa(\xi), \vswitch)$ for a given
  $\xi \in \compact$ and $\vswitch \geq \lvswitch$.  Let $\tau^\ell > 0$
  denote the time of the $\ell^{th}$-decision epoch of the agent, for
  $\ell = 1,2, \cdots$. Let $(Z_t,N_t)$ denote the state of the
  location at time $t$, and for brevity, we let $(Z_\ell, N_\ell)$
  denote $(Z_{\tau^\ell}, N_{\tau^\ell})$ for each
  $\ell = 1,2, \cdots$.

  Suppose $(Z_0,N_0) = (z,n)$ for $z \in \setZ$ and $n \geq
  K_{\max}$. Fix a strategy $\phi \in \markovian$ for the agent, and
  let $\tau^\phi$ denote the first time at which the agent chooses to
  switch under $\phi$. Let $\vstay^\phi(z,n)$ denote agent $i$'s
  continuation payoffs under the strategy $\phi$, subsequent to her
  making the decision to stay and not leaving the system, given the
  state of the location $(z,n)$. We have the following expression for
  the $\vstay^\phi(z,n)$:
  \begin{align}
    \label{eq:total-reward}
    \vstay^\phi(z,n)
    &= \expec\left[\sum_{\ell = 1}^\infty \gamma^{\ell-1} F(Z_\ell,N_\ell) \ind\{ \tau^\ell \leq \tau^\phi\} + \gamma^\ell \ind\{ \tau^\ell = \tau^\phi\} \vswitch  \right].
  \end{align}
  The first term inside the expectation denotes the total expected
  payoff until the agent chooses to switch, the second term denotes
  the payoff on switching.  Here, the expectation $\expec$ is
  conditioned on $(Z_0,N_0) = (z,n)$ and on the fact that agent $i$
  follows strategy $\phi$ and all other agents follow strategy
  $\xi$. (We drop this explicit dependence from the notation for
  $\expec$ for brevity.)  From this, we obtain,
  \begin{equation}
    \label{eq:vstayphi}
    \begin{split}
      \vstay^\phi(z,n)
      &= \sum_{\ell = 1}^\infty \gamma^{\ell-1} \expec\left[ F(Z_\ell,N_\ell) \ind\{ \tau^\ell \leq \tau^\phi\}\right] + \sum_{\ell = 1}^\infty \gamma^\ell  \vswitch \prob\left(\tau^\ell = \tau^\phi\right)\\
      &\leq \sum_{\ell = 1}^\infty \gamma^{\ell-1}
      \expec\left[  F(Z_\ell,N_\ell) \ind\{ \tau^\ell \leq
          \tau^\phi\}\right] + \gamma \vswitch.
  \end{split}
\end{equation}
Let $\nhat = \lfloor \sqrt{n-1}/2 + 1\rfloor$. For each $\ell=1,2, \cdots$, we have
\begin{equation}
  \label{eq:upper-bound-expec-reward-per-epoch}
  \begin{split}
    \expec\left[F(Z_\ell,N_\ell )\ind{\{\tau^\ell \leq \tau^\phi\}}
    \right]
    &= \expec\left[ \left. F(Z_\ell, N_\ell) \ind{\{\tau^\ell \leq \tau^\phi\}} \right| N_\ell \geq \nhat\right] \prob(N_\ell \geq \nhat )\\
    &\quad + \expec\left[ \left. F(Z_\ell, N_\ell)\ind{\{\tau^\ell \leq \tau^\phi \}} \right| N_\ell < \nhat\right] \prob(N_\ell < \nhat)  \\
    &\leq \frac{1}{2}(1-\gamma)^2 \lvswitch + \|F\|_\infty
    \prob(N_\ell < \nhat).
  \end{split}
\end{equation}
Here, in the inequality, the first term follows from the fact that
since $n > K_{\max}$, we have $\nhat > K_0$, and hence
$F(Z_\ell, N_\ell) < \tfrac{(1-\gamma)^2\lvswitch}{2}$ if
$N_\ell \geq \nhat$. In the second term, we have used the fact that
$F(Z_\ell, N_\ell) \leq \|F\|_\infty$.

To bound $\prob(N_\ell < \nhat)$, consider an auxiliary
system with $n$ agents at $t=0$ where each agent other than agent $i$
stays in the system for a time that is independently and identically
distributed as an exponential distribution with rate $\lambda$. (We
assume agent $i$ never leaves the auxiliary system.) Furthermore,
there are no arrivals to this auxiliary system. Let $\tilde{N}_t$
denote the number of agents in this auxiliary system. It is
straightforward to show that $\tilde{N}_t$ is first-order
stochastically dominated by $N_t$, via a coupling argument and we omit
the details here. This implies that
$\prob(N_\ell \leq \nhat) \leq \prob(\tilde{N}_\ell \leq \nhat)$,
where we write $\tilde{N}_\ell$ to denote $\tilde{N}_{\tau^\ell}$. Thus, we obtain
\begin{equation*}
  \expec\left[F(Z_\ell,N_\ell )\ind{\{\tau^\ell \leq \tau^\phi\}}\right] \leq \frac{1}{2}(1-\gamma)^2 \lvswitch + \|F\|_\infty \prob(\tilde{N}_\ell < \nhat).
\end{equation*}
Let $\lhat = \lfloor \sqrt{\log (n-1)} \rfloor$, and
$\that = \frac{\log (n-1)}{2 \lambda}$. For each $\ell \leq \lhat$, we
have $\tilde{N}_\ell \geq \tilde{N}_{\lhat}$, and hence,
\begin{equation}
  \begin{split}
    \label{eq:bound-on-n}
      \prob(\tilde{N}_{\ell} < \nhat)
      &\leq \prob(\tilde{N}_{\lhat}< \nhat) \\
      &= \prob(\tilde{N}_{\lhat} < \nhat |  \tau^{\lhat} < \that ) \prob(\tau^{\lhat} < \that) + \prob(\tilde{N}_{\lhat} < \nhat |  \tau^{\lhat} \geq \that ) \prob(\tau^{\lhat} \geq \that) \\
      &\leq \prob(\tilde{N}_{\lhat} < \nhat | \tau^{\lhat} < \that) \prob(\tau^{\lhat} < \that) + \prob(\tau^{\lhat} \geq \that ) \\
      &\leq \prob(\tilde{N}_{\that} < \nhat | \tau^{\lhat} < \that )\prob(\tau^{\lhat} < \that) + \prob(\tau^{\lhat} \geq \that) \\
      &\leq \prob(\tilde{N}_{\that} < \nhat) + \prob(\tau^{\lhat} \geq \that),
    \end{split}
  \end{equation}
  where the third inequality follows from the fact that on
  $\tau^{\lhat} < \that$, we have
  $\tilde{N}_{\that} \leq \tilde{N}_{\lhat}$, and the fourth
  inequality follows from the independence of $\tau^{\lhat}$ and
  $\tilde{N}_t$.

  Now, observe that since each agent $j \neq i$ stays in the auxiliary
  system for a time distributed independently and exponentially with
  rate $\lambda$, the probability that the agent $j \neq i$ is still
  in the auxiliary system by time $\that$ is equal to
  $\exp(-\lambda \that) = 1/\sqrt{n-1}$. Thus, the number of agents
  $\tilde{N}_{\that}$ in the auxiliary system at time $\that$ is
  distributed as $1 + \text{Bin}(n-1,\frac{1}{\sqrt{n-1}})$, where
  $\text{Bin}(\cdot, \cdot)$ denotes the binomial
  distribution. (Recall that in the auxiliary system, agent $i$ never
  leaves.)  Now, note that
  $\expec[\text{Bin}(n-1, \frac{1}{\sqrt{n-1}})] = \sqrt{n-1} >
  \nhat-1$. From this, we obtain
    \begin{equation}
          \label{eq:bound-on-ntilde}
      \begin{split}
      \prob(\tilde{N}_{\that} < \nhat)
      &= \prob\left(\text{Bin}\left(n-1,\frac{1}{\sqrt{n-1}}\right) < \nhat - 1\right)\\
      &\leq \prob\left(\text{Bin}\left(n-1,\frac{1}{\sqrt{n-1}}\right) < \frac{1}{2} \sqrt{n-1}\right)\\
      &\leq \exp\left(-\frac{1}{8} \sqrt{n-1}\right),
    \end{split}
  \end{equation}
  where we have used the Chernoff bound \citep{mitzenmacherU05} for the
  lower tail of the binomial distribution in the last inequality.

  Next, note that $\tau^\ell \sim \text{Gamma}(\ell,\lambda)$, since
  $\tau^\ell$ is the sum of $\ell$ independently and exponentially
  distributed time intervals. Hence, from Markov's inequality, we
  obtain
  \begin{equation}
    \label{eq:bound-on-tau}
    \prob(\tau^\lhat > \that) \leq \frac{\expec[\tau^\lhat ]}{\that} = \frac{\lhat}{\lambda \that}\leq \frac{2}{ \sqrt{\log(n-1)}}.
  \end{equation}
  Thus, combining \eqref{eq:upper-bound-expec-reward-per-epoch},
  \eqref{eq:bound-on-n}, \eqref{eq:bound-on-ntilde} and
  \eqref{eq:bound-on-tau}, we obtain for all $\ell \leq \lhat$,
  \begin{equation*}
    \begin{split}
      \expec\left[F(Z_\ell,N_\ell )\ind{\{\tau^\ell \leq
          \tau^\phi\}}\right] &\leq \frac{1}{2}(1-\gamma)^2 \lvswitch  +
      \|F\|_\infty \left(\exp(-\frac{1}{8}\sqrt{n-1}) +
        \frac{2}{ \sqrt{\log(n-1)}}\right).
    \end{split}
  \end{equation*}
  Thus, using \eqref{eq:vstayphi}, we have
  \begin{equation*}
    \begin{split}
    \vstay^\phi(z,n) =& \sum_{\ell = 1}^{\lhat} \gamma^{\ell -1}       \expec\left[F(Z_\ell,N_\ell )\ind{\{\tau^\ell \leq
    \tau^\phi\}}\right] + \sum_{\ell = \lhat + 1}^\infty     \gamma^{\ell -1}   \expec\left[F(Z_\ell,N_\ell )\ind{\{\tau^\ell \leq
    \tau^\phi\}}\right]  \\
    & + \gamma \vswitch\\
    \leq   & \frac{1}{1-\gamma} \left(\frac{1}{2}(1-\gamma)^2 \lvswitch +
    \|F\|_\infty \left(\exp(-\frac{1}{8}\sqrt{n-1}) +
      \frac{2}{ \sqrt{\log(n-1)}}\right)\right)\\
  & + \gamma^{\lhat} \|F\|_\infty  + \gamma \vswitch,
\end{split}
\end{equation*}
  where in the inequality, we use that fact that
  $F(Z_\ell, N_\ell) \leq \|F\|_\infty$ for all
  $\ell >\lhat$. Thus, we obtain,
  \begin{equation*}
    \begin{split}
    \vstay^\phi(z,n)     \leq &  \frac{1}{1-\gamma} \left(\frac{1}{2}(1-\gamma)^2 \lvswitch +
    \|F\|_\infty \left(\exp(-\frac{1}{8}\sqrt{n-1}) +
      \frac{2}{ \sqrt{\log(n-1)}}   + \gamma^{\lhat} (1 - \gamma) \right)\right) \\
      & + \gamma \vswitch.
\end{split}
\end{equation*}
Now, note that since $n \geq K_{\max} \geq K_1$, we have
\begin{align*} \|F\|_\infty \left(\exp(-\frac{1}{8}\sqrt{n-1}) +
    \frac{2}{ \sqrt{\log(n-1)}} + \gamma^{\lhat} (1 - \gamma) \right)
  < \frac{(1-\gamma)^2\lvswitch}{4}.
\end{align*}
Thus we obtain
$\vstay^\phi(z,n) \leq \frac{1}{1-\gamma} \left(
  \tfrac{(1-\gamma)^2\lvswitch}{2} +
  \frac{(1-\gamma)^2\lvswitch}{4}\right) + \gamma \vswitch =
\frac{3(1-\gamma)}{4} \lvswitch + \gamma \vswitch$.  Since this
inequality holds for all strategies $\phi \in \markovian$ and since
$\vswitch \geq \lvswitch$, we obtain
$\valst(z, n ; \xi, \vswitch) < \vswitch$ for all $z \in \setZ$ and
all $n \geq K_{\max}$.
\end{proof}

Let $\Upsilon = \compact \times [\lvswitch,\uvswitch]$. The preceding
lemma implies that for any $\zeta \in \mathcal{X}(\xi, \vswitch)$ with
$(\xi, \vswitch) \in \Upsilon$, it must be the case that
$\zeta(z,n) = 0$ for all $z \in \setZ$ and all $n \geq K_{\max}$. From
the definition of $\compact$, this implies that
$\mathcal{X}(\xi, \vswitch) \subseteq \compact$ for all
$(\xi, \vswitch) \in \Upsilon$. Thus, we can view the map
$(\xi, \vswitch) \to \mathcal{X}(\xi, \vswitch)$ as defining a
correspondence $\mathcal{X} : \Upsilon \rightrightarrows \compact$.

\section{Upper-hemicontinuity of $\map$}
\label{ap:upper-hemicontinuity}
For $(\xi, \vswitch) \in \Upsilon$, define the map $\map$ as
$\map(\xi, \vswitch) = \mathcal{X}(\xi, \vswitch) \times \{
\valsw(\xi, \vswitch)\}$. Note that from
Lemma~\ref{lem:upperbound-val}, Lemma~\ref{lem:lowerbound-val} and
Lemma~\ref{lem:convergence-vstay}, we obtain that
$\map(\xi, \vswitch) \subseteq \Upsilon$ for any
$(\xi, \vswitch) \in \Upsilon$. This implies that we can view the map
$\map$ as a correspondence
$\map : \Upsilon \rightrightarrows \Upsilon$. In this section, we seek
to show that this correspondence is upper-hemicontinuous. This result
directly used in proof for Theorem~\ref{thm:existence}.


To prove this, we first show that the value functions
$\valfn(\xi, \vswitch)$ and $\valsw(\xi, \vswitch)$ are jointly
continuous in $(\xi, \vswitch) \in \Upsilon$. In the following, we use
the following notation: for $U \in \bounded$, let
$\|U\|_* \defeq \max_{z \in \setZ, n < K_{\max}} |U(z,n)|$. Note that
$\|U\|_* \leq \|U\|_\infty$.
\begin{lemma}
  \label{lem:valf-continuous}
  The map $(\xi, \vswitch) \to \valfn(\xi, \vswitch)$ is (jointly)
  continuous in $(\xi, \vswitch) \in \Upsilon$.
\end{lemma}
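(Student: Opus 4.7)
The strategy is to exploit that $\valfn(\xi, \vswitch)$ is the unique fixed point of the $\gamma$-contraction $\dynamic(\xi, \vswitch, \cdot)$ (Lemma~\ref{lem:bellman-contraction}), reducing joint continuity of the fixed point to continuity of the Bellman operator. Two structural observations make $\|\cdot\|_*$ the right norm to work with: (i) by Lemma~\ref{lem:convergence-vstay}, whenever $(\xi, \vswitch) \in \Upsilon$, the optimal action at any $(z, n)$ with $n \geq K_{\max}$ is to switch, so $\valfn(z, n; \xi, \vswitch) = F(z, n) + \gamma \vswitch$, which is independent of $\xi$ and Lipschitz in $\vswitch$; (ii) consequently, the sup-norm is controlled by $\|\valfn(\xi_1, \vswitch_1) - \valfn(\xi_2, \vswitch_2)\|_\infty \leq \|\valfn(\xi_1, \vswitch_1) - \valfn(\xi_2, \vswitch_2)\|_* + \gamma|\vswitch_1 - \vswitch_2|$. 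Plugging (ii) into the standard contraction-mapping derivation yields
$$(1-\gamma)\|\valfn(\xi_1, \vswitch_1) - \valfn(\xi_2, \vswitch_2)\|_* \leq \gamma^2|\vswitch_1 - \vswitch_2| + \|\dynamic(\xi_1, \vswitch_1, U_0) - \dynamic(\xi_2, \vswitch_2, U_0)\|_*,$$
where $U_0 = \valfn(\xi_2, \vswitch_2)$ is uniformly bounded by $\uvswitch$ (Lemma~\ref{lem:upperbound-val}). The task thus reduces to bounding $\|\dynamic(\xi_1, \vswitch_1, U_0) - \dynamic(\xi_2, \vswitch_2, U_0)\|_*$.

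Since $\|\cdot\|_*$ only probes the finite set $\{(z, n) : n < K_{\max}\}$, it suffices to establish pointwise continuity of $\dynamic(\cdot, \cdot, U_0)(z, n)$ at each such $(z, n)$. Expanding the Bellman operator and using that $\max\{\cdot, \cdot\}$ is $1$-Lipschitz gives
$$|\dynamic(\xi_1, \vswitch_1, U_0)(z,n) - \dynamic(\xi_2, \vswitch_2, U_0)(z,n)| \leq \gamma|\vswitch_1 - \vswitch_2| + \gamma\bigl|\expec^{\xi_1}[U_0(Z_\tau, N_\tau)|z,n] - \expec^{\xi_2}[U_0(Z_\tau, N_\tau)|z,n]\bigr|.$$
Continuity in $\vswitch$ is immediate, so the core step is to show that $\xi \mapsto \expec^\xi[U_0(Z_\tau, N_\tau)|z, n]$ is continuous for each fixed $(z, n)$ with $n < K_{\max}$. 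Using independence of the agent's $\text{Exp}(\lambda)$ clock $\tau$,
$$\expec^\xi[U_0(Z_\tau, N_\tau)|z,n] = \int_0^\infty \lambda e^{-\lambda t} \expec^\xi[U_0(Z_t, N_t)|z,n]\, dt,$$
and since $|U_0| \leq \uvswitch$, dominated convergence further reduces the task to continuity of $\xi \mapsto \expec^\xi[U_0(Z_t, N_t)|z,n]$ for each fixed $t \geq 0$.

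When $\xi_k \to \xi$ in the sup-norm, Lemma~\ref{lem:continuity-kappa} gives $\kappa(\xi_k) \to \kappa(\xi)$, and hence the transition rate matrices $\sQ^{\xi_k, \kappa(\xi_k)}$ converge entrywise to $\sQ^{\xi, \kappa(\xi)}$. Standard results on the continuous dependence of CTMC semigroups on their generators \citep{ethierK86} then yield the weak convergence of the time-$t$ marginals $p_t^{\xi_k, \kappa(\xi_k)}((z,n), \cdot) \Rightarrow p_t^{\xi, \kappa(\xi)}((z,n), \cdot)$ on $\sets$. Combined with boundedness of $U_0$ and uniform tightness of $N_t$ across $\xi_k$ (obtainable from the $\mminfty$-coupling underlying Lemma~\ref{lem:tightness}), this delivers the required convergence $\expec^{\xi_k}[U_0(Z_t, N_t)|z,n] \to \expec^\xi[U_0(Z_t, N_t)|z,n]$. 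The main technical obstacle is this last step: the infinite state space precludes a direct elementary proof, and combining generator convergence with uniform tail control via the $\mminfty$-coupling is the essential ingredient that makes the passage from entrywise-convergence of the generators to convergence of bounded expectations go through.
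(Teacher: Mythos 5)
Your proposal is correct and follows essentially the same route as the paper's proof: the same $\|\cdot\|_*$-norm bootstrap exploiting $\valfn(z,n;\xi,\vswitch) = F(z,n) + \gamma\vswitch$ for $n \geq K_{\max}$ (via Lemma~\ref{lem:convergence-vstay}), the same contraction estimate reducing matters to $\|\dynamic(\xi_1,\vswitch_1,U_0) - \dynamic(\xi_2,\vswitch_2,U_0)\|_*$ for a fixed bounded $U_0$, and the same final step of writing $\expec^\xi[U_0(Z_\tau,N_\tau)\,|\,z,n]$ as an exponential-clock integral, truncating the tail, and deducing convergence of time-$t$ laws from entrywise convergence of the generators $\sQ^{\xi_m,\kappa(\xi_m)}$ (the paper cites \citep{xia94} and \citep[pg.~262, problem~8]{ethierK86} for exactly this passage, which is where you correctly locate the technical crux). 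Your only departures are cosmetic: you make the reliance on Lemma~\ref{lem:continuity-kappa} explicit where the paper leaves it implicit, and you supplement the semigroup-convergence step with the $\mathsf{M/M/\infty}$-coupling tightness bound, which the paper's cited references render unnecessary but which does no harm.
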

\begin{proof}
  For $(\xi^i, \vswitch^i) \in \Upsilon$ for $i=1,2$, let
  $W_i(z,n) = \valfn(z,n; \xi^i, \vswitch^i)$. Using the definition of
  $\dynamic$ and Lemma~\ref{lem:convergence-vstay}, we obtain
  $W_i(z,n) = F(z,n) + \gamma \vswitch^i$ for all $z \in \setZ$ and
  $n \geq K_{\max}$. This implies that
  \begin{align*}
    | W_1(z,n) - W_2(z,n) | \leq \gamma |\vswitch^1 - \vswitch^2|, \quad \text{for $z \in \setZ$ and $n \geq K_{\max}$.}
  \end{align*}
  This implies that
  $\|W_1 - W_2\|_\infty \leq \max\{\|W_1 - W_2\|_* , \gamma |
  \vswitch^1 - \vswitch^2|\}$.
  Next, we have
  \begin{align*}
    \|W_1  - W_2\|_*
   = &  \| \dynamic(\xi^1, \vswitch^1,W_1) -\dynamic(\xi^2, \vswitch^2,W_2)\|_*\\
    \leq &  \| \dynamic(\xi^1, \vswitch^1,W_1) -\dynamic(\xi^2, \vswitch^2,W_1)\|_* \\
    & +  \| \dynamic(\xi^2, \vswitch^2,W_1) -\dynamic(\xi^2, \vswitch^2,W_2)\|_*\\
    \leq  & \| \dynamic(\xi^1, \vswitch^1,W_1) -\dynamic(\xi^2, \vswitch^2,W_1)\|_*  \\
    & +  \| \dynamic(\xi^2, \vswitch^2,W_1) -\dynamic(\xi^2, \vswitch^2,W_2)\|_\infty\\
    \leq  & \| \dynamic(\xi^1, \vswitch^1,W_1) -\dynamic(\xi^2, \vswitch^2,W_1)\|_* + \gamma \| W_1 -W_2\|_\infty.
  \end{align*}
  where we have used Lemma~\ref{lem:bellman-contraction} in the last
  inequality. Using the fact that
  $\|W_1 - W_2\|_* \leq \|W_1 - W_2\|_\infty \leq \max\{\|W_1 -
  W_2\|_* , \gamma | \vswitch^1 - \vswitch^2|\} \leq \|W_1 - W_2\|_* +
  \gamma | \vswitch^1 - \vswitch^2|$ and after some straightforward
  algebra, we obtain
  \begin{align*}
    \|W_1 - W_2\|_\infty
    &\leq \frac{1}{1-\gamma} \left( \| \dynamic(\xi^1, \vswitch^1,W_1) -\dynamic(\xi^2, \vswitch^2,W_1)\|_* + \gamma |\vswitch^1 - \vswitch^2|\right).
  \end{align*}
  From Lemma~\ref{lem:expec-continuous}, we obtain that the first term
  in the parenthesis can be made arbitrarily small by setting
  $\| \xi^1 - \xi^2 \|_\infty$ and $|\vswitch^1 -\vswitch^2|$
  correspondingly small enough. Thus, we conclude that
  $\valfn(\xi, \vswitch)$ is jointly continuous in
  $(\xi, \vswitch) \in \Upsilon$.
\end{proof}
The following auxiliary lemma is used in the proof of
Lemma~\ref{lem:valf-continuous}.
\begin{lemma}\label{lem:expec-continuous} Let $(\xi^m, \vswitch^m) \in \Upsilon$ with $(\xi^m, \vswitch^m) \to (\xi, \vswitch) \in \Upsilon$ as $m \to \infty$. For any $U \in \bounded$, we
  have
  $\|\dynamic(\xi^m, \vswitch^m, U) - \dynamic(\xi, \vswitch, U)\|_*
  \to 0$ as $m \to \infty$.
\end{lemma}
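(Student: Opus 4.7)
The plan is to exploit the fact that the $\|\cdot\|_*$ norm only sees the finite state set $\sets_* \defeq \setZ \times \{0, 1, \ldots, K_{\max}-1\}$, so the problem reduces to proving pointwise convergence $\dynamic(\xi^m, \vswitch^m, U)(z,n) \to \dynamic(\xi, \vswitch, U)(z,n)$ at each fixed $(z,n) \in \sets_*$. Writing $E_m \defeq \expec^{\xi^m}[U(Z_\tau, N_\tau) \mid z,n]$ and $E$ for the corresponding limit quantity, one has
\begin{align*}
\bigl|\dynamic(\xi^m, \vswitch^m, U)(z,n) - \dynamic(\xi, \vswitch, U)(z,n)\bigr| \leq \gamma\bigl(|E_m - E| + |\vswitch^m - \vswitch|\bigr),
\end{align*}
using the $1$-Lipschitz property of $(a,b)\mapsto\max(a,b)$. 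Since $\vswitch^m \to \vswitch$, the entire lemma boils down to showing $E_m \to E$ for each such $(z,n)$.

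To obtain $E_m \to E$, I would condition on $\tau$, which is $\mathrm{Exp}(\lambda)$-distributed and independent of the chain, and represent
\begin{align*}
E_m = \int_0^\infty \lambda e^{-\lambda s} (P^m_s U)(z,n)\, ds,
\end{align*}
where $P^m_s$ is the transition semigroup of $\markov(\xi^m, \kappa(\xi^m))$; let $P_s$ denote the analogous semigroup for the limit. Because the integrand is bounded in absolute value by $\lambda e^{-\lambda s}\|U\|_\infty$, dominated convergence reduces the task to showing $(P^m_s U)(z,n) \to (P_s U)(z,n)$ for each fixed $s > 0$.

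The pointwise semigroup convergence splits into a tightness step and a finite-state-continuity step. For tightness, the coupling of Lemma~\ref{lem:coupling}, together with $\kappa(\xi^m) \leq \beta\lambda$, dominates $N_s^m$ (started from $n < K_{\max}$) by the time-$s$ occupancy of an $\mminfty{\lambda(1-\gamma)}{\beta\lambda}$ queue started from $n$; since this queue has finite mean at time $s$, a Markov inequality yields some $M = M(s,\varepsilon)$ with $\prob^{\xi^m}(N_s > M \mid z,n) < \varepsilon$ uniformly in $m$. Hence truncating $U$ to the finite set $\setZ \times \{0,\ldots,M\}$ perturbs both $(P_s^m U)(z,n)$ and $(P_s U)(z,n)$ by at most $\varepsilon\|U\|_\infty$. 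For the finite-state step, on $\setZ \times \{0,\ldots,M\}$ the transition rates in \eqref{eq:loc-dyn} are Lipschitz in $(\xi, \kappa)$ in the sup-norm (this is a byproduct of the estimates used in the proof of Lemma~\ref{lem:kernel-continuity}); combined with continuity of $\kappa(\xi)$ from Lemma~\ref{lem:continuity-kappa}, the truncated generators converge, and since the matrix exponential $\exp(sQ)$ is continuous in $Q$ for a finite-state continuous-time Markov chain, the truncated time-$s$ transition probabilities converge as $m \to \infty$. Letting $\varepsilon \to 0$ then gives $(P^m_s U)(z,n) \to (P_s U)(z,n)$.

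The main technical obstacle is the infinite state space: without a uniform tail bound, the matrix-exponential continuity argument does not transfer, so the $M/M/\infty$ domination from Lemma~\ref{lem:coupling} is the essential tool that allows reduction to a finite problem with error $O(\varepsilon)$ uniformly in $m$. A minor care-point is to make the truncated chain formally well defined (e.g., by absorbing all states above $M$ and bounding the mismatch against the untruncated chain by the tail probability $\varepsilon$); alternatively, one could invoke a Trotter--Kurtz-style convergence theorem from \citep{ethierK86} to bypass the truncation, though the truncation route is more self-contained.
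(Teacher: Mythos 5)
Your proposal is correct and, up to the final analytic step, follows the same decomposition as the paper's own proof: the paper likewise uses the $\gamma$-Lipschitz property of the maximum to reduce the claim to showing $\expec^{\xi^m}[U(Z_\tau,N_\tau)\mid z,n] \to \expec^{\xi}[U(Z_\tau,N_\tau)\mid z,n]$ for each $z \in \setZ$ and $n < K_{\max}$, and likewise conditions on the independent $\mathrm{Exp}(\lambda)$ epoch $\tau$ and disposes of the time-integral tail (the paper truncates at a large $T$ using the uniform bound $\|U\|_\infty e^{-\lambda T}$; your dominated-convergence step is the same move). Where you genuinely diverge is the fixed-$t$ semigroup convergence: the paper observes that the entries of $\sQ^{\xi^m}$ converge pointwise to those of $\sQ^{\xi}$ (implicitly via the continuity of $\kappa(\cdot)$ from Lemma~\ref{lem:continuity-kappa}, which you invoke explicitly) and then cites \citep[pg.~262, problem~8]{ethierK86} or \citep[Example~1.1]{xia94} to conclude weak convergence of the conditional laws --- precisely the ``Trotter--Kurtz-style'' shortcut you mention as an alternative at the end. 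Your route instead makes this step self-contained: uniform-in-$m$ tightness from the $\mminfty{\lambda(1-\gamma)}{\beta\lambda}$ domination of Lemma~\ref{lem:coupling} (valid since $\kappa(\xi^m) \leq \beta\lambda$), truncation to a finite box, and continuity of the matrix exponential in the finite generator. That buys transparency and avoids leaning on external references, at the cost of one technicality you only partially flag: your Markov-inequality bound on the time-$s$ marginal $\prob(N_s > M)$ suffices for truncating $U$ itself, but to couple the original chain with the absorbed finite-state chain on $[0,s]$ --- which is what the matrix-exponential comparison requires --- you need $\prob(\sup_{t\leq s} N_t > M)$ to be small uniformly in $m$. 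This is easily repaired: the coupling of Lemma~\ref{lem:coupling} gives $N_t \leq \widehat N_t$ for all $t$ simultaneously, and since $N_t$ increases only through arrivals at rate $\kappa(\xi^m) \leq \beta\lambda$, one has $\sup_{t \leq s} N_t \leq n + A_s$ with $A_s$ Poisson of mean at most $\beta\lambda s$, uniformly in $m$. With that one-line fix your argument is complete and arguably more elementary than the paper's, which delegates the heart of this step to the cited weak-convergence results.
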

\begin{proof} Let $(\xi^m, \vswitch^m)$ be as in the statement of the
  lemma, and let $W_m = \dynamic(\xi^m, \vswitch^m, U)$ and
  $W = \dynamic(\xi, \vswitch, U)$. By definition of $\dynamic$, we
  have
  \begin{align*}
    |W_m(z,n) - W(z,n)|
    = &  \gamma | \max\{ \expec^m[ U(Z_\tau, N_\tau) | z,n], \vswitch^m\}  \\
    & - \max\{ \expec^\xi[ U(Z_\tau, N_\tau) | z,n], \vswitch\} |\\
    \leq & \gamma \max\{ |\expec^m[ U(Z_\tau, N_\tau) | z,n] \\
    & - \expec^\xi[ U(Z_\tau, N_\tau) | z,n] |, | \vswitch^m  -  \vswitch|\},
  \end{align*}
  where we let $\expec^m = \expec^{\xi^m}$. Thus, it suffices to show
  that the first term inside the maximization converges to zero as
  $m \to \infty$ for all $z \in \setZ$ and $n < K_{\max}$.  Observe
  that, since $U \in\bounded$ and $\tau$ is exponentially distributed
  with parameter $\lambda$, we have
  \begin{align*}
    \expec^\xi\left[ U(Z_\tau, N_\tau) | z,n \right]
   &= \int_0^\infty \lambda \exp(- \lambda t) \expec^\xi[ U(Z_t, N_t) | z,n, \tau = t] dt\\
   &= \int_0^T \lambda \exp(- \lambda t) \expec^\xi[ U(Z_t, N_t) | z,n, \tau = t] dt\\
    &\quad + \int_T^\infty \lambda \exp(- \lambda t) \expec^\xi[ U(Z_t, N_t) | z,n, \tau = t] dt
  \end{align*}
  with similar expressions for $\xi^m$ in place of $\xi$. For large
  enough value of $T>0$, the second term in the last equation can be
  made arbitrarily small (uniformly for $\xi$ and all $\xi^m$). Thus,
  again it suffices to show that the first term in the last equation
  is continuous in $(\xi, \vswitch)$ for all $z \in \setZ$ and
  $n < K_{\max}$ and for large enough $T$.

  Now, using the definition \eqref{eq:loc-dyn} of the transition rate
  matrix $\sQ^\xi$ of the chain $\markov(\xi, \kappa(\xi))$ (and
  similarly $\sQ^m = \sQ^{\xi^m}$ of the chain
  $\markov(\xi^m, \kappa(\xi^m))$), we obtain that
  $\sQ^m( (u,k) \to (v,\ell)) \longrightarrow \sQ( (u,k) \to
  (v,\ell))$ as $m \to \infty$ for all $(u,k), (v,\ell) \in
  \sets$. Then, from \citep[See pg. 2183, Example 1.1]{xia94} or
  \citep[pg. 262, problem 8]{ethierK86}, we obtain that the measure
  $\prob^m( \cdot | z,n, \tau=t)$ converges weakly to
  $\prob^{\xi}(\cdot | z,n, \tau=t)$. From this, we conclude that
  $\int_0^T \lambda \exp(- \lambda t) \expec^m[ U(Z_t, N_t) | z,n,
  \tau = t] dt$ converges to
  $\int_0^T \lambda \exp(- \lambda t) \expec^\xi[ U(Z_t, N_t) | z,n,
  \tau = t] dt$ as $m \to \infty$. This completes the proof.
\end{proof}
The continuity of $\valsw(\xi, \vswitch)$ is then obtained as a
corollary of Lemma~\ref{lem:valf-continuous}.
\begin{lemma}\label{lem:valsw-continuous} The value function 
  $\valsw(\xi, \vswitch)$ is jointly continuous in
  $(\xi, \vswitch) \in \Upsilon$.
\end{lemma}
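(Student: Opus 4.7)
The plan is to decompose the continuity argument into two layers: first establish joint continuity of $\valst(z,n;\xi,\vswitch)$ for each fixed $(z,n)\in\sets$, and then combine this with joint continuity of the stationary distribution $\pi(\xi,\kappa(\xi))$ and a uniform tightness argument to handle the infinite sum defining $\valsw$.

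For the pointwise continuity of $\valst$, I would start from $\valst(z,n;\xi,\vswitch) = \expec^\xi[\valfn(Z_\tau,N_\tau;\xi,\vswitch)\mid z,n]$ and apply the triangle inequality
\begin{align*}
  |\valst(z,n;\xi_1,\vswitch_1) - \valst(z,n;\xi_2,\vswitch_2)|
  &\leq \bigl|\expec^{\xi_1}[\valfn(\cdot;\xi_1,\vswitch_1)\mid z,n] - \expec^{\xi_2}[\valfn(\cdot;\xi_1,\vswitch_1)\mid z,n]\bigr| \\
  &\quad + \|\valfn(\xi_1,\vswitch_1) - \valfn(\xi_2,\vswitch_2)\|_\infty.
\end{align*}
The second term vanishes by Lemma~\ref{lem:valf-continuous}. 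The first term vanishes by the same weak-convergence argument used in the proof of Lemma~\ref{lem:expec-continuous} applied to the fixed bounded function $U = \valfn(\cdot;\xi_1,\vswitch_1)\in\bounded$: joint continuity of the transition kernel of $\markov(\xi,\kappa(\xi))$ in $\xi$ (via Lemma~\ref{lem:continuity-kappa}) implies weak continuity of the law of $(Z_\tau,N_\tau)$ in $\xi$, hence continuity of the expectation of any fixed bounded $U$.

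For the joint continuity of $\valsw$, writing $\pi_i = \pi(\xi_i,\kappa(\xi_i))$, I would bound
\begin{align*}
  |\valsw(\xi_1,\vswitch_1) - \valsw(\xi_2,\vswitch_2)|
  &\leq \sum_{(z,n)\in\sets} |\pi_1(z,n) - \pi_2(z,n)|\,|\valst(z,n+1;\xi_1,\vswitch_1)| \\
  &\quad + \sum_{(z,n)\in\sets} \pi_2(z,n)\,|\valst(z,n+1;\xi_1,\vswitch_1) - \valst(z,n+1;\xi_2,\vswitch_2)|.
\end{align*}
Since $|\valst|\leq\uvswitch$ uniformly by Lemma~\ref{lem:upperbound-val}, the first sum is at most $\uvswitch\|\pi_1-\pi_2\|_1$, which vanishes by joint continuity of $(\xi,\kappa)\mapsto\pi(\xi,\kappa)$ (Lemma~\ref{lem:stationary-dist-continuity}) composed with continuity of $\kappa(\xi)$. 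For the second sum, given $\epsilon>0$, I would invoke uniform tightness of $\Gamma$ (Lemma~\ref{lem:tightness}) to select $L$ such that $\sum_{n>L}\pi_i(z,n) < \epsilon/(2\uvswitch)$ uniformly in $i$, bounding the tail contribution by $\epsilon$, and then use the pointwise continuity of $\valst$ from the first step, applied uniformly over the finite set $\{(z,n):n\leq L+1\}$, to send the truncated sum to zero.

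The main obstacle is precisely the infinite sum over $n$, since naive pointwise continuity of each summand does not immediately yield continuity of the weighted sum; the key tool that overcomes this is uniform tightness of $\Gamma$, which together with the uniform bound $\uvswitch$ on $\valst$ permits truncation, after which the argument reduces to assembling the continuity results for $\pi$ and $\valst$ on a finite state set.
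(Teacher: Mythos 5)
Your proposal is correct and follows essentially the same route as the paper's proof: the paper likewise uses the uniform bound $\uvswitch$ on $\valst$ (Lemma~\ref{lem:upperbound-val}), joint continuity of $\pi(\xi,\kappa(\xi))$ (Lemmas~\ref{lem:stationary-dist-continuity} and~\ref{lem:continuity-kappa}), and tightness of $\Gamma$ (Lemma~\ref{lem:tightness}) to reduce the claim to continuity of $\valst(z,n;\cdot,\cdot)$ on a finite set of states, which it then establishes via the same triangle-inequality split into a $\|\valfn(\xi^m,\vswitch^m)-\valfn(\xi,\vswitch)\|_\infty$ term (Lemma~\ref{lem:valf-continuous}) and a fixed-function weak-convergence term (the argument of Lemma~\ref{lem:expec-continuous}). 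The only difference is presentational: you make explicit the truncation of the infinite sum that the paper leaves implicit in its ``it suffices to show'' step.
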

\begin{proof}

  Recall the definition \eqref{eq:stswdef} of $\valsw(\xi,
  \vswitch)$:
  \begin{align*}
    \valsw(\xi, \vswitch) = \sum_{(z,n) \in \sets} \pi(z,n) \valsw(z,n+1; \xi, \vswitch),
  \end{align*}
  where $\pi = \pi(\xi, \kappa(\xi))$ is the invariant distribution of
  $\markov(\xi, \kappa(\xi))$. From Lemma~\ref{lem:upperbound-val}, we
  have $\|\valst(\xi, \vswitch)\|_\infty \leq \uvswitch$. Also, note
  that Lemma~\ref{lem:stationary-dist-continuity} and
  Lemma~\ref{lem:continuity-kappa} imply that the invariant
  distribution $\pi(\xi, \kappa(\xi))$ is continuous. Moreover, from
  Lemma~\ref{lem:tightness}, we obtain that the set of invariant
  distributions $\Gamma$ is tight. These results together imply that
  it suffices to show that $\valst(z,n; \xi, \vswitch)$ is uniformly
  continuous in $(\xi, \vswitch) \in \Upsilon$ for all $z \in \setZ$
  and all $n < M$ for some large enough $M$.

  Let $(\xi^m, \vswitch^m) \in \Upsilon$ with
  $(\xi^m, \vswitch^m) \to (\xi, \vswitch) \in \Upsilon$ as $m \to \infty$. We have
  \begin{align*}
    & | \valst(z,n; \xi^m, \vswitch^m) - \valst(z,n; \xi, \vswitch) |\\
    \leq & | \expec^{\xi^m}[ \valfn(Z_\tau, N_\tau;\xi^m, \vswitch^m)| z,n] - \expec^{\xi}[ \valfn(Z_\tau, N_\tau;\xi, \vswitch)| z,n] |\\
    \leq & | \expec^{\xi^m}[ \valfn(Z_\tau, N_\tau;\xi^m, \vswitch^m)| z,n] - \expec^{\xi^m}[ \valfn(Z_\tau, N_\tau;\xi, \vswitch)| z,n] |\\
    & +  | \expec^{\xi^m}[ \valfn(Z_\tau, N_\tau;\xi, \vswitch)| z,n] - \expec^{\xi}[ \valfn(Z_\tau, N_\tau;\xi, \vswitch)| z,n] |\\
    \leq  &|  \valfn(\xi^m, \vswitch^m) - \valfn(\xi, \vswitch)\|_\infty +  | \expec^{\xi^m}[ \valfn(Z_\tau, N_\tau;\xi, \vswitch)| z,n]\\
    & - \expec^{\xi}[ \valfn(Z_\tau, N_\tau;\xi, \vswitch)| z,n] |.
  \end{align*}
  From Lemma~\ref{lem:valf-continuous}, we obtain that as
  $m \to \infty$, the first term converges to zero. Moreover, from the
  same argument as in the proof of Lemma~\ref{lem:valf-continuous}, we
  obtain that
  $\expec^{\xi^m}[ \valfn(Z_\tau, N_\tau;\xi, \vswitch)| z,n] \to
  \expec^{\xi}[ \valfn(Z_\tau, N_\tau;\xi, \vswitch)| z,n]$ as
  $m \to \infty$ for each $(z,n) \in \sets$. From this, we conclude
  that $\valst(z,n;\xi, \vswitch)$ is uniformly continuous in
  $(\xi, \vswitch) \in \Upsilon$ for all $z \in \setZ$ and all $n < M$
  for large enough $M$.
\end{proof}


We are now ready to show that the correspondence $\map$ is
upper-hemicontinuous.
\begin{lemma}\label{lem:upperhemicontinuous} The correspondence
  $\map : \Upsilon \rightrightarrows \Upsilon$ is
  upper-hemicontinuous.
\end{lemma}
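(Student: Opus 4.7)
The plan is to leverage the closed-graph characterization of upper hemicontinuity: since $\Upsilon$ is compact and $\map$ takes values in the compact set $\Upsilon$, it suffices to show $\map$ has a closed graph (see, e.g., \citep{aliprantisB2006}). Accordingly, I would fix sequences $(\xi^m, \vswitch^m) \to (\xi, \vswitch)$ in $\Upsilon$ and $(\zeta^m, w^m) \in \map(\xi^m, \vswitch^m)$ with $(\zeta^m, w^m) \to (\zeta, w)$, and argue $(\zeta, w) \in \map(\xi, \vswitch)$.

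For the second coordinate, since $w^m = \valsw(\xi^m, \vswitch^m)$ and $\valsw$ is jointly continuous on $\Upsilon$ by Lemma~\ref{lem:valsw-continuous}, passing to the limit yields $w = \valsw(\xi, \vswitch)$ directly. For the first coordinate, I would show $\zeta \in \mathcal{X}(\xi, \vswitch)$ by splitting across states. For $n \geq K_{\max}$ we have $\zeta^m(z,n) = 0$ since $\zeta^m \in \compact$, so $\zeta(z,n) = 0$ in the limit, which is consistent with optimality under $(\xi, \vswitch) \in \Upsilon$ by Lemma~\ref{lem:convergence-vstay}. For $n < K_{\max}$, the critical ingredient is the joint continuity of $\valst(z,n; \xi, \vswitch)$ in $(\xi, \vswitch) \in \Upsilon$ for each fixed $(z,n)$: if $\valst(z,n;\xi,\vswitch) > \vswitch$, then for all large $m$ we have $\valst(z,n;\xi^m,\vswitch^m) > \vswitch^m$, so optimality forces $\zeta^m(z,n) = 1$ and therefore $\zeta(z,n) = 1$; the case $\valst(z,n;\xi,\vswitch) < \vswitch$ is symmetric, and the case of equality is vacuous since any value $\zeta(z,n) \in [0,1]$ is admissible for optimality in the limit.

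The main obstacle is verifying the aforementioned joint continuity of $\valst(z,n;\xi, \vswitch)$ for each $(z,n)$. I would follow essentially the same argument as in the proof of Lemma~\ref{lem:valsw-continuous}: decompose the difference $\expec^{\xi^m}[\valfn(Z_\tau, N_\tau;\xi^m, \vswitch^m) \mid z,n] - \expec^{\xi}[\valfn(Z_\tau, N_\tau;\xi, \vswitch) \mid z,n]$ via the triangle inequality into (i) the term where the integrand varies, controlled by $\|\valfn(\xi^m,\vswitch^m) - \valfn(\xi,\vswitch)\|_\infty \to 0$ from Lemma~\ref{lem:valf-continuous}, and (ii) the term where the underlying law of $(Z_\tau, N_\tau)$ varies, controlled by weak convergence of this law under $\markov(\xi^m, \kappa(\xi^m))$, which follows from the convergence of transition rates (using continuity of $\kappa(\cdot)$ from Lemma~\ref{lem:continuity-kappa}) together with the results invoked in Lemma~\ref{lem:expec-continuous} via \citep{ethierK86}. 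Granted these two continuity facts, the closed-graph verification is routine, and upper hemicontinuity of $\map$ follows.
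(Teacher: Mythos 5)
Your proposal is correct and follows essentially the same route as the paper: the paper's proof likewise reduces the problem to upper hemicontinuity of $\mathcal{X}$ via the continuity of $\valsw$ (Lemma~\ref{lem:valsw-continuous}), and then runs the same sequential (closed-graph) argument using the continuity of $\valst(z,n;\cdot)$ --- established inside the proof of Lemma~\ref{lem:valsw-continuous} by exactly the two-term decomposition you describe --- to force $\zeta(z,n)=1$ or $0$ in the limit when $\valst(z,n;\xi,\vswitch)\neq\vswitch$. Your explicit handling of the states $n\geq K_{\max}$ and of the equality case only makes more visible steps the paper leaves implicit.
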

\begin{proof}

  By definition,
  $\map(\xi, \vswitch) = \mathcal{X}(\xi, \vswitch) \times \{
  \valsw(\xi, \vswitch)\}$ for $(\xi, \vswitch) \in \Upsilon$. From
  Lemma~\ref{lem:valsw-continuous}, we obtain that
  $\valsw(\xi, \vswitch)$ is jointly continuous in
  $(\xi, \vswitch) \in \Upsilon$. Thus, it suffices to show that the
  correspondence $\mathcal{X}: \Upsilon \rightrightarrows \compact$ is
  upper-hemicontinuous.

  Consider a sequence
  $(\xi^n, \vswitch^n , \zeta^n) \to (\xi, \vswitch, \zeta)$ as
  $n \to \infty$ such that
  $\zeta^n \in \mathcal{X}(\xi^n, \vswitch^n)$ for each $n \geq 0$. By
  continuity of $\valst(\cdot)$, we obtain that if
  $\valst(z,n, \xi, \vswitch) > \vswitch$ for some $(z,n) \in \sets$,
  then for all large enough $m$, we must have
  $\valst(z,n, \xi^m, \vswitch^m) > \vswitch^m$, and hence
  $\zeta^m(z,n) = 1$. Similarly, if
  $\valst(z,n, \xi, \vswitch) < \vswitch$, then $\zeta^m(z,n) = 0$ for
  all large enough $m$. Since $\zeta^m \to \zeta$, this implies that
  $\zeta(z,n) = 1$ if $\valst(z,n, \xi, \vswitch) > \vswitch$, and
  $\zeta(z,n) = 0$ if $\valst(z,n, \xi, \vswitch) < \vswitch$. Thus,
  we obtain that $\zeta(z,m) \in \mathcal{X}(\xi, \vswitch)$.
\end{proof}

\section{Existence of an optimal threshold strategy}
\label{ap:proofs_threshold_strategy}

In this section we provide the proof of
Theorem~\ref{thm:threshold}. We prove this result in two steps: first,
we prove Lemma~\ref{lem:value-decreasing}, which states that the value
function $\vstay: \sets \rightarrow \reals$ is non-increasing in the
number of agents $n$ at the location for any fixed resource level
$z\ in \setZ$. Second, we show in Lemma~\ref{lem:convergence-vstay}
that $\lim_{n \rightarrow \infty}\vstay(z,n) \leq \gamma\vswitch$ for
all $z \in \setZ$. Therefore there always exists a threshold strategy
in the set of best responses $\opt(\xi, \kappa, \vswitch)$.

\begin{proof}[Proof of Lemma \ref{lem:value-decreasing}]
  We define a partial order $\preccurlyeq_p$ on the state space
  $\sets$ of $\markov(\xi, \kappa)$ as follows: for
  $(z_1, n_1), (z_2, n_2) \in \sets$,
  $(z_1, n_1) \preccurlyeq_p (z_2, n_2)$ if and only if $z_1 = z_2$
  and $n_1 \leq n_2$. For any function $f: \sets \rightarrow \reals$,
  we say $f$ is decreasing with respect to $\preccurlyeq_p$ if for all
  $(z_1, n_1), (z_2, n_2) \in \sets$ such that
  $(z_1, n_1) \preccurlyeq_p (z_2, n_2)$, we have
  $f(z_1, n_1) \geq f(z_2, n_2)$.

  Thus, our goal is to show that the value function $\vstay$ of
  $\optstop(\xi,\kappa, \vswitch)$ is decreasing with respect to
  $\preccurlyeq_p$. We note that the property ``decreasing with
  respect to $\preccurlyeq_p$'' is a closed convex cone property for
  functions on $\sets$, as defined in \citep{smithM2002}. Thus, using
  their Proposition~5, we can conclude that $\vstay$ has this property
  if the following two conditions hold:
  \begin{enumerate}
  \item The resource sharing function $F$ is decreasing with respect
    to $\preccurlyeq_p$.
  \item Let $(Z_t, N_t) \sim \markov(\xi, \kappa)$. For any
    $(z,n) \in \sets$, let $\nu_{(z,n)}$ be the probability
    distribution of $(Z_{\tau}, N_{\tau}) \sim \nu_{(z,n)}$
    conditioning on $(Z_0, N_0) = (z,n)$, where $\tau$ is distributed
    independently as an exponential with rate $\lambda$, denoting the
    first decision epoch of a fixed agent. Then for any
    $f: \sets \rightarrow \reals$ that is decreasing with respect to
    $\preccurlyeq_p$, it must hold that
    $\expec[f(Z, N) | (Z,N) \sim \nu_{(z_1,n_1)}] \geq \expec[f(Z, N)
    | (Z,N) \sim \nu_{(z_2, n_2)}]$ for all
    $(z_1, n_1) \preccurlyeq_p (z_2, n_2)$.
  \end{enumerate}
  Since $F(z,n)$ is decreasing in $n$ for each $z \in \setZ$, we
  immediately obtain the first condition. We now show that the second
  condition also holds using a coupling argument.

  Suppose $(z_1, n_1) \preccurlyeq_p (z_2, n_2)$. By using an argument
  same as that in the proof of Lemma~\ref{lem:coupling}, we obtain
  that there exists a coupling of the two processes
  $(\zstate{t}{i}, \nagent{t}{i}) \sim \markov(\xi, \kappa)$ with
  $(\zstate{0}{i}, \nagent{0}{i}) = (z_i, n_i)$ for $i=1,2$, such that
  for all $t \geq 0$,
  $(\zstate{t}{1}, \nagent{t}{1}) \preccurlyeq_p (\zstate{t}{2},
  \nagent{t}{2})$. Thus, for any $f$ that is decreasing with respect
  to $\preccurlyeq_p$ we have
  $f(\zstate{t}{1}, \nagent{t}{1}) \geq f(\zstate{t}{2},
  \nagent{t}{2})$ for all $t \geq 0$, and therefore
  $ \expec[f(\zstate{\tau}{1}, \nagent{\tau}{1})] \geq
  \expec[f(\zstate{\tau}{2}, \nagent{\tau}{2})]$, where $\tau$ is a
  distributed independently as an exponential with rate
  $\lambda$. Since
  $(\zstate{\tau}{i}, \nagent{\tau}{i}) \sim \nu_{(z_i, n_i)}$ for
  $i=1,2$, we obtain the result.
\end{proof}

\section{Coupling results}
\label{ap:coupling}

In this section, we obtain structural properties of the Markov chain
$\markov(\xi, \kappa)$ by coupling the chain with an $M/M/\infty$
queue.

Let $\mminfty{\lambda}{\kappa}$ denote an (independent) $M/M/\infty$
queue with arrival rate $\kappa$ and service rate $\lambda$. We begin
with the following simple result which states that a queue with higher
arrival rate and/or lower service rate is more likely to have more
agents in the queue. The proof is straightforward and omitted.

\begin{lemma}
  \label{lem:compare-mminfty-queue-arrival}
  Let $\nagent{t}{i}$, for $i=1,2$, denote the number of agents at
  time $t$ in an (independent) $M/M/\infty$ queue with arrival rate
  $\kappa_i$ and service rate $\lambda_i$. Suppose
  $\nagent{0}{1} = \nagent{0}{2}$, and one of the following two
  conditions holds: (1) $\lambda_1 = \lambda_2$ and
  $\kappa_1 \leq \kappa_2$; or (2) $\lambda_1 \geq \lambda_2$ and
  $\kappa_1 = \kappa_2$. Then, for all $t \geq 0$, $\nagent{t}{1}$ is
  stochastically dominated by $\nagent{t}{2}$, i.e., for all
  $n \in \naturals_0$, we have
  $\prob(\nagent{t}{1} \geq n) \leq \prob(\nagent{t}{2} \geq n)$.
\end{lemma}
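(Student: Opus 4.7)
The plan is to establish the stochastic dominance by a pathwise coupling: I will construct the two queues on a common probability space so that $\nagent{t}{1} \leq \nagent{t}{2}$ holds almost surely for every $t \geq 0$, from which $\prob(\nagent{t}{1} \geq n) \leq \prob(\nagent{t}{2} \geq n)$ follows immediately. Because each $\mminfty{\lambda_i}{\kappa_i}$ is a birth-death process with constant birth rate $\kappa_i$ and state-dependent death rate $\lambda_i n$, it suffices to build a continuous-time Markov chain $(N^{(1)}_t, N^{(2)}_t)$ on $\naturals_0 \times \naturals_0$ whose marginals reproduce the two queues, whose initial law is concentrated on the diagonal $\{(n,n) : n \in \naturals_0\}$ (to match the assumption $\nagent{0}{1} = \nagent{0}{2}$), and whose transitions never escape the half-space $H \defeq \{(n_1,n_2) : n_1 \leq n_2\}$.

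For case (1), where $\lambda_1 = \lambda_2 = \lambda$ and $\kappa_1 \leq \kappa_2$, I will use the following transitions out of a state $(n_1, n_2) \in H$: a joint arrival $(n_1, n_2) \to (n_1+1, n_2+1)$ at rate $\kappa_1$; a queue-2-only arrival $(n_1, n_2) \to (n_1, n_2+1)$ at rate $\kappa_2 - \kappa_1$; a joint departure $(n_1, n_2) \to (n_1 - 1, n_2 - 1)$ at rate $\lambda n_1$; and a queue-2-only departure $(n_1, n_2) \to (n_1, n_2 - 1)$ at rate $\lambda (n_2 - n_1)$. Summing the rates that change $n_i$ gives birth rate $\kappa_i$ and death rate $\lambda n_i$ in each coordinate, so the marginals have the desired $M/M/\infty$ law; moreover, the only move that could violate $n_1 \leq n_2$ is a queue-2-only departure when $n_1 = n_2$, whose rate $\lambda(n_2 - n_1)$ is exactly zero, so $H$ is invariant.

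For case (2), where $\kappa_1 = \kappa_2 = \kappa$ and $\lambda_1 \geq \lambda_2$, the analogous construction uses joint arrivals $(n_1, n_2) \to (n_1 + 1, n_2 + 1)$ at rate $\kappa$, joint departures $(n_1, n_2) \to (n_1 - 1, n_2 - 1)$ at rate $\lambda_2 n_1$, queue-1-only departures $(n_1, n_2) \to (n_1 - 1, n_2)$ at rate $(\lambda_1 - \lambda_2) n_1$, and queue-2-only departures $(n_1, n_2) \to (n_1, n_2 - 1)$ at rate $\lambda_2 (n_2 - n_1)$. A routine tally shows the queue-1 birth and death rates are $\kappa$ and $\lambda_1 n_1$, and the queue-2 birth and death rates are $\kappa$ and $\lambda_2 n_2$, as required. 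Again, the only potential violation of $n_1 \leq n_2$ is a queue-2-only departure from a diagonal state, whose rate vanishes.

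Since the joint chain starts in $H$ and can never leave it, we obtain $\nagent{t}{1} \leq \nagent{t}{2}$ almost surely for all $t \geq 0$, yielding the claimed stochastic dominance. The only step requiring any care is verifying the marginal rates and the vanishing of the "bad" transitions on the boundary $\{n_1 = n_2\}$; both are immediate from the definitions above, which is why the lemma admits the one-line justification indicated in the text.
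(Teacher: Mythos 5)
Your coupling is correct: in each case the transition rates affecting coordinate $i$ sum to birth rate $\kappa_i$ and death rate $\lambda_i n_i$ and depend only on $n_i$, so the marginals are the stated $M/M/\infty$ queues, while the only transition that could break $n_1 \leq n_2$ (a queue-2-only departure from the diagonal) has rate $\lambda(n_2-n_1)=0$ there, so the order is preserved pathwise and dominance follows. The paper omits this proof as straightforward, and your construction is exactly the standard monotone coupling it intends, in the same spirit as the explicit coupling built in the proof of Lemma~\ref{lem:coupling}.
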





In the proof of Theorem~\ref{thm:existence}, we frequently compare the
$\markov(\xi, \kappa)$ process for two (or more) different values of
$(\xi, \kappa)$ to show the monotonicity of various
quantities. Our next result justifies these stochastic
comparisons. Before we state the lemma, we make the following
definition of stochastic dominance. Let
$\lstate{t}{i} \sim \markov(\xi^i, \kappa^i)$ with
$\lstate{0}{i} = (z_i,n_i)$ for $i=1,2$. We say the process
$\lstate{t}{1}$ is stochastically dominated by the process
$\lstate{t}{2}$ if
\begin{align*}
  \prob(\zstate{t}{1} = z , \nagent{t}{1} \geq n) \leq   \prob(\zstate{t}{2} = z , \nagent{t}{2} \geq n), \quad \text{for all $(z,n) \in \sets$.}
\end{align*}
In that case, we denote as $\lstate{t}{1} \sdleq \lstate{t}{2}$. Note
that this also implies that $\nagent{t}{1}$ is stochastically
dominated by $\nagent{t}{2}$ under the usual sense of stochastic
dominance.



\begin{lemma}
  \label{lem:coupling}
  Let $\xi \in \markovian$ and $\kappa >0$. Let
  $(Z_t,N_t) \sim \markov(\xi, \kappa)$.
  \begin{enumerate}
  \item Let $\kappa_0 \geq \kappa$, and let $\xi_0 \in \markovian$ be
    such that $\xi_0(z,n) \geq \xi(z,n)$ for all $(z,n) \in
    \sets$. Then we have $(Z_t,N_t) \sdleq \lstate{t}{0}$ for all
    $t \geq 0$, where $\lstate{t}{0} \sim \markov(\xi_0, \kappa_0)$
    with $\zstate{0}{0} = Z_0$ and $\nagent{0}{0} \geq N_0$.
  \item Let $X^i_t \sim \mminfty{\lambda_i}{\kappa}$ for
    $i = 1, 2$ be two independent processes with
    $X_0^1 = X_0^2 = N_0$, where $\lambda_1 = \lambda$
    and $\lambda_2 = (1-\gamma)\lambda$. Then, we have for all
    $t\geq 0$,
    $(Z_t,X_t^1) \sdleq (Z_t,N_t) \sdleq (Z_t,X_t^2)$.
  \end{enumerate}
\end{lemma}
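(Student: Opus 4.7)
The plan is to prove both parts by constructing explicit pathwise couplings that preserve the desired pointwise ordering of the agent-count processes almost surely, and then to extract stochastic dominance directly from that ordering.

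For Part 1, I will drive both chains by a single realization of the resource process, so that $Z_t = \zstate{t}{0}$ for all $t \geq 0$; this is consistent because the resource dynamics in \eqref{eq:loc-dyn} are autonomous and share the same generator $\{\mu_{z,y}\}$. I will then couple the agent counts so as to maintain $N_t \leq \nagent{t}{0}$. Arrivals common to both chains are driven by a single Poisson process of rate $\kappa$, while chain $0$ receives additional arrivals from an independent Poisson process of rate $\kappa_0 - \kappa$, raising its aggregate arrival rate to $\kappa_0$. For departures, when $N_t < \nagent{t}{0}$ the two chains can leak customers independently at their native rates, since a single decrement leaves the gap nonnegative. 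When $N_t = \nagent{t}{0} = n$, I will use a synchronized departure at the smaller rate $\lambda n(1 - \gamma \xi_0(Z_t, n))$ in both chains, plus a chain-$1$-only departure at the residual rate $\lambda n \gamma(\xi_0(Z_t,n) - \xi(Z_t, n))$, which is nonnegative because $\xi_0 \geq \xi$ pointwise. A direct check confirms that no event-type can violate $N_t \leq \nagent{t}{0}$ and that the aggregated rates reproduce the generators \eqref{eq:loc-dyn} for both chains. Since $Z_t = \zstate{t}{0}$, we have $\{Z_t = z, N_t \geq n\} \subseteq \{\zstate{t}{0} = z, \nagent{t}{0} \geq n\}$ for every $(z,n) \in \sets$, which gives the desired stochastic dominance.

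For Part 2, I will carry out analogous constructions for the two inequalities separately, again sharing the resource process $Z_t$ and driving all three chains' arrivals from a single Poisson process of rate $\kappa$. To obtain $X^1_t \leq N_t$, note that the service rate $\lambda n$ of $X^1$ dominates the departure rate $\lambda n(1 - \gamma \xi(Z_t, n))$ of $N$; thus, when $X^1_t = N_t$ I will synchronize every $N$-departure with an $X^1$-departure at the common rate $\lambda n(1 - \gamma \xi(Z_t, n))$ and allocate the excess $\lambda n \gamma \xi(Z_t, n) \geq 0$ to $X^1$-only departures. Symmetrically, for $N_t \leq X^2_t$, when the two counts coincide at $n$, the departure rate $\lambda n(1 - \gamma \xi(Z_t, n))$ of $N$ exceeds the rate $\lambda(1-\gamma)n$ of $X^2$ by the nonnegative residual $\lambda n \gamma(1 - \xi(Z_t, n))$; this residual drives $N$-only departures while synchronized departures occur at rate $\lambda(1-\gamma)n$.

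The main piece of bookkeeping, and the only subtlety, is verifying that the residuals used to drive chain-alone transitions when the paired counts coincide are nonnegative, and that the resulting splittings reproduce the correct marginal generators for each chain. The first reduces to the pointwise bounds $0 \leq \xi \leq \xi_0 \leq 1$ in Part 1 and $0 \leq \gamma \xi \leq \gamma \leq 1$ in Part 2; the second is elementary rate arithmetic. Once these are in hand, standard Poisson superposition and thinning arguments give an honest joint construction on a single probability space, and the pointwise orderings built into the construction yield both sets of stochastic dominance inequalities at every $t \geq 0$.
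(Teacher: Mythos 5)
Your proposal is correct, and at its core it is the same monotone-coupling argument as the paper's, differing mainly in packaging. The paper proves Part 1 via a uniformized embedded jump chain run at the dominating chain's total rate $\Delta_k$, with splitting probabilities $\zeta_k$ and $\eta_k$ and an induction showing $n_k \leq v_k$; your continuous-time construction via Poisson superposition and thinning --- common arrivals at rate $\kappa$, extra arrivals at rate $\kappa_0-\kappa$ for the dominating chain, and, when the counts coincide at $n$, synchronized departures at the slower rate $\lambda n(1-\gamma\xi_0(Z_t,n))$ with the nonnegative residual $\lambda n\gamma\bigl(\xi_0(Z_t,n)-\xi(Z_t,n)\bigr)$ assigned to the dominated chain alone --- realizes exactly the same coupling. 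You correctly isolated the one genuinely delicate point: because $\xi \leq \xi_0$, it is the \emph{dominated} chain that departs faster when the counts are equal, so the excess departures must go to it alone without disturbing the dominating chain; this is precisely what the paper's $\eta_k$ accomplishes by letting the smaller chain jump down on the dominating chain's ``stay'' (self-loop) events. The only structural difference is Part 2: the paper gets it from Part 1 in one line by plugging in the extremal strategies $\xi_1 \equiv 0$ and $\xi_2 \equiv 1$, under which $\markov(\xi_i,\kappa)$ has departure rates $\lambda n$ and $(1-\gamma)\lambda n$ respectively, i.e., exactly the two $\mminfty{\lambda}{\kappa}$ and $\mminfty{(1-\gamma)\lambda}{\kappa}$ comparisons; your two direct couplings are equally valid but duplicate work that Part 1 already supplies, and recognizing the reduction would shorten your write-up. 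One small point worth making explicit at the end: since $Z_t = \zstate{t}{0}$ pathwise in your construction, the inclusion $\{Z_t = z,\, N_t \geq n\} \subseteq \{\zstate{t}{0} = z,\, \nagent{t}{0} \geq n\}$ delivers the joint-form dominance $\sdleq$ as defined in the appendix (not merely dominance of the counts), which is what the lemma asserts.
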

\begin{proof} First note that the second statement in the lemma is
  implied by the first. In particular, let $\xi_1(z,n) = 0$ and
  $\xi_2(z,n) = 1$ for all $(z,n) \in \sets$. Then, using the first
  statement in the lemma, we obtain
  $\lstate{t}{1} \sdleq (Z_t,N_t) \sdleq \lstate{t}{2}$, where
  $\lstate{t}{i} \sim \markov(\xi_i,\kappa)$ with
  $\lstate{0}{i} = (Z_0,N_0)$. The second statement then follows
  directly by the fact that under $\xi_i$, the process $\lstate{t}{i}$
  has the same distribution as $(Z_t, X_t^i)$ for $i=1,2$.

  To prove the first statement in the lemma, we use a coupling
  argument. We construct two chains as follows. Let $t_0= 0$ and
  $(Z_0,N_0) = (z_0,n_0)$ and $\lstate{0}{0} = (u_0,v_0)$, with
  $u_0= z_0$ and $v_0 \geq n_0$. For $k=1,2 \ldots,$ define the
  following recursively:

  \begin{enumerate}
  \item Let $\tau_k \sim \mathsf{Exp}(\Delta_k)$ where
    $\Delta_k \defeq \sum_{y \neq u_{k-1}} \mu_{u_{k-1},y} + \kappa_0 +
    \lambda v_{k-1}$. Let $t_k = t_{k-1} + \tau_k$.

  \item Let $\lstate{t}{0} = (u_{k-1}, v_{k-1})$ and
    $(Z_t, N_t) = (z_{k-1},n_{k-1})$ for $t \in [t_{k-1}, t_k)$.

  \item For $t = t_k$, let $\lstate{t}{0} = (u_k,v_k)$, where
    \begin{align*}
      (u_k, v_k) = \begin{cases} (y,v_{k-1}) \\
      \qquad \text{with probability $\mu_{u_{k-1},y}/\Delta_k$, for each $y \in \setZ$ with $y \neq u_{k-1}$;}\\
        (u_{k-1},v_{k-1} + 1) \\
        \qquad \text{with probability $\kappa_0/\Delta_k$;}\\
        (u_{k-1},v_{k-1}  - 1) \\
        \qquad  \text{with probability $\lambda v_{k-1} (1 - \gamma \xi_0(u_{k-1}, v_{k-1}))/\Delta_k$;}\\
        (u_{k-1}, v_{k-1}) \\
        \qquad \text{with probability $\lambda v_{k-1} \gamma \xi_0(u_{k-1}, v_{k-1})/\Delta_k$.}
      \end{cases} 
    \end{align*}

  \item Define
    $\zeta_k \defeq \frac{ n_{k-1} (1 - \gamma \xi(z_{k-1},
      n_{k-1}))}{ v_{k-1} (1 - \gamma \xi_0(u_{k-1}, v_{k-1}))}$ and
    $\eta_k \defeq \left(\frac{1 - \gamma \xi_0(u_{k-1},v_{k-1})}{\gamma
        \xi_0(u_{k-1},v_{k-1})}\right) \max(\zeta_k - 1,0)$.  It is
    straightforward to verify that $\eta_k \in [0,1]$.

  \item Let $(Z_t,N_t) = (z_k,n_k)$ for $t=t_k$, where
    \begin{align*}
      (z_k,n_k) = \begin{cases}
        (u_k, n_{k-1}) & \text{if $u_k \neq u_{k-1}$;}\\
        \begin{cases} (z_{k-1}, n_{k-1}+1) \\
        \quad \text{with probability $\frac{\kappa}{\kappa_0}$;}\\
          (z_{k-1}, n_{k-1}) \\
          \quad \text{with probability
            $1 -\frac{\kappa}{\kappa_0}$,}
        \end{cases}
        & \text{if $(u_k,v_k) = (u_{k-1}, v_{k-1} + 1)$;}\\
        \begin{cases}
          (z_{k-1},n_{k-1}-1)  \\ 
          \quad \text{with probability} \\
          \quad \text{$\min(\zeta_k,1)$;}\\
          (z_{k-1},n_{k-1})  \\
          \quad \text{with probability}\\
          \quad \text{$\max(1 - \zeta_k,0)$,}
        \end{cases}
        & \text{if $(u_k,v_k)  = (u_{k-1}, v_{k-1}-1)$;}\\
        \begin{cases}
          (z_{k-1}, n_{k-1}-1) \\
          \quad \text{with probability $\eta_k$;}\\
          (z_{k-1}, n_{k-1})  \\
          \quad \text{with probability $1- \eta_k$,}
        \end{cases}
        & \text{if $(u_k,v_k)  = (u_{k-1}, v_{k-1})$.}
      \end{cases}
    \end{align*}

  \end{enumerate}

  It is straightforward to verify that under this construction, we
  have $\lstate{t}{0} \sim \markov(\xi_0, \kappa_0)$ and
  $(Z_t,N_t) \sim \markov(\xi, \kappa)$ with $\zstate{0}{0} = Z_0 $
  and $N_0 \leq \nagent{0}{0} $. Furthermore, by construction, we have
  $z_k = u_k$ for all $k$, and hence $\zstate{t}{0} = Z_t$ for all
  $t \geq 0$.

  To show that $N_t \leq \nagent{t}{0}$ for all $t \geq 0$, we perform
  induction on $k$ in the above construction. Note that
  $n_0 \leq v_0$. Suppose for some $k$, we have
  $n_{k-1} \leq v_{k-1}$. Then, from the definition, we obtain that
  $n_k \leq v_k$ for all the cases, except possibly when
  $(u_k,v_k) = (u_{k-1},v_{k-1}-1)$ and
  $(z_k,n_k) = (z_{k-1},n_{k-1})$. Under this case, if
  $n_{k-1} < v_{k-1}$, then again we have $n_k \leq v_k$. On the other
  hand, if $n_{k-1} = v_{k-1}$, then together with the fact that
  $z_{k-1} = u_{k-1}$, we obtain
  $\xi(z_{k-1},n_{k-1}) \leq \xi_0(u_{k-1}, v_{k-1})$, implying that
  $\zeta_k \geq 1$. However, note that if $\zeta_k \geq 1$, then the
  event where $(u_k,v_k)=(u_{k-1},v_{k-1}-1)$ and
  $(z_k,n_k) = (z_{k-1},n_{k-1})$ occurs with zero probability. Thus,
  we obtain that under all cases, $n_k \leq v_k$. This completes the
  induction step and hence the proof.
\end{proof}

\bibliographystyle{authordate1}
\bibliography{mfe_exploration_arxiv}

\end{document}